\numberwithin{table}{section}
\numberwithin{figure}{section}
\numberwithin{equation}{section}
\newcommand*\patchAmsMathEnvironmentForLineno[1]{%
\expandafter\let\csname old#1\expandafter\endcsname\csname #1\endcsname
\expandafter\let\csname oldend#1\expandafter\endcsname\csname end#1\endcsname
\renewenvironment{#1}%
{\linenomath\csname old#1\endcsname}%
{\csname oldend#1\endcsname\endlinenomath}}%
\newcommand*\patchBothAmsMathEnvironmentsForLineno[1]{%
\patchAmsMathEnvironmentForLineno{#1}%
\patchAmsMathEnvironmentForLineno{#1*}}%
\newtheorem{remark}{Remark}
\newtheorem{property}{Properties}
\newtheorem{example}{Example}
\newtheorem{theorem}{Theorem}
\newtheorem{lemma}{Lemma}
\newtheorem{definition}{Definition}
\def\layersep{3.5cm}
\newcommand{\ppath}{\mathcal{P}}
\begin{document}


\title{Optimal Asset Allocation  \\ For Outperforming A Stochastic Benchmark Target}

\author{Chendi Ni}
\affil{Cheriton School of Computer Science, University of Waterloo, Waterloo, N2L 3G1, Canada,
\texttt{chendi.ni@uwaterloo.ca}
}
\author{Yuying Li}
\affil{Cheriton School of Computer Science, University of Waterloo, Waterloo, N2L 3G1, Canada,
\texttt{yuying@uwaterloo.ca}
}
\author{Peter Forsyth}
\affil{Cheriton School of Computer Science, University of Waterloo, Waterloo, N2L 3G1, Canada,
\texttt{paforsyt@uwaterloo.ca}
}
\author{Ray Carroll}
\affil{Neuberger Berman Breton Hill, Toronto, M4W 1A8, Canada,
\texttt{Ray.Carroll@nb.com}
}
\date{\today}

\maketitle

\begin{abstract}
We propose a data-driven Neural Network (NN) optimization framework to determine the optimal multi-period dynamic asset allocation strategy for outperforming a general stochastic target.
We formulate the problem as an optimal stochastic control with an asymmetric, distribution shaping, objective function.
The proposed framework is illustrated with the
asset allocation problem in the accumulation phase of a defined contribution pension plan,
with the goal of achieving a higher terminal wealth than a stochastic benchmark.
We demonstrate that the data-driven approach is capable of learning an adaptive asset allocation strategy directly from historical market returns, without assuming any parametric model of the financial market dynamics. Following the optimal adaptive strategy, investors can make allocation decisions simply depending on the current state of the portfolio. The optimal adaptive strategy outperforms the benchmark constant proportion strategy, achieving a higher terminal wealth with a 90\% probability, a 46\% higher median terminal wealth, and a significantly more right-skewed terminal wealth distribution. We further demonstrate the robustness of the optimal adaptive strategy by testing
the performance of the strategy on bootstrap resampled market data, which
has different distributions compared to the training data.
\end{abstract}
\section{Introduction}
We propose a data-driven framework to compute optimal multi-period dynamic strategies
for outperforming a general stochastic benchmark target,
which is an important portfolio management problem with immediate practical applications.
There is a large extant literature on techniques for constructing
portfolios which outperform a stochastic benchmark, e.g., \citep{browne_1999_a,browne_2000,tepla_2001,basak_2006,davis_2008,
Lim_2010,Oderda_2015,Alekseev_2016,samo_2016,alaradi_2018}.

Typically, outperforming a multi-period investment benchmark is formulated as an optimal
stochastic control problem under an assumed model for trading asset price dynamics, e.g., \citep{Oderda_2015,alaradi_2018}.
In \citet{Oderda_2015}, under the assumption
that stocks follow a geometric Brownian motion and no investing constraints (i.e. infinite
leverage, trading continues if insolvent, and shorting is allowed), the authors find that a portfolio which outperforms (under certain
criteria) the benchmark market capitalization index can be constructed by a combination
of (i) the benchmark portfolio and (ii) rule-based portfolios such as equal weight and
minimum variance portfolios. The determination of the optimal weights for these portfolios
is independent of estimates of the expected returns of individual stocks. Hence this outperformance
portfolio is robust to uncertainty in the expected return parameters. A natural conjecture is that
determining asset allocation strategies that outperform a benchmark may be robust in general.

In \citet{alaradi_2018}, optimal
stochastic control techniques are also used in this context. Based on several assumptions,
\citet{alaradi_2018} formulate the control problem as a Hamilton-Jacobi-Bellman (HJB)
Partial Differential Equation (PDE), and are able to obtain a closed-form solution.
However, of necessity, this approach requires (i) the assumption of a parametric
model for the Stochastic Differential Equations (SDEs)
governing the asset price processes and (ii) no constraints on the
portfolio (i.e. infinite leverage is allowed). It is possible, in some cases, to solve the HJB PDE
numerically, and thus include more realistic constraints.

There are two main challenges in the aforementioned methods for the stochastic optimal control outperforming benchmark problem.
Firstly, unless the benchmark is specifically restricted, it can add additional stochastic state variables
in the optimal control problem \citep{alaradi_2018}. This makes solving the PDE formulated control problem numerically challenging,
due to the curse of dimensionality. Hence, this technique
is limited to a small number of stochastic factors (i.e. less than four).
Secondly, a parametric model of the
asset returns needs to be postulated, which adds challenges as the parameters can be difficult to estimate accurately.

To overcome the aforementioned challenges, in this work we use market asset return data directly to
solve a scenario-based stochastic optimal control formulation, corresponding to the original stochastic control problem.
This avoids the need to make model assumptions and parameter estimations.
In addition, we solve the stochastic optimal control problem directly,
without invoking dynamic programming to transform it into a PDE problem
(thus avoiding the curse of dimensionality).
The optimal control is represented as a neural network (NN) which is learned through training.
The features for the NN can include any state variable that influences the optimal strategy,
including the state variables associated with a stochastic target. We design a specific objective function to create a desirable terminal wealth distribution. This is done by measuring the relative performance of the strategy against an elevated final wealth of the stochastic target strategy to penalize extreme losses and limit unlikely extreme gains.

We formulate a general optimal control problem for the multi-period asset allocation portfolio which outperforms
a benchmark as an optimal stochastic control problem.
We propose a benchmark target-based
objective function which measures the difference between the terminal wealth of the
adaptive strategy and a path-dependent elevated target (which is the terminal wealth of the constant proportion strategy multiplied by a pre-defined growth factor). The objective function is
designed as a double-sided penalty function to force the terminal wealth of the adaptive strategy to be close to the elevated target.
The NN model takes three features as inputs:
the current wealth of the adaptive portfolio, the current wealth of the constant proportion portfolio, and the time remaining. In the case that the underlying assets follow simple stochastic processes, it can be shown that
the control is only a function of these variables.

Instead of formulating the problem as an HJB equation derived from dynamic programming, we
solve the single original optimal control problem directly as in \citet{li2019data}.
We define an objective function
in terms of the terminal wealth, and then solve for the control directly, using
a data-driven approach. The proposed data-driven approach does not require an estimation of the parameters of
an assumed parametric model for traded assets. We represent the control using a shallow neural
network (NN).
We remark that shallow learning is found to outperform deep learning
for asset pricing in \citet{Guetal:2018}.
We also note that good results are obtained in \citet{Hejazi2016} with an NN containing
only one hidden layer (shallow learning), in which the shallow neural network learns a good choice of distance function for efficiently and accurately interpolating the Greeks for the input portfolio of Variable Annuity contracts.

To illustrate the proposed framework, we consider a practically relevant and important problem: optimal multi-period asset allocation during the
accumulation phase of a DC pension plan.
A defined contribution (DC) plan is a retirement plan in which the employer,
employee, or both make contributions regularly with no guarantee on the accumulated
amount in the plan at the retirement date. In contrast, another type of retirement plan is the defined benefit (DB) plan, which promises to pay a set income when the employee retires.
There has been a paradigm shift from DB plans to DC plans in the United States,
Canada, the United Kingdom, and Australia, as both the public and private sectors are no longer willing to take on the risks of DB plans.

In a DC plan, the employee (investor) is often presented with a list of eligible stock and
bond funds, and then needs to specify how the DC account is to be allocated to each fund.
Typically the employee
has the opportunity to change the asset allocation at least yearly.
Normally, the DC plan is tax-advantaged,
so that there are no tax consequences triggered on rebalancing. A typical DC plan accumulation
phase would occur over 30 years, assuming a 30-year-lifetime employment
period.
The choice of the asset allocation strategy is crucial to the terminal wealth in the DC fund.

A popular asset allocation strategy for retirement plans is the constant proportion strategy,
in which the employee invests fixed proportions of the wealth into several assets.
This idea can be traced back to \citet{graham:2003}.
Among the constant proportion strategies, a very popular one is the 50/50 strategy,
in which 50\% of the wealth is allocated to stocks and 50\% of the wealth is allocated to bonds.
It is conventional wisdom that a 50/50 portfolio is an appropriate tradeoff between risk and reward
for those saving for retirement.
Although there has been a popular shift to a 60/40 portfolio (60\% in stocks) in recent years,
for illustration, we will focus on the 50/50 portfolio in this article. This would be a typical average allocation
to equities over the full accumulation phase of a lifecycle fund.\footnote{A lifecycle fund
is based on the intuitive concept of allocating a high equity weight during the early employment
years, and then moving to bonds as retirement nears. However, as shown in \citet{graf:2017},
this strategy does not outperform a constant weight strategy.}
Note that, in \citet{ForsythVetzal2018a}, it
is shown that the final wealth distributions of a constant weight allocation, and any glide path strategy having the
same average allocation as the constant weight strategy, are essentially the same. Hence there
is little to be gained by using a (deterministic) glide path compared to a constant weight strategy. Using the proposed framework to determine the optimal multi-period dynamic asset allocation strategy for outperforming a stochastic target, we address a natural and interesting question of whether it is possible to develop a dynamic allocation strategy
that outperforms the constant proportion strategy.

It is common practice in the financial industry to train and test strategy performance by splitting the historical market data path into two segments - one for training and the other for testing. We take a different approach. We aim to determine an investment strategy that would perform well statistically on a large set of data paths created through bootstrap resampling, rather than on a single historical data path. To achieve this, we generate additional data paths from the historical market data path by block bootstrap resampling of the historical data (see, e.g., \citet{politis1994stationary, politis2004automatic, patton2009correction}). Once we have a large set of price paths from bootstrap resampling, we split them into training data set and testing data set.

To demonstrate the robustness of our approach, we test the optimal adaptive strategy on market data with different distributions
from the training data. We first test the optimal adaptive strategy, learned from bootstrap resampled data
with a given expected blocksize, on bootstrap resampled data with different expected blocksizes
(thus different distributions, as noted by \citet{politis1994stationary}).
We then test the adaptive strategy learned from synthetic data generated from a parametric
jump-diffusion stochastic process (estimated from the same single historic path) on bootstrap resampled data.
Finally, we test the strategy learned on bootstrap resampling data from a segment of the historical market data
path on bootstrap resampling data generated from another non-overlapping segment of the historical data path.

To the best of our knowledge, the closest work related to the research in this paper is \citet{samo_2016},
in which the authors also use a data-driven machine learning approach for constructing
a dynamic strategy which outperforms a benchmark.
\citet{samo_2016} approximate the control by a Gaussian process and solve the optimal
hyperparameters using Bayesian inference. However, they do not assess the distributional properties of
the investment strategy, but rather evaluate the performance on a single historical path.
In addition, they only validate the performance of the strategy for a relatively short
period from 1992-2014. In contrast to our focus in this work, they consider the case of daily rebalancing
with a large number of stocks which would not be typical of a defined contribution pension plan.

In this paper, we consider investment portfolios which are combinations of a stock index and a bond index.
This type of portfolio would be typical of
a defined contribution (DC) pension plan. Our objective is to achieve a good balance of risk and return.
Specifically, we aim to shape the cumulative
distribution function of the terminal wealth so that it has desirable
risk-return characteristics. We use yearly rebalancing so that trading
costs are minimal. Our data is based on historical monthly index returns dating back to 1926, which provides a comprehensive coverage of historical financial market cycles.
To augment the data for training and testing purposes, we use
the stationary block bootstrap technique \citep{politis1994stationary}.

Furthermore, our approach differs from
\citep{samo_2016} in the learning methodology, both with respect to learning algorithms and data utilization.
Our approach can be applied to a
general multi-period asset allocation problem with few assumptions.
In addition, it can readily be scaled up to high dimensional problems (i.e. more assets and
features).
A shallow network is sufficient here, leading to a relatively small number of parameters and
computationally efficient training. In contrast to \citet{samo_2016}, we use a small number
of feature variables that only depend on the state of the adaptive portfolio and
the benchmark portfolio, rather than market-related signals. As a result,
the trading strategy is easy to interpret, practical to implement and the model is less prone to overfitting.
Furthermore, our computational results demonstrate that the optimal adaptive strategy has a higher expected terminal wealth as well as a more favorable terminal wealth distribution than the constant proportion benchmark strategy.

In summary, we make the following contributions in this research:
\begin{itemize}

\item We propose a data-driven solution to a general optimal dynamic asset allocation for outperforming a stochastic benchmark, which is formulated as a stochastic control problem. The data-driven learning bypasses the need for a robust estimation of parameters of an assumed parametric model.
In addition, closed-form solutions are only available assuming simple parametric price processes and no
portfolio constraints (i.e. infinite leverage is allowed) (see, for
example, \citet{Oderda_2015,alaradi_2018}). Existing solution techniques, which require dynamic programming, are computationally infeasible due to the high dimensionality.
In this work, we formulate the controls as the outputs of a neural network
function and avoid the curse of dimensionality of a PDE approach
\footnote{Since we consider benchmark and optimal portfolios as having two assets each,
this would result in a four-dimensional PDE problem, assuming discrete rebalancing.}.
We use a gradient-based optimization method to solve for the controls.
This approach naturally extends the method in \citep{li2019data} to the problem of outperforming a stochastic benchmark. Our philosophy
here is similar to that in \citet{samo_2016}, although our method of implementation and scope are
significantly different.

\item Unlike the commonly used one-sided quadratic shortfall objective function, we propose an asymmetric distribution shaping objective function
for the optimal asset allocation problem. The proposed objective function aims to produce an optimal dynamic and adaptive strategy which can yield significantly higher median terminal wealth than the stochastic benchmark, with only a small probability (and magnitude) of underperformance.

\item Recognizing financial data scarcity, we use block bootstrap resampling to generate both training data and testing data. We observe that the block bootstrap resampling data sets generated using different expected blocksizes lead to performance testing against different distributions.
We mathematically establish upper bounds on the probability of a training path being equivalent to a testing path to justify the soundness of the proposed stationary block bootstrap method even when the same expected blocksizes are used to generate training and testing data sets.

\item We apply the proposed data-driven framework to the allocation of the DC pension plan. In this context, the constant proportion strategy is a popular asset allocation strategy because of its simplicity in execution and its capability of diversifying market risks
effectively. However, constant proportion strategies are not able to adapt to different market
scenarios because of the predefined fixed allocations. It is a popular active research problem
within the financial industry to devise schemes that consistently outperform the constant
proportion strategy.

\item Our work has significant empirical importance and implications. The optimal adaptive asset
allocation strategy learned from the data-driven framework has a more favorable terminal wealth distribution than the constant
proportion strategy with a higher expected terminal wealth and significantly less downside risk. In addition, the optimal adaptive strategy has consistently higher expected wealth compared to the constant proportion strategy over the entire investment period. Finally, the optimal adaptive strategy is robust in the sense that it performs well on bootstrapped market data with different distributions.
\end{itemize}

\section{Formulation for Outperforming a Stochastic Benchmark}

Let the initial time $t_0=0$ and consider a set $\mathcal{T}$ of rebalancing times
\begin{equation}
\mathcal{T}\equiv \{ t_0=0 < t_1<\ldots< t_N =T\}.
\end{equation}
The fraction of total wealth allocated to each asset is adjusted  at times $t_n$, $n=0,\ldots, N-1$, with the investment
horizon $t_N = T$.
Consider an investment problem in $M$ risky and riskless assets.

Assume that, at time $t$, a fund holds wealth of amount $W_m(t)$ in asset $m$,  $m=1,\ldots,M$.
The total value of the portfolio at $t$ is then
\begin{eqnarray}
   W(t) = \sum_{m=1}^{M} W_m(t)~. \label{W_def}
\end{eqnarray}

For any given time $t$ and arbitrary function $f(t)$, define $f(t^+)=\displaystyle\lim_{\epsilon\rightarrow0^+}f(t+\epsilon)$, and $f(t^-)=\displaystyle\lim_{\epsilon\rightarrow0^+}f(t-\epsilon)$. Assume that $W(t_0^-)=0$, i.e., the initial value of the portfolio before any cash injection is zero, and let $q(t_n)$ represent an {\em a priori} specified cash injection schedule.

We denote the allocation at  stage $n$ by an
allocation vector $p_n,\; n=0,\ldots,N-1$. Given the allocation control vectors  $p_0, p_1, \ldots, p_{N-1}$,
the statistical properties of the terminal wealth of the adaptive portfolio $W(T)$  can be
determined.  Similarly, given a benchmark allocation vector ${\Tilde{p}_n}$,
the final wealth of the benchmark portfolio $W_b(T)$ can also be determined.
The time evolution of $W(t)$ and $W_b(t)$ is given by
\begin{align*}
for\;&n=0, 1, ..., N-1\\
&W(t_n^+)=W(t_n^-)+q(t_n)\\
&W_{b}(t_n^+)=W_{b}(t_n^-)+q(t_n)\\
&W(t_{n+1}^-)=p_n^TR(t_n)W(t_n^+)\\
&W_{b}(t_{n+1}^-)={\Tilde{p}_n}^TR(t_n)W_{b}(t_n^+)\\
end,&
\end{align*}
where $R(t_n)$  is the vector of returns on assets in
$(t_n^-,t_n^+)$.

Our first goal is to minimize  some measure of underperformance against the benchmark.
A natural choice  is
to quadratically penalize the underperformance of the terminal wealth of the adaptive strategy compared to a benchmark of the terminal wealth of the constant proportion strategy,
as in  \citet{li2019data}.  Note, however, that in our
case, the benchmark is stochastic. This leads to the following optimization problem ($\mathop{\mathbb{E}}[\cdot]$ is the expectation operator):
\begin{align}\label{trivial_obj}
\min_{p_0,p_1,\ldots, p_{N-1}}\mathop{\mathbb{E}}\Big[\min\big(W(T)-W_{b}(T),0\big)^2\Big]\;.
\end{align}

Unfortunately the optimal solution to \eqref{trivial_obj} is trivially the benchmark
strategy $ p_n = \Tilde{p}_n, \forall n$, which indicates the formulation \eqref{trivial_obj}
does not sufficiently capture properties of the desired solution.

We propose to generate a more ambitious
strategy  by using an  elevated target $e^{sT}\cdot W_{b}(T)$ in the objective function, i.e.,
\begin{align}
\min_{p_0,p_1,\ldots, p_{N-1}}\mathop{\mathbb{E}}\Big[\min\big(W(T)-e^{sT}\cdot W_{b}(T),0\big)^2\Big],
\label{obj_elevated}
\end{align}
where $s$ is the yearly pre-determined target outperformance spread.
Consequently, in an ideal case, the adaptive strategy will have a
terminal wealth of $e^{sT}\cdot W_{b}(T)$ which indicates that the adaptive strategy
achieves an annual outperformance spread of return $s$ compared to the benchmark strategy.

We note, however, that if the outperformance spread $s$ is large,  \eqref{obj_elevated} will tend to generate a strategy that concentrates on the asset with the highest rate of returns, which can potentially lead to an unacceptable probability
of underperformance.  We can see that outperforming a target is a complex
distribution shaping problem with multiple criteria which is difficult to formulate and to compute.
Recognizing that
low probability underperformance scenarios  often come with low probability high outperformance scenarios,
we choose an asymmetric objective function which controls the loss-side tail by penalizing the underperformance
quadratically, while at the same time penalizing the outperformance
linearly.

Our asymmetric distribution shaping benchmark outperforming formulation becomes
\begin{align}\label{final_obj}
\min_{p_0,p_1,..., p_{N-1}}\mathop{\mathbb{E}}
                   \Big[
                     \min
                         \big(
                           W(T)-e^{sT} \cdot W_{b}(T), 0
                          \big)^2
                            +
                           \max\big(
                               W(T)-e^{sT}\cdot W_{b}(T),0
                          \big)
                   \Big]  ~.
\end{align}
Figure \ref{fig:obj_function} illustrates this asymmetric distribution shaping objective function.

\begin{figure}[H]
\centering
\includegraphics[width=.48\textwidth]{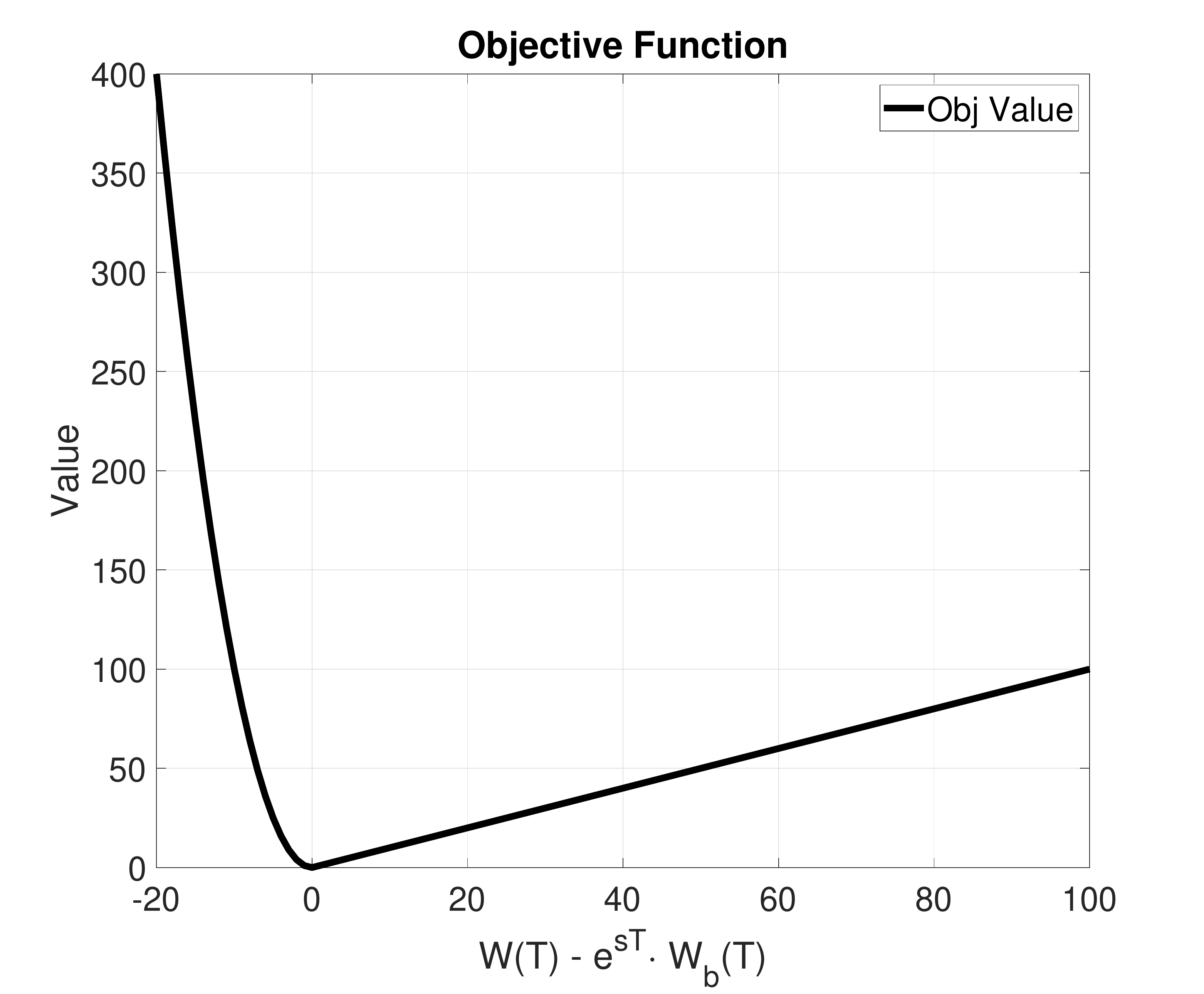}
\caption{Asymmetric distribution shaping objective function with elevated target $e^{sT}\cdot W_{b}(T)$.}
\label{fig:obj_function}
\end{figure}

We remark that distribution shaping objectives can be problem dependent and we choose
the objective function (\ref{final_obj})  for the pension investment problem. Furthermore,
our proposed framework  does not depend on any specific form of the objective function.

If we postulate parametric stochastic processes for prices of the traded assets, mathematically,
the controls $p_0,...,p_{N-1}$ can be determined using dynamic programming. This
will result in a nonlinear HJB PDE (see \citep{alaradi_2018} for example). In the
absence of any closed-form solution, computing a solution of this problem numerically would
be costly, particularly when the problem has a high dimension. Consider the simplest allocation problem,  for which
the portfolio consists of a stock index and a bond index.
In the case of  discrete rebalancing, the state variables would
be the dollar amounts in the bond and stock indices, for both the adaptive and
target portfolios \citep{dang-forsyth:2014a}. Consequently, even for this
comparatively simple case, this would result in a four-dimensional HJB  PDE.

Assume that samples of asset returns are available.
These samples can come directly from market observations or
from simulations of postulated parametric models.
Instead of solving $p_0,\ldots,p_{N-1}$ using dynamic programming,
we propose a data driven approach as follows.  We represent the optimal control as a function of several features $F(t)$,  i.e., at $t_n$, $n=0,1,\ldots,N-1$,
$$p_n=p(F(t_n))$$

\begin{example}[Two Asset Problem with Benchmark $W_{50/50}$] \label{example_1}
In our numerical examples, we will focus on portfolios consisting of two assets: a stock index and
a bond index.  The benchmark portfolio in this case will be a constant proportion strategy,
with $50\%$ stocks and $50\%$ bonds.  We will denote the wealth of the benchmark strategy in this case
as $W_{50/50}(t)$.
For this example, for the stochastic target pension allocation problem, we use three features for $F(t)$: (i) $W(t_n)$, the wealth of
the adaptive portfolio at $t_n$, (ii) $W_{50/50}(t_n)$, the wealth of
the constant proportion portfolio at $t_n$, (iii) T-t, time remaining in the investment period.
In the case that simple stochastic processes are assumed,
then it can be shown (in the absence of transaction
costs) that the controls are only a function of these features \citep{dang-forsyth:2014a} .
\end{example}

We remark that our feature set $F(t)$ for Example \ref{example_1}
is different from the features  in  \citet{samo_2016} which  explicitly use security prices.
Instead,   at time $t$ our feature set consists of the
accumulated wealth at $t$ from allocation strategy and benchmark strategy,
which depend on the returns  of traded assets from prior periods.
Traded asset  prices are not directly used  as features  for  the neural network model.
This is essentially because, at each rebalancing time, we search for the optimal
adaptive strategy amongst all strategies with the current level of wealth.
In addition,  since we evaluate the performance of a trading strategy based on the terminal wealth $W(T)$
only, the trading decision at time $t$ depends on the current accumulated wealth and return distribution of future trading periods.
Unless the asset price has predictability in its future return, including the prices as features
is redundant  in this context and will likely lead to overfitting of the model.

We use a 2-layer neural network as the functional form for the
optimal control. As a result, the goal of the optimization problem is to find the optimal
parameters of the neural network.

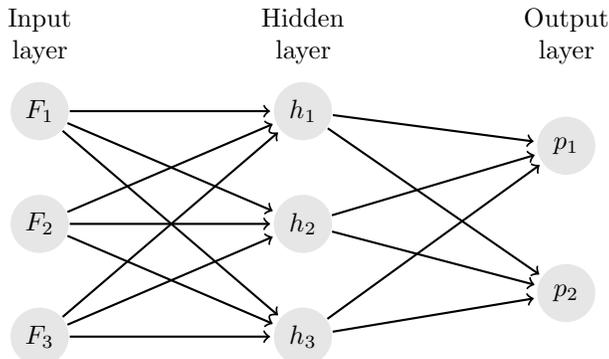
\begin{figure}[H]
\centering
\begin{tikzpicture}[shorten >=1pt,->,draw=black!100, node distance=\layersep, thick]
\tikzstyle{every pin edge}=[<-,shorten <=1pt, thick]
\tikzstyle{neuron}=[circle,fill=black!25,minimum size=22pt,inner sep=0pt]
\tikzstyle{input neuron}=[neuron, fill=black!10];
\tikzstyle{output neuron}=[neuron, fill=black!10];
\tikzstyle{hidden neuron}=[neuron, fill=black!10];
\tikzstyle{annot} = [text width=4em, text centered]

\foreach \name / \y in {1,...,3}
\node[input neuron] (I-\name) at (0,-1.5*\y) {$F_{\y}$};

\foreach \name / \y in {1,...,3}
\path[yshift=0cm] node[hidden neuron] (H-\name) at (\layersep,-1.5*\y cm) {$h_{\y}$};

\foreach \name / \y in {1,...,2}
\path[yshift=0cm] node[output neuron] (O-\name) at (2*\layersep,-1.96*\y cm) {$p_{\y}$};

\foreach \source in {1,...,3}
\foreach \dest in {1,...,3}
\path (I-\source) edge (H-\dest);

\foreach \source in {1,...,3}
\foreach \dest in {1,...,2}
\path (H-\source) edge (O-\dest);

\node[annot,above of=H-1, node distance=1cm] (hl) {Hidden layer};
\node[annot,left of=hl] {Input layer};
\node[annot,right of=hl] {Output layer};
\end{tikzpicture}
\caption{A 2-Layer NN representing the control functions}
\label{fig:NN}
\end{figure}

Assume that $h\in R^l$ is the output of the hidden layer. Let the matrix $z\in R^{dl}$ be the weights from the
input features $F(t_n)\in R^d$ to the hidden nodes $h$. We use the sigmoid activation function,
$$\sigma(u)=\frac{1}{1+e^u}\;,$$
and have
$$h_j(F(t_n))=\sigma(F_i(t_n)z_{ij})\;.$$
Here we use double summation convention, i.e.
$$F_i(t_n)z_{ij}\equiv\sum_{i=1}^{d}F_i(t_n)z_{ij},\; j = 1, ..., l\;.$$
At the output layer, we use the logistic sigmoid function as the activation function. Let the matrix $x\in R^{lM}$ be the weights for output layer. For the $m^{th}$ asset, the asset allocation on this asset is given by:
$$\bigl({p}(F(t_n))\bigr)_m=\frac{e^{x_{km}h_k(F(t_n))}}{\sum_ie^{x_{ki}h_k(F(t_n))}},\; 1\leq m\leq M\;.$$
Note that with the logistic sigmoid activation function, the following constraint is automatically satisfied
$$0\leq p(F(t_n))\leq1,\;1^T{p}(F(t_n))=1\;.$$
This enforces the constraints of no-shorting and no leverage. In addition, insolvency cannot occur.

The dynamics of the terminal wealth of the adaptive portfolio then becomes
\begin{align}\label{one_path}
for\;&n=0, 1, ..., N-1 \nonumber \\
&W(t_n^+)=W(t_n^-)+q(t_n) \nonumber\\
&W(t_{n+1}^-)=p(F(t_n))^TR(t_n)W(t_n^+) \nonumber \\
end&\; ~.
\end{align}

We approximate the expectation in equation (\ref{final_obj}) by a finite number of
wealth samples of $W(T)$, computed from return samples of $R(t_n)$ obtained by bootstrapping the historical data.
Let $W^{\ell}(T), W_{b}^{\ell}(T)$ be the final wealth samples for the adaptive
and benchmark strategies, obtained using equation (\ref{one_path}), along the
$\ell^{th}$ return sample path $R(t_n)^\ell,~n=0,1,\ldots, N-1$ .

Denote
\begin{eqnarray}
g( x ) \equiv
\min
\big(
x, 0
\big)^2
+
\max\big(
x,0
\big)
~.
\end{eqnarray}
The expectation in equation (\ref{final_obj}) is approximated by
\begin{eqnarray} \label{obj1}
\mathop{\mathbb{E}}
\Big[
g( W(T) - e^{sT} \cdot W_{b}(T) )
\Big] \simeq \frac{1}{L} \sum_{\ell=1}^{\ell = L} {g} (W^{\ell}(T) - e^{sT} \cdot W^{\ell}_{b}(T) )
\end{eqnarray}
Since the approximate function on the right hand side of \eqref{obj1}
is a nonconvex, continuous but piecewise differentiable function of the NN weights, solving the optimization problem is challenging.

We recognize however that $\mathbb{E}
\Big[
g( W(T) - e^{sT} \cdot W_{b}(T) )\Big]$ is a continuously
differentiable function of the NN weights assuming that the continuous return distribution is continuous.
This motivates us to use the
smoothing
technique from \citet{alexander2006minimizing}.
In equation (\ref{obj1}), we replace $g(x)$ by the smoothed approximation $\bar{g}(x)$, to obtain a continuously differentiable approximation,
\begin{equation}
\bar{g}(x)=\begin{cases}
x, & \text{if $x>\epsilon$}\;,\\
\frac{x^2}{4\epsilon}+\frac{1}{2}x+\frac{1}{4}\epsilon, & \text{if $-\epsilon\leq x\leq\epsilon$}\;,\\
(x+\epsilon)^2,&\text{if $x<-\epsilon$}\;,
\end{cases}
\end{equation}
where $\epsilon$ is a predetermined small number.
Since we are essentially optimizing the parameters $x$ and $z$, we write the final problem as
\begin{equation}\label{approx_obj}
\min_{x,z}~ \frac{1}{L} \sum_{\ell = 1}^{\ell = L} \bar{g} (W^{\ell}(T) - e^{sT} \cdot W_{b}^{\ell}(T) ) \;.
\end{equation}
Similar to \citet{li2019data}, we use the same trust region optimization method \cite{coleman1996} to solve the resulting optimization problem.

More specifically, the optimization method requires the evaluation of the objective function, its derivative with respect to the weight parameters $x$ and $z$, and the Hessian matrix. The gradients can be explicitly evaluated via the chain rule, and the Hessian matrix can be numerically computed via the finite-difference of the gradients. The detailed gradient computation can be found in \citet{li2019data}.

\section{Testing on Different Distributions with Bootstrap Resampling}\label{append:proof}

Success in data-driven learning critically depends on the efficient use of data.
Standard machine learning measures success based on testing the model performance on
unseen data which are assumed to have the same distribution as the training data. In other words,
test results are typically computed based on test samples from the same distributions as training samples.

For training of the optimization problem \eqref{approx_obj},
we only have access to a single path of historical returns. This lack of data presents a unique challenge in data-driven financial model learning.

For financial model learning and testing, it is a common practice to train and test strategy performance by splitting the historical market data path into two segments - one for training and the other for testing. A critical problem in this approach is insufficient data for robust learning and testing. This is especially problematic in the context of pension planning due to the long-term investment horizon.

\citet{li2019data} uses block bootstrap resampling to generate training and testing data in data-driven financial decision learning.
Standard block bootstrap resampling is done by dividing the historical market sequential data into blocks with fixed blocksizes and randomly choosing blocks to construct the bootstrap resampled data series. To reduce the impact of a fixed blocksize and to mitigate the edge effects at each block end, the stationary block bootstrap \citep{patton2009correction,politis2004automatic} can be used. A single bootstrap resampled path is constructed as follows.

\begin{itemize}
\item First, randomly select a block of the historical market data time series. The blocksize is randomly sampled from a shifted geometric distribution with an expected blocksize $\hat{b}$. The optimal choice for $\hat{b}$ is determined using the algorithm described in \citep{patton2009correction}.
\item Repeat the previous step and concatenate the new block after the existing data series until the new resampled path has reached the
desired length.
\item If the selected block exceeds the range of historical data, wrap around the historical data as in the circular bootstrap method \citep{politis2004automatic,patton2009correction}.
\end{itemize}

Algorithm \ref{Algo:bootstrap} presents pseudocode for the stationary block bootstrap.

\begin{algorithm}[htp]
\SetAlgoLined
\tcc{initialization}
bootstrap\_samples = [ ]\;
\tcc{loop until the total number of required samples are reached}
\While{True }{
\tcc{choose random starting index in [1,\ldots,N], N is the index of the last historical sample}
index = UniformRandom( 1, N )\;
\tcc{actual blocksize follows a shifted geometric distribution with expected value of exp\_block\_size}
blocksize = GeometricRandom( $\frac{1}{exp\_block\_size}$ )\;
\For{$i = 0;\ i < blocksize;\ i = i + 1$}{
\tcc{if the chosen block exceeds the range of the historical data array, do a circular bootstrap}
\eIf{index + i $>$ N}{
bootstrap\_samples.append( historical\_data[ index + i - N ] )\;
}{bootstrap\_samples.append( historical\_data[ index + i ] )\;}

\If{bootstrap\_samples.len() == number\_required}{\Return bootstrap\_samples\;} }
}
\caption{Pseudocode for stationary block bootstrap}
\label{Algo:bootstrap}
\end{algorithm}

In \citet{li2019data}, the training dataset is generated using stationary block resampling with one expected blocksize and the testing dataset is generated with a different expected blocksize. As \cite{politis1994stationary} points out, changing the expected blocksizes for block bootstrap resampling essentially changes the distribution of the bootstrap resampled data paths. Consequently, such training and testing assessments actually perform out-of-distribution tests.

Intuitively, using the block bootstrap resampling time-series financial market data seems natural. We have trained a model, considering all permutations of the financial market data with respect to different and random concatenations of time horizons. In addition, testing has been performed on a different distribution of the financial market random horizon concatenations, since the testing data uses a different expected blocksize from that of the training data. Indeed, evaluating testing performance in this fashion seems to uphold a more stringent standard in comparison to the standard machine learning which evaluates testing performance assuming (unseen) testing samples are from the same distribution of the training data.

Still, one may have concerns that when the training data and testing data are block bootstrap resampled from the same underlying historical market data sequence, one path may appear in both training and testing datasets so that the learning algorithm may benefit from such an unfair edge.
To address such concerns, we establish a theoretical bound on the probability of training and testing sample sequences being exactly the same.

\begin{theorem} \label{thm:fix}
Consider generating a sequence of $N$ data points using fixed block resampling from a sequence of $N_{tot}$ distinct observations.
Let path $\ppath_1$ be a bootstrap resampled with a fixed blocksize of $b_1$ and path $\ppath_2$ be a bootstrap resampled with a fixed blocksize of $b_2$. Then the probability of $\ppath_1$ and $\ppath_2$ being identical is $(\frac{1}{N_{tot}})^{lcm(\frac{N}{b_1},\frac{N}{b_2})}$, where $lcm(a,b)$ is the least common multiple of integer $a,b$.
\end{theorem}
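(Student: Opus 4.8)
The plan is to turn the event $\{\ppath_1 = \ppath_2\}$ into a system of modular congruences and then count how many of them are genuinely independent. Because the $N_{tot}$ observations are distinct, the two resampled sequences coincide if and only if at every position $j \in \{0,1,\dots,N-1\}$ they reproduce the \emph{same} original observation. I would parametrise $\ppath_1$ by the starting indices $s_0,\dots,s_{k_1-1}$ of its $k_1 = N/b_1$ blocks and $\ppath_2$ by $r_0,\dots,r_{k_2-1}$ of its $k_2 = N/b_2$ blocks. By Algorithm~\ref{Algo:bootstrap} these $k_1+k_2$ indices are independent and uniform on $\{1,\dots,N_{tot}\}$, which I identify with $\mathbb{Z}/N_{tot}\mathbb{Z}$. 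A position $j$ sitting in the $i$-th block of $\ppath_1$ carries observation $s_i + (j - i b_1)$, and in the $i'$-th block of $\ppath_2$ observation $r_{i'} + (j - i' b_2)$, both read modulo $N_{tot}$ with the circular wrap-around.

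The step that makes the count tractable is that overlapping blocks contribute only one congruence apiece. If the $i$-th block of $\ppath_1$ and the $i'$-th block of $\ppath_2$ share a position $j$, matching there gives $s_i - r_{i'} \equiv i b_1 - i' b_2 \pmod{N_{tot}}$, and the right-hand side is independent of $j$; hence every shared position of a fixed block pair yields one and the same congruence. I would therefore record one congruence per overlapping pair and organise them as the edges of a bipartite \emph{overlap graph} $G$ whose vertices are the blocks of the two paths.

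It then remains to analyse $G$. Two blocks of the same path never overlap, so $G$ is bipartite; a boundary shared by both partitions occurs precisely at the multiples of $\mathrm{lcm}(b_1,b_2)$, and these cut $[0,N)$ into $N/\mathrm{lcm}(b_1,b_2)$ segments across which no edge passes. A left-to-right sweep inside one segment shows its blocks form a connected chain with exactly one more vertex than edges, so each connected component of $G$ is a tree. Consequently the congruences are automatically consistent (no cycle, hence no compatibility obstruction) and independent: revealing the vertices component by component, each of the $E = (k_1+k_2) - N/\mathrm{lcm}(b_1,b_2)$ edges pins down one new starting index with probability $1/N_{tot}$. This yields $\Pr[\ppath_1 = \ppath_2] = (1/N_{tot})^{\,N/b_1 + N/b_2 - N/\mathrm{lcm}(b_1,b_2)}$, where the exponent is exactly the number of distinct block-start positions in $[0,N)$ counted by inclusion--exclusion.

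The main obstacle is this bookkeeping of redundancy. The naive guesses $(1/N_{tot})^{N}$ (one factor per position) and $(1/N_{tot})^{k_1+k_2}$ (one per block start) both overcount, and the correct exponent surfaces only after proving that $G$ is a forest; I would spend most of the effort on that claim, namely connectedness within a segment together with the edge count $V_c - 1$ per component. I would also verify carefully how the exponent above relates to the stated $\mathrm{lcm}(N/b_1,N/b_2)$: using $N/\mathrm{lcm}(b_1,b_2) = \gcd(N/b_1,N/b_2)$, my count equals $k_1 + k_2 - \gcd(k_1,k_2)$, which coincides with $\mathrm{lcm}(k_1,k_2)$ exactly when one of $b_1,b_2$ divides the other (in particular when $b_1=b_2$); the incommensurable case (e.g. $b_1=2,\,b_2=3$) is where I would scrutinise the closed form most closely.
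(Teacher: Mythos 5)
Your argument is correct, and at its core it is the same reduction the paper makes: the paper's Lemma \ref{Lemma:LemmaA1} reduces path identity to agreement at the combined decision index list (the union of the two paths' block-start positions), which is exactly your ``one congruence per overlapping block pair,'' and both proofs then charge one factor of $1/N_{tot}$ per constraint. Your forest argument is in fact more careful than the paper's on the probabilistic step: the paper multiplies the per-index agreement probabilities with no comment on independence, even though those events share the block-start random variables; your component-by-component revealing order (each edge of a tree pins down one fresh uniform variable) is what makes that product legitimate.

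The substantive difference is the final count, and there you are right and the printed theorem is wrong in general. The paper's proof simply asserts $N_{cdi}(\ppath_1,\ppath_2)=lcm(\frac{N}{b_1},\frac{N}{b_2})$. The correct count is yours: writing $k_i=N/b_i$, the two block-start lists intersect exactly at the starts $1+j\cdot lcm(b_1,b_2)$, of which there are $N/lcm(b_1,b_2)=\gcd(k_1,k_2)$, so $N_{cdi}=k_1+k_2-\gcd(k_1,k_2)$. This equals $lcm(k_1,k_2)$ precisely when one of $b_1,b_2$ divides the other: writing $k_1=dm$, $k_2=dn$ with $d=\gcd(k_1,k_2)$ and $\gcd(m,n)=1$, the equality $d(m+n-1)=dmn$ forces $(m-1)(n-1)=0$. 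Concretely, for $N=6$, $b_1=2$, $b_2=3$, the combined decision indices are $\{1,3,4,5\}$, so the identity probability is $(1/N_{tot})^{4}$, whereas the theorem claims $(1/N_{tot})^{lcm(3,2)}=(1/N_{tot})^{6}$. So you should not merely ``scrutinise'' the closed form as your last paragraph suggests---your exponent $N/b_1+N/b_2-N/lcm(b_1,b_2)$ is the correct one and should replace the $lcm$ expression. The paper's numerical illustration (blocksizes $6$ and $24$ months) happens to fall in the nested case $b_1\mid b_2$, where both expressions equal $60$, and since the corrected (larger) probability is still astronomically small, the paper's practical conclusion is unaffected; but as a general statement, Theorem \ref{thm:fix} holds only when one blocksize divides the other.
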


The proof of Theorem \ref{thm:fix} is in Appendix \ref{appendix:proof}.
To put this into perspective, assume a fixed blocksize for the training paths of 6 months, and
a fixed blocksize for the testing path of 24
months (or 2 years). Consider a 30-year investment horizon of monthly return paths randomly generated from historical monthly data over 90 years, i.e. $N=30\times12=360$ and $N_{tot}=90\times12=1080$. Then the probability of a training path being identical to a testing path is $(\frac{1}{1080})^{lcm(\frac{360}{6},\frac{360}{24})}=(\frac{1}{1080})^{60}<10^{-180}.$
Assume that we use a total of 100,000 training paths in the training data and 10,000 testing paths in the testing data.
By the union bound, the probability of the existence of a
pair of identical training and testing paths is bounded by $100,000\times10,000\times 10^{-180}=10^{-171}.$

Next, we consider the stationary block bootstrap case, in which the blocksizes are randomly generated from a
shifted geometric distribution. We are able to establish the following
theorem about the probability of two paths generated with stationary block bootstrap being identical.

\begin{theorem}\label{thm:stb}
Consider generating a sequence of $N$ data points using stationary block resampling from a sequence of $N_{tot}$ distinct observations.
Let $\ppath_1$ and $\ppath_2$ be two paths generated from the stationary block bootstrap resampling from this observation sequence with the expected blocksizes of $\hat{b}_1$ and $\hat{b}_2$ respectively, and both have a length of $N$. The probability of $\ppath_1$ and $\ppath_2$ being identical is

$$\frac{1}{N_{tot}}\Big(\big(1-\frac{1}{\hat{b}_1}\big)\big(1-\frac{1}{\hat{b}_2}\big)+\frac{\frac{1}{\hat{b}_1}+\frac{1}{\hat{b}_1}-\frac{1}{\hat{b}_1\hat{b}_2}}{N_{tot}}\Big)^{N-1}.$$

\end{theorem}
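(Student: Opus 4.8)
The plan is to reduce the stationary block bootstrap to an equivalent step-by-step description and then track the probability of agreement through a simple recursion on the position index. First I would use the memorylessness of the shifted geometric blocksize distribution to recast the construction of each path $\ppath_i$ as follows: position $1$ always opens a new block and is therefore a uniform draw from $\{1,\ldots,N_{tot}\}$; for every subsequent position $n\geq 2$, independently of the past, the path either starts a fresh block (a new uniform index) with probability $p_i=1/\hat{b}_i$, or continues the current block by advancing to the next historical index (with wraparound) with probability $1-p_i$. Because the $N_{tot}$ observations are distinct, two paths are identical if and only if their underlying index sequences coincide, so it suffices to compute the probability that the two index sequences agree.

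Next I would introduce the nested events $E_n=\{\ppath_1 \text{ and } \ppath_2 \text{ agree on positions } 1,\ldots,n\}$ and aim for $P(E_N)$. The base case is immediate: at position $1$ both paths draw independent uniform indices, so $P(E_1)=1/N_{tot}$. The heart of the argument is the conditional one-step transition $P(E_{n+1}\mid E_n)$. Conditioned on $E_n$, both paths sit at a common index $o_n$, and because the two paths are generated independently there are four cases according to whether each path continues or restarts. If both continue, they advance to the same index $o_n+1$ and agree with probability one; in each of the remaining three cases at least one path draws a fresh uniform index, and agreement then requires that fresh draw to hit a fixed target, which happens with probability $1/N_{tot}$. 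Summing over the four cases with weights $(1-p_1)(1-p_2)$, $(1-p_1)p_2$, $p_1(1-p_2)$, and $p_1p_2$ gives
\[
P(E_{n+1}\mid E_n) = (1-p_1)(1-p_2) + \frac{p_1+p_2-p_1p_2}{N_{tot}} .
\]

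The key observation — and the step I expect to carry the weight of the proof — is that this conditional probability depends neither on the current index $o_n$ nor on $n$; this is exactly what the uniformity of the fresh draws and the memoryless restart structure buy us. Consequently the recursion is multiplicative with a constant factor, and chaining the $N-1$ transitions from $E_1$ to $E_N$ yields
\[
P(E_N) = \frac{1}{N_{tot}}\left[(1-p_1)(1-p_2) + \frac{p_1+p_2-p_1p_2}{N_{tot}}\right]^{N-1}.
\]
Substituting $p_1=1/\hat{b}_1$ and $p_2=1/\hat{b}_2$ recovers the stated formula.

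The main thing to be careful about is justifying the per-step restart characterization rigorously from the geometric blocksize distribution, so that the restart decisions are genuinely independent across positions and across the two paths, and confirming that the wraparound does not spoil the deterministic ``advance to $o_n+1$'' behaviour in the both-continue case. Once these points are pinned down, the four-case analysis and the telescoping product are routine, and the independence of the transition factor from $o_n$ is what collapses the computation into a single clean power of $N-1$.
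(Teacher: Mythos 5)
Your proof is correct, but it takes a genuinely different route from the paper's. The paper first characterizes path identity via the \emph{combined decision index list} (Lemma~\ref{Lemma:LemmaA1}: two paths coincide iff they agree at every combined decision index, each agreement costing a factor $1/N_{tot}$), then computes the full law of the number of combined decision indices, $\mathbb{P}(N_{cdi}=k)=\binom{N-1}{k-1}e^{-(\lambda_1+\lambda_2)(N-1)}(e^{\lambda_1+\lambda_2}-1)^{k-1}$ (Lemma~\ref{Lemma:cdi_prob}), which requires a nontrivial double-sum binomial identity, and finally conditions on $N_{cdi}=k$ and sums over $k$ with the binomial theorem. You instead exploit memorylessness of the shifted geometric blocksize to recast the stationary bootstrap as a position-by-position Markov process (restart with probability $p_i=1/\hat{b}_i$, otherwise advance to the next historical index with wraparound), and observe that the conditional agreement probability at each step is the constant $(1-p_1)(1-p_2)+(p_1+p_2-p_1p_2)/N_{tot}$, so the answer telescopes into a single power. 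Your recursion is more elementary, avoids the combinatorial collapse entirely, and makes the exponent $N-1$ transparent; the paper's decomposition, on the other hand, produces intermediate results of independent use (the occurrence probability of a decision index list, the law of $N_{cdi}$) and handles both the fixed-blocksize case (Theorem~\ref{thm:fix}) and the stationary case through the same Lemma~\ref{Lemma:LemmaA1}. The two caveats you flag are exactly the right ones and both go through: the per-step restart description is equivalent to drawing geometric blocksizes and truncating at length $N$ precisely by memorylessness, and distinctness of the $N_{tot}$ observations is what identifies value-agreement with index-agreement, so the both-continue case succeeds with probability one even across the wraparound. Note finally that your symmetric expression $p_1+p_2-p_1p_2$ is the correct one: the theorem as printed contains a typo ($\frac{1}{\hat{b}_1}+\frac{1}{\hat{b}_1}$ should read $\frac{1}{\hat{b}_1}+\frac{1}{\hat{b}_2}$), as confirmed by the paper's own derivation, in which the term $1-e^{-(\lambda_1+\lambda_2)}$ appears.
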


The proof of  Theorem \ref{thm:stb} is  also in Appendix \ref{appendix:proof}.
Consider the following example.   If  the training paths are bootstrap resampled with an
expected blocksize of 6 months (0.5 years) and the testing paths with an expected
blocksize of 24 (2 years), for  $N=30\times12=360$ (30-year horizon) and
$N_{tot}=90\times12=1080$. Then the probability of a training path being identical to a testing path is $8.737\times10^{-39}.$

If training data set consists of a total of 100,000 training paths and  testing data set consists of 10,000 testing paths, by union bound, the probability of existing a pair of training and testing path being identical is bounded by $100,000\times10,000\times 8.737\times10^{-39}<10^{-29}.$

Therefore, even when the training set and testing set are generated from the same data sequence, the probability of observing the same path in the training and testing dataset is near zero. This suggests that using the block bootstrap resampling to generate training and testing data sets is a robust method for enhancing data for the learning framework.

\begin{remark}
Under stationary block bootstrap, a path is likely to have large actual blocksizes even if the expected blocksize is relatively small, which can result in a higher probability of observing two identical paths than under fixed block bootstrap. For example, a path with expected blocksize of $10$ years has a $5\%$ probability of only containing one block of $30$ years, which increases the probability of one path being identical to another path, according to Theorem \ref{Lemma:LemmaA1}.
\end{remark}

\section{Performance Assessment and Comparison}
We evaluate  and report the performance of the proposed data-driven approach for outperforming a stochastic target in the context of a 30 year DC pension plan.
In our numerical tests, we focus on portfolios with only two assets: a stock index and a bond index, as
described in Example \ref{example_1}. The benchmark portfolio is a constant weight strategy, which is rebalanced
to $50\%$ bonds and $50\%$ stocks annually.  We denote the wealth of the benchmark strategy at time $t$  by $W_{50/50}(t)$.

\subsection{Original Data and Its Augmentation}

\subsubsection{Historical Data}
Our main objective here is to consider the core allocation problem
between a risky and a defensive asset.
To that end, we use monthly historical data from the Center for Research in i
Security Prices (CRSP) from January 1, 1926  to December 31, 2015.\footnote{More specifically,
results presented here were calculated based on data from Historical Indexes,
\copyright 2015 Center for Research
in Security Prices (CRSP), The University of Chicago Booth School of Business.
Wharton Research Data Services was used in preparing this article.
This service and the data available thereon constitute valuable
intellectual property and trade secrets of WRDS and/or its third-party suppliers.}. Specifically, we use the CRSP 3-month Treasury bill (T-bill) index
and the CRSP cap-weighted total return index. The latter index includes all distributions
for all domestic stocks trading on major U.S. exchanges. Since both indexes
are in nominal terms, we adjust them for inflation  using the
U.S. CPI index, also supplied by CRSP. We use real indexes since
investors saving for retirement should be
focused on real (not nominal) wealth goals.
Note that in \citep{li2019data}, in the context of a fixed (non-stochastic) target based
objective function, we have also tested the use of the CRSP equal weighted index (for the
risky asset) and the ten year treasury index (for the defensive asset).  The control strategies
are qualitatively similar for either choice of risky and defensive asset.  For simplicity here,
we will focus on the CRSP index and the 3-month T-bill case.

For illustration, we consider here a two-asset allocation  in which the wealth of the portfolio is allocated to the two indexes. We subsequently refer to the two assets simply as the stock and the bond.

For the stock index and bond index,  Table \ref{tb:blocksize} shows the optimal expected blocksize for each index estimated from the historical data. When using the resampling method in the proposed data-driven NN approach, we simultaneously sample the same block across all asset data sets (i.e. the stock index and bond index).  Since  the optimal blocksize varies with the index, it is not clear which  blocksize to use  since we need to simultaneously resample both indices.
Consequently, we will carry out tests with a variety of blocksizes, in the ranges reported in Table \ref{tb:blocksize}.

\begin{table}[htp]
\centering
\begin{tabular}{lc}
\hline
Data Series & \begin{tabular}[c]{@{}c@{}}Optimal expected\\ block size $\hat{b }$ (months)\end{tabular} \\ \hline
Real 3-month T-bill index & 50.1 \\ \hline
Real CRSP cap-weighted index & 1.8 \\ \hline
\end{tabular}
\caption{Optimal expected blocksize $\hat{b}= 1/v$ when the blocksize follows a geometric distribution $Pr(b = k) =
(1 - v)^{k-1}v$. The algorithm in \citet{patton2009correction} is used to determine $\hat{b}$.}
\label{tb:blocksize}
\end{table}

\subsection{Experiment Setting}
The parameters used in training and testing the proposed data-driven approach are as below:
\begin{itemize}
\item $L $: a total of $L=100,000 $ bootstrap paths are used for training;
\item $L_{test} $: a total of $L_{test} =10,000 $ paths are bootstrap resampled from a different expected blocksize than the training data for testing the strategy performance;
\item $W(0) $: initial wealth is $W(0) =0$;
\item $T $: the entire investment period is $T = 30$ years;
\item $N $: the entire period is divided into $N =30$ periods. At the beginning of each period rebalancing occurs, i.e., annual rebalancing;
\item $q $: annual cash injection is $q =10$;
\item $s $: the annual target outperformance rate $s = 1\%$ for calculating the elevated target $e^{sT}W_{50/50}(T)$, where $W_{50/50}(T)$ is the terminal wealth of the constant proportion portfolio;
\item 3 features:
\begin{itemize}
\item $T-t$: time remaining in the investment period,
\item $W(t)$: wealth of the adaptive portfolio at time $t$,
\item $W_{50/50}(t)$: wealth of the constant proportion portfolio at time $t$.
\end{itemize}
\end{itemize}

\subsection{Performance}\label{sec:mkt}
We now evaluate the performance of the optimal adaptive strategy trained on bootstrap resampled data. First, we show the performance of the optimal adaptive strategy trained on the bootstrap resampled data with the expected blocksize $\hat{b}$ = 0.5 years,
and tested on bootstrap resampled data with expected blocksize of $\hat{b}=2$, which is the average optimal blocksize.
When discussing robustness in Section \ref{sec:robustness_blocksize},
we show that the strategy performance using alternative training-testing expected blocksize pairs is qualitatively similar.

\begin{table}[H]
{\scriptsize
\begin{center}
\resizebox{\columnwidth}{!}{
\begin{tabular}{lccccc} \hline
\multicolumn{6}{c}{ Training Results on Bootstrap Data: Expected Blocksize $\hat{b}=0.5$ years} \\ \hline
Strategy & $E(W_T)$ & $std(W_T)$& $median(W_T)$ & $Pr(W_T<median(W_T^{CP}))$ & $Pr(W_T<median(W_T^{NN}))$ \\ \hline
constant proportion($p=0.5$) & 678 & 276 & 624 & 0.50 & 0.84 \\
adaptive & 963 & 474 & 913 & 0.27 & 0.50 \\\hline
\multicolumn{6}{c}{ Testing Results on Bootstrap Data: Expected Blocksize $\hat{b}=2$ years} \\ \hline
Strategy & $E(W_T)$ & $std(W_T)$& $median(W_T)$ & $Pr(W_T<median(W_T^{CP}))$ & $Pr(W_T<median(W_T^{NN}))$ \\ \hline
constant proportion($p=0.5$) & 679 & 267 & 629 & 0.50 & 0.84 \\
adaptive & 962 & 449 & 921 & 0.26 & 0.50 \\\hline
\end{tabular}
}
\end{center}
}
\caption{Terminal wealth statistics of the optimal adaptive strategy, trained on bootstrap resampled data with blocksize $\hat{b}=0.5$ years and tested on bootstrap resampled data with blocksize $\hat{b}=2$ years.}
\label{tb:mkt_terminal}
\end{table}

Table \ref{tb:mkt_terminal} reports performance statistics and the probability of the terminal wealth less than the median of the terminal wealth of both strategies.
From Table \ref{tb:mkt_terminal} , we observe that
\begin{itemize}
\item The median and mean of the optimal adaptive strategy is significantly higher than the constant proportion strategy.
\item The optimal adaptive strategy has only 26\% probability of achieving a lower terminal wealth than the median terminal wealth of the constant proportion strategy ($median(W_T^{CP})$), while the constant proportion strategy has an 84\% probability of achieving a lower terminal wealth than the median terminal wealth of the NN adaptive strategy ($median(W_T^{NN})$).
\end{itemize}

It is also worth noting that the standard deviation of the terminal wealth of the optimal adaptive strategy is higher than the standard deviation of the terminal wealth of the constant proportion strategy. In the context of dynamic trading,
a higher standard deviation does not
imply that the performance of the strategy is poor.
In fact, we can observe from Figure \ref{fig:mkt_test_hist} that the distribution of the terminal wealth of the optimal adaptive strategy is significantly more right-skewed. A higher standard deviation of terminal wealth is desirable in the right-skewed situation \citep{van2019distribution}. This illustrates why standard deviation and Sharpe Ratio are poor measures of risk for inherently non-linear strategies \citep{lhabitant_2000}.
In fact, the optimal adaptive dynamic strategy has properties in common with option-based strategies. We also plot the CDF plot for the optimal adaptive strategy and the constant proportion strategy in Figure \ref{fig:mkt_test_wealth_cdf}.

\begin{figure}[htp]
\centering
\begin{subfigure}[b]{0.34\textwidth}
\centering
\includegraphics[width=\textwidth]{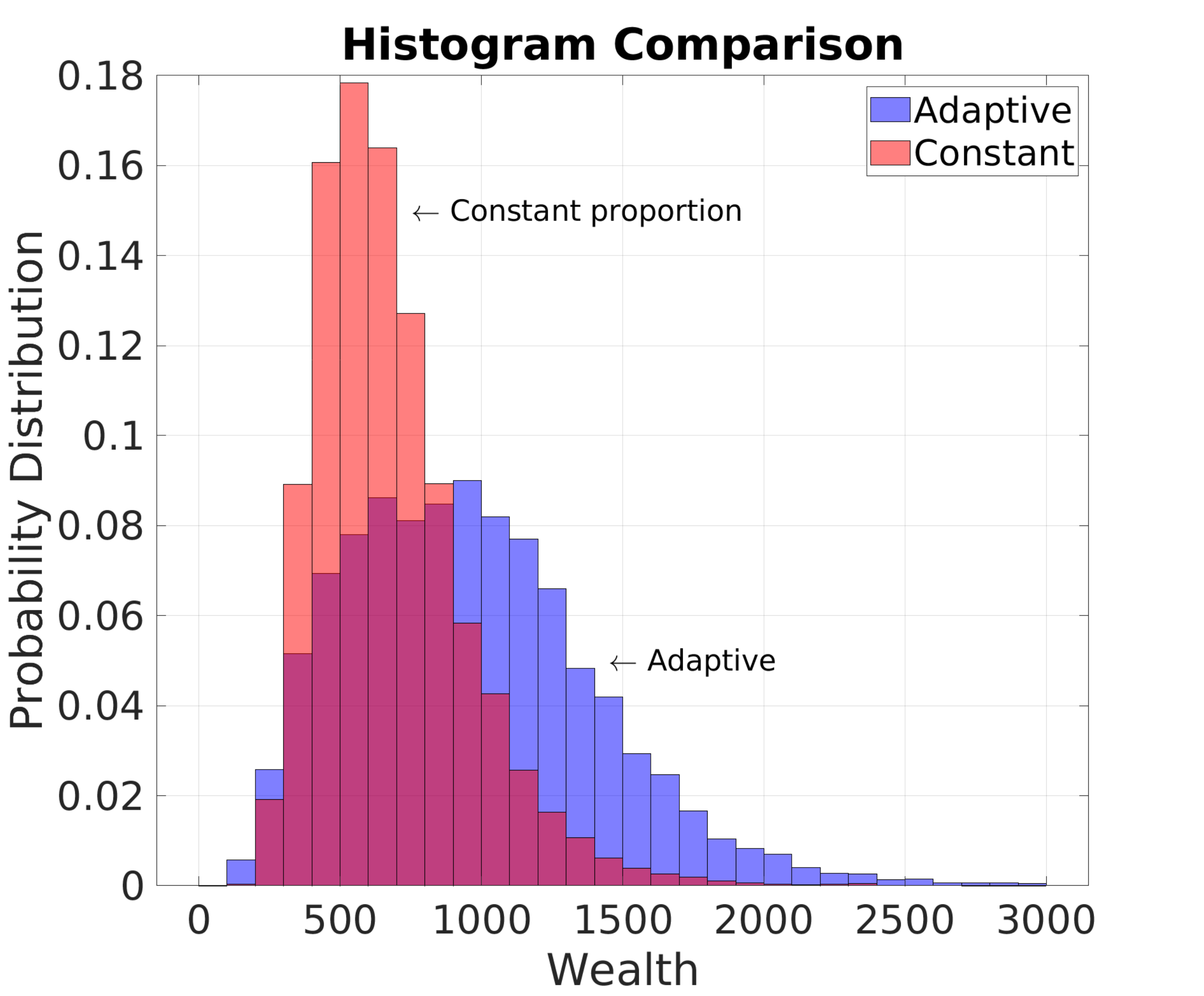}
\caption{Histogram of terminal wealth for adaptive strategy and constant proportion strategy}
\label{fig:mkt_test_hist}
\end{subfigure}
\hfill
\begin{subfigure}[b]{0.30\textwidth}
\centering
\includegraphics[width=\textwidth]{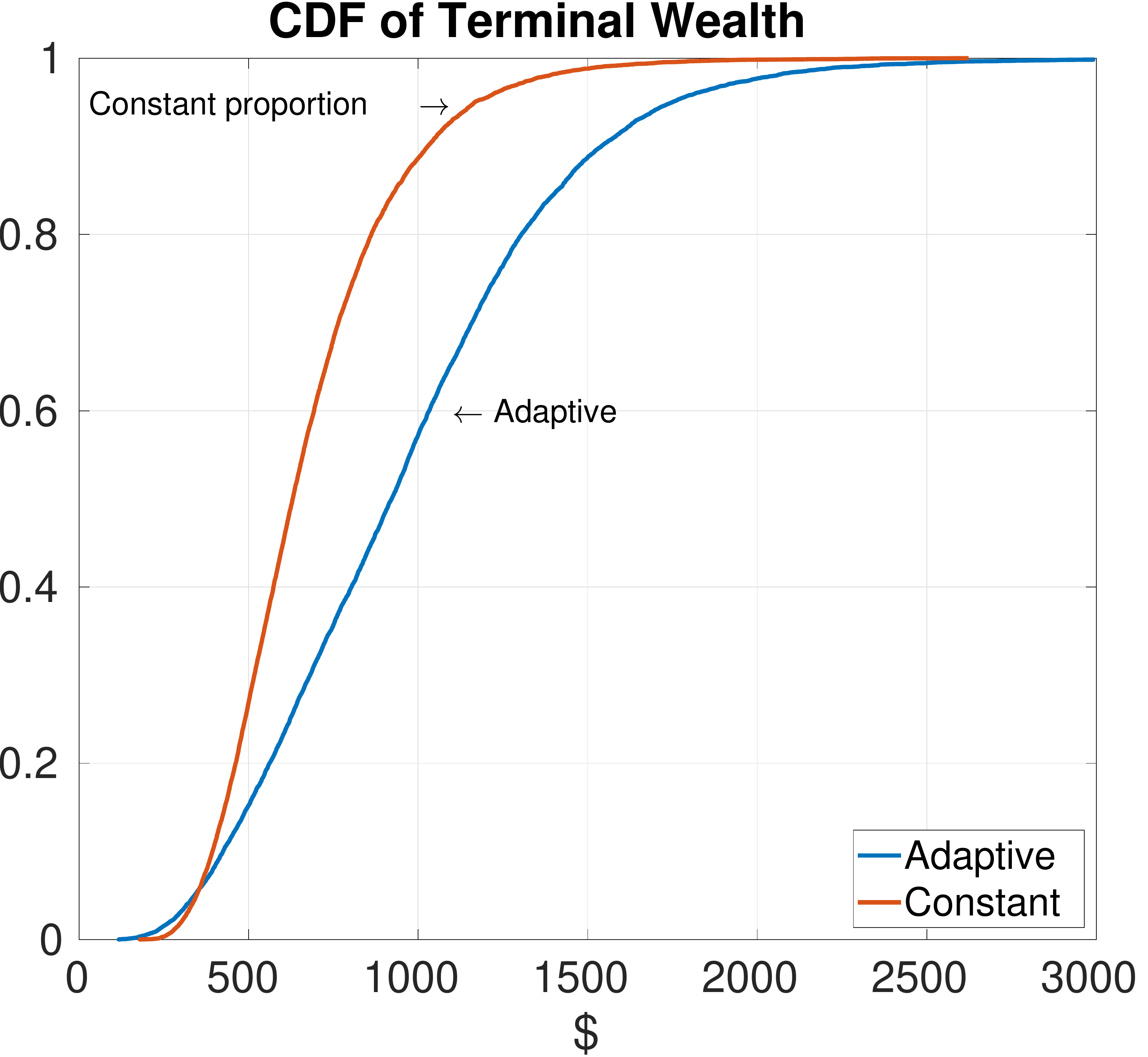}
\caption{CDF of terminal wealth adaptive strategy and constant proportion strategy}
\label{fig:mkt_test_wealth_cdf}
\end{subfigure}
\hfill
\begin{subfigure}[b]{0.31\textwidth}
\centering
\includegraphics[width=\textwidth]{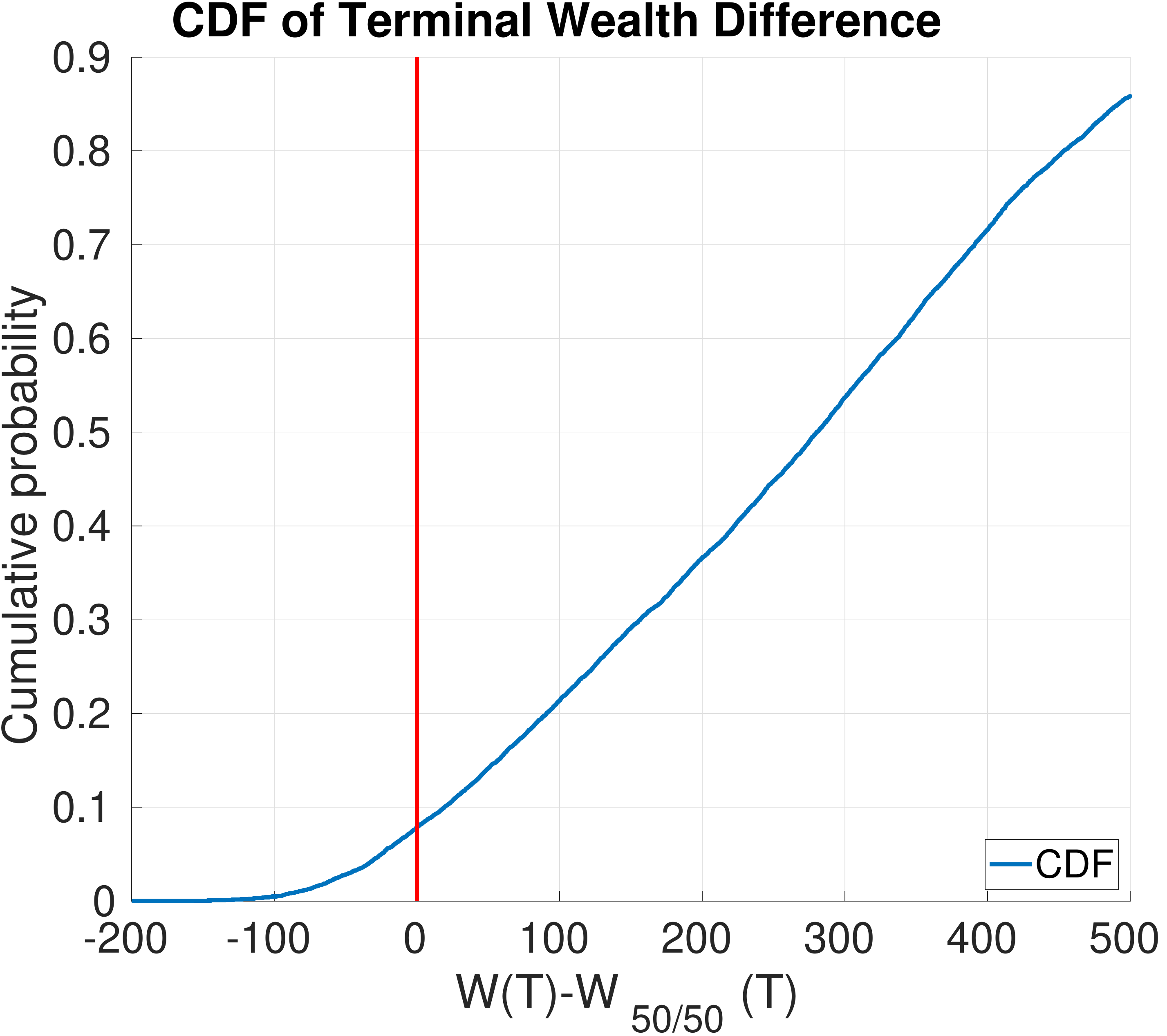}
\caption{CDF of terminal wealth difference between adaptive strategy and constant proportions strategy}
\label{fig:mkt_test_wdiff_cdf}
\end{subfigure}
\caption{Histogram of terminal wealth $W(T)$ (adaptive) and $W_{50/50}(T)$ (constant proportion) and CDF of wealth difference $W(T)-W_{50/50}(T)$ based on the testing data (bootstrap data with $\hat{b}$=2 years) }
\label{fig:mkt_hist}
\end{figure}

We should point out that the terminal wealth distribution of the optimal adaptive strategy
has a slightly worse left tail than the constant proportion strategy.
The 90\% VaR of terminal wealth is 340 for the optimal adaptive strategy and 394 for the constant proportion strategy.\footnote{We measure quantiles
of the terminal wealth, not losses. Hence a larger value of VAR is more desirable, i.e. has less risk.}
These tail events
occur when the bootstrapped path corresponds to
consistently bearish market periods when stocks underperform bonds for a long period of time.
We remark that the investor can
include risk measures such as
VaR and CVaR in the objective function if reducing the tail risk is of a higher priority in the investment plan. This, of course,
will produce a lower probability of outperformance.

Figure \ref{fig:mkt_test_wdiff_cdf} shows the cumulative distribution function (CDF) of the wealth difference $W(T) - W_{50/50}(T)$ to give a more direct comparison between the optimal adaptive strategy and the constant proportion strategy along the same paths. From Figure \ref{fig:mkt_test_wdiff_cdf} we can see that the probability of the optimal adaptive strategy underperforming the constant proportion strategy is less than 10\%. When underperformance occurs, the magnitude of underperformance is small compared to the magnitude of outperformance.

We have analyzed and compared the overall performance based on the terminal wealth adaptive strategy. Next, we provide more detailed comparisons of the various characteristics of the strategies.

\subsubsection{Strategy Performance Over Time}

Since the objective function for the optimal control \eqref{final_obj} is defined from the terminal wealth, we examine how the optimal adaptive strategy performs over the entire period of investment.

\begin{figure}[htp]
\centering
\begin{subfigure}[b]{0.48\textwidth}
\centering
\includegraphics[width=\textwidth]{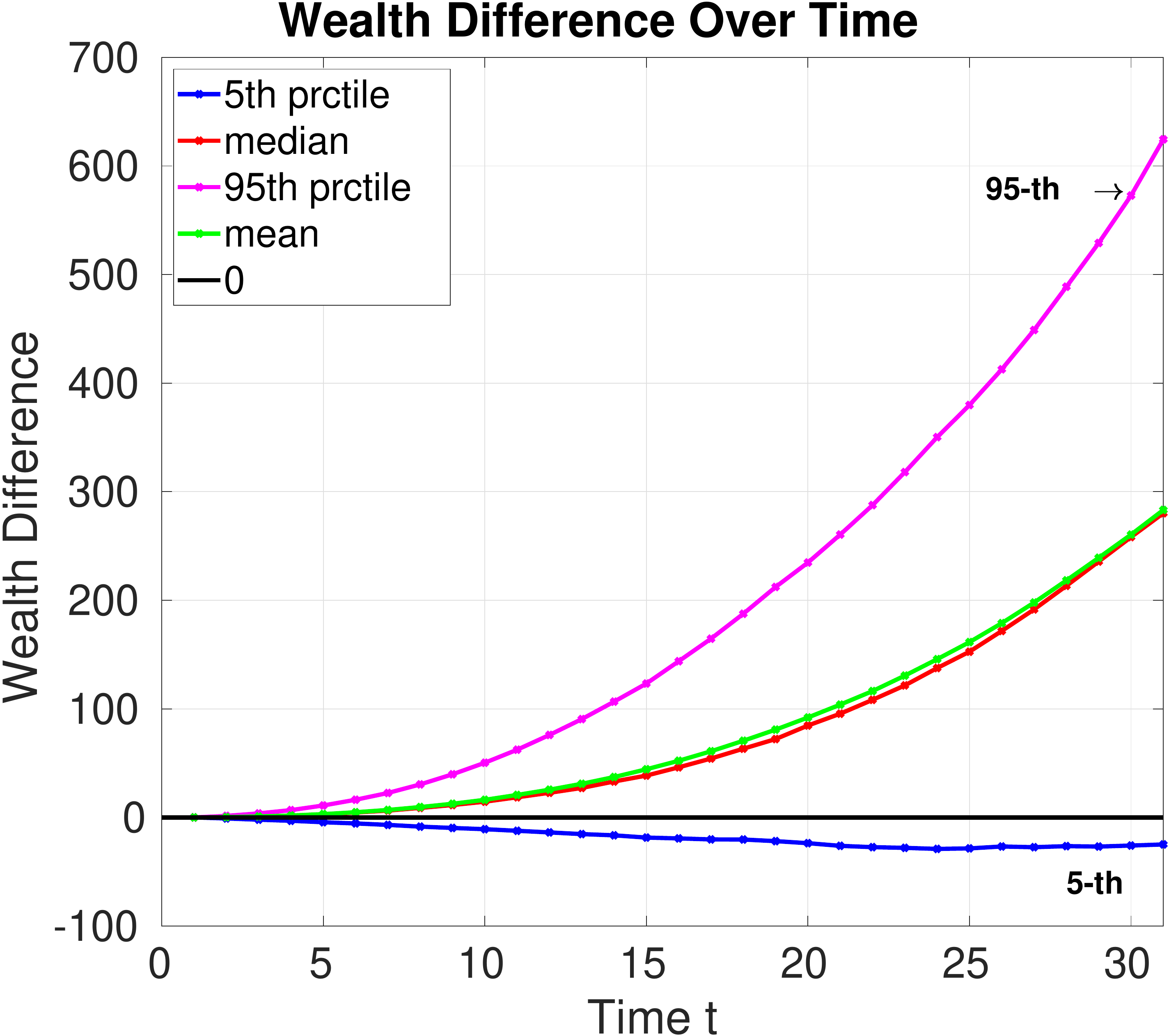}
\caption{Percentiles of wealth difference $W(t)-W_{50/50}(t)$ over time}
\label{fig:mkt_wdiff_ot}
\end{subfigure}
\hfill
\begin{subfigure}[b]{0.47\textwidth}
\centering
\includegraphics[width=\textwidth]{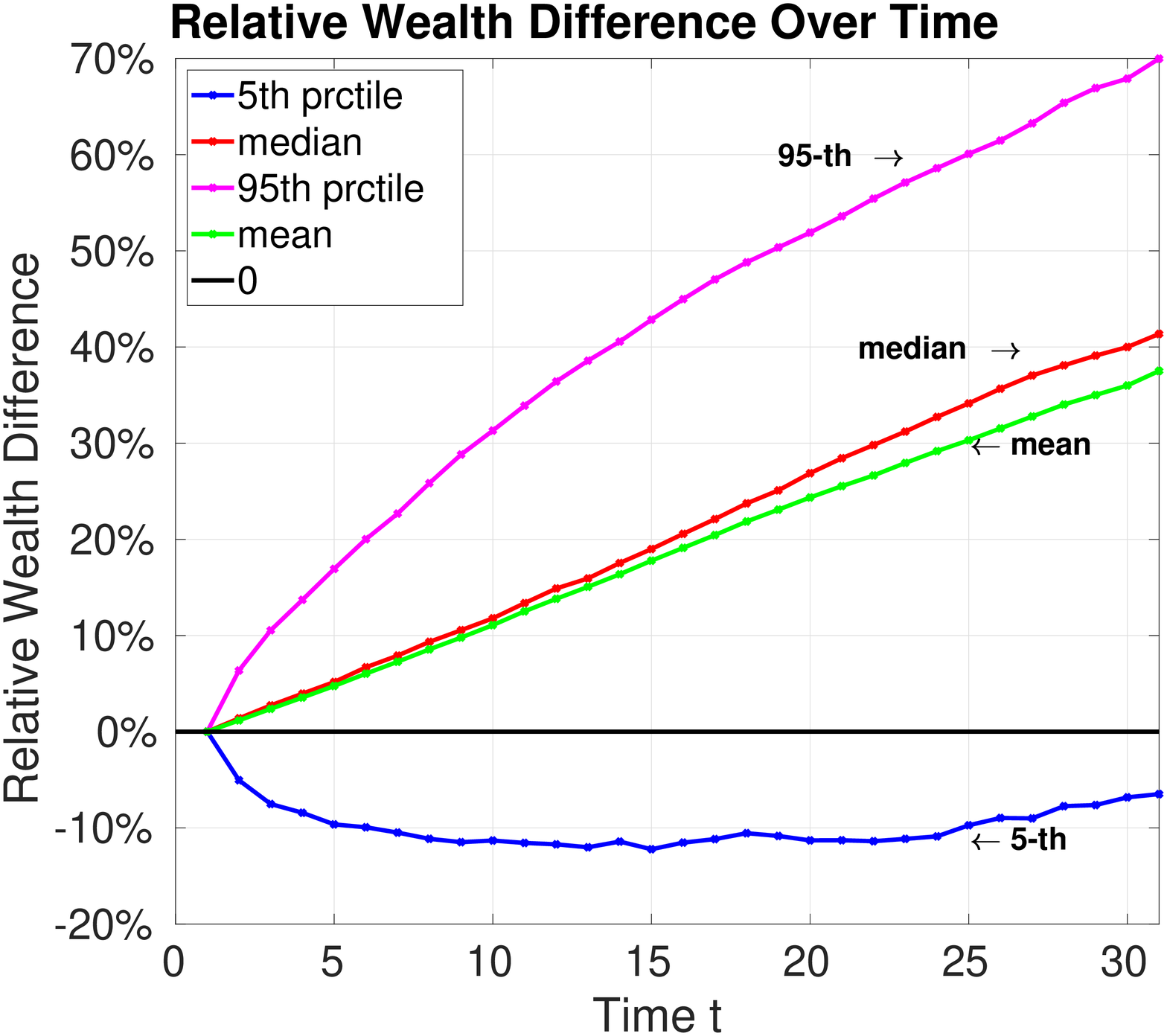}
\caption{Percentiles of relative wealth difference $\frac{W(t)-W_{50/50}(t)}{W_{50/50}(t)}$ over time}
\label{fig:mkt_rel_wdiff_ot}
\end{subfigure}
\caption{Wealth difference and relative wealth difference over time: $W(t)$ denotes the optimal adaptive is wealth and $W_{50/50}(t)$ denotes the benchmark }
\label{fig:w_ot}
\end{figure}

Figure \ref{fig:w_ot}
graphs the average and various percentiles of the wealth difference $W(t)-W_{50/50}(t)$ in the investment time horizon.
From Figure \ref{fig:w_ot}, we observe that
\begin{itemize}
\item With a high probability, the optimal adaptive strategy achieves higher wealth than the constant proportion strategy over time.
\item The outperformance of the optimal adaptive strategy in terms of the relative wealth difference is not as significant as the wealth difference in dollar values.
\end{itemize}

The observations indicate that larger outperformance of the optimal adaptive strategy often occurs when the
constant proportion strategy performs well. Nevertheless, the outperformance of
the optimal adaptive strategy in terms of the relative wealth difference is still very
impressive with a median value of almost 40\% at the terminal stage.
Of course, if we are primarily interested in relative outperformance, it is a simple
matter to alter our objective function to focus on achieving this goal.

Figure \ref{fig:w_ot} shows that, even though the objective function only
targets the wealth difference of the portfolios at the terminal time, without having any direct restrictions on the wealth of the optimal adaptive strategy in the interim period, the adaptive strategy still manages to have a statistically higher wealth throughout the entire investment period.

\subsubsection{Strategy Characteristics}
We further examine the characteristics of the optimal adaptive strategy.
Figure \ref{fig:allocation} shows different percentiles of the stock allocation of the optimal adaptive strategy over time. We observe that
\begin{itemize}
\item In general, the stock allocation (fraction of wealth invested in stocks) decreases when approaching the end of the investment horizon.
\item The stock allocation almost always stays above the benchmark allocation of 50\%.
\end{itemize}

\begin{figure}[htp]
\centering
\begin{subfigure}[b]{0.45\textwidth}
\centering
\includegraphics[width=\textwidth]{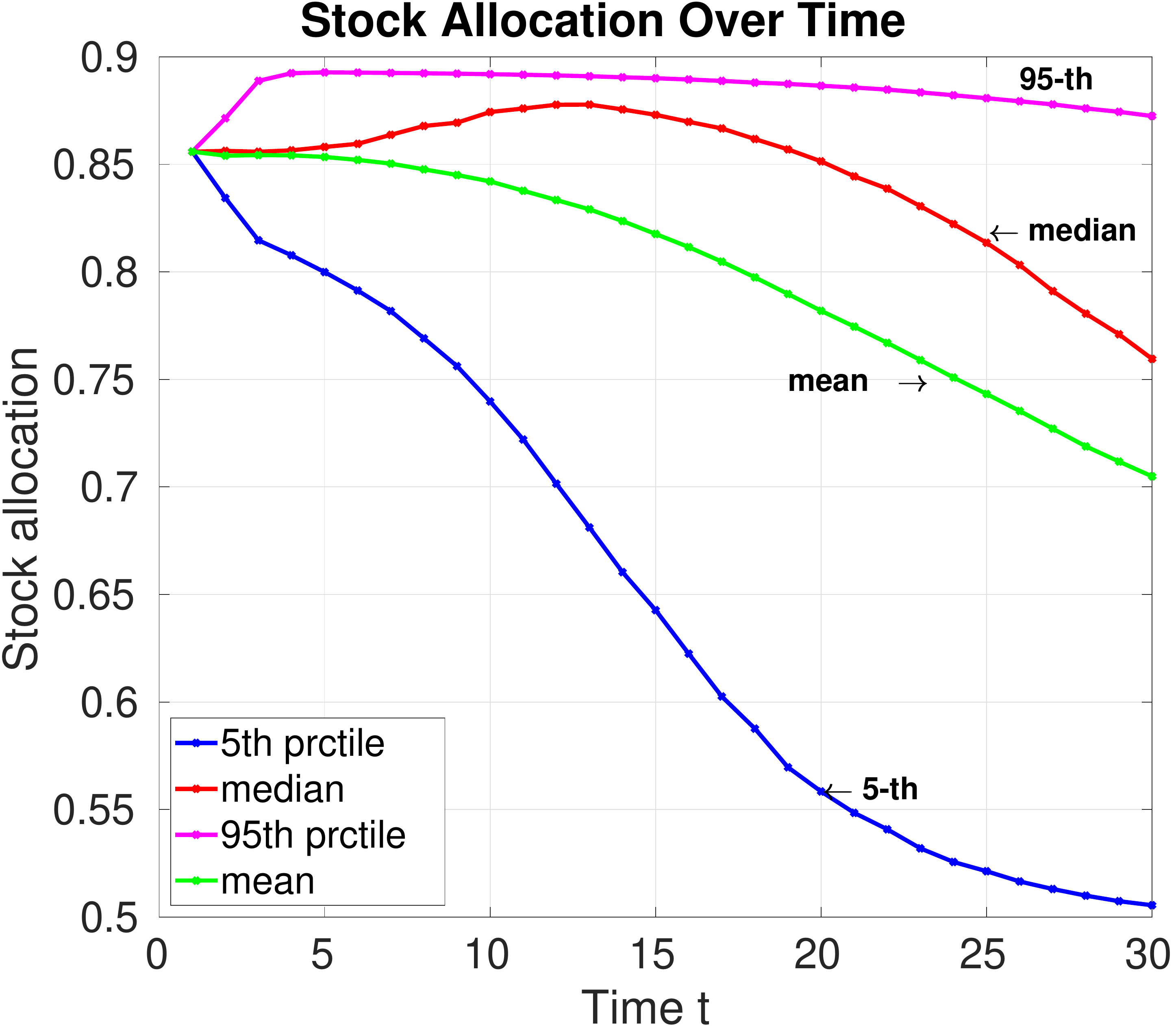}
\caption{Percentiles of the fraction invested in stocks over time for the adaptive strategy}
\label{fig:allocation}
\end{subfigure}
\hfill
\begin{subfigure}[b]{0.47\textwidth}
\centering
\includegraphics[width=\textwidth]{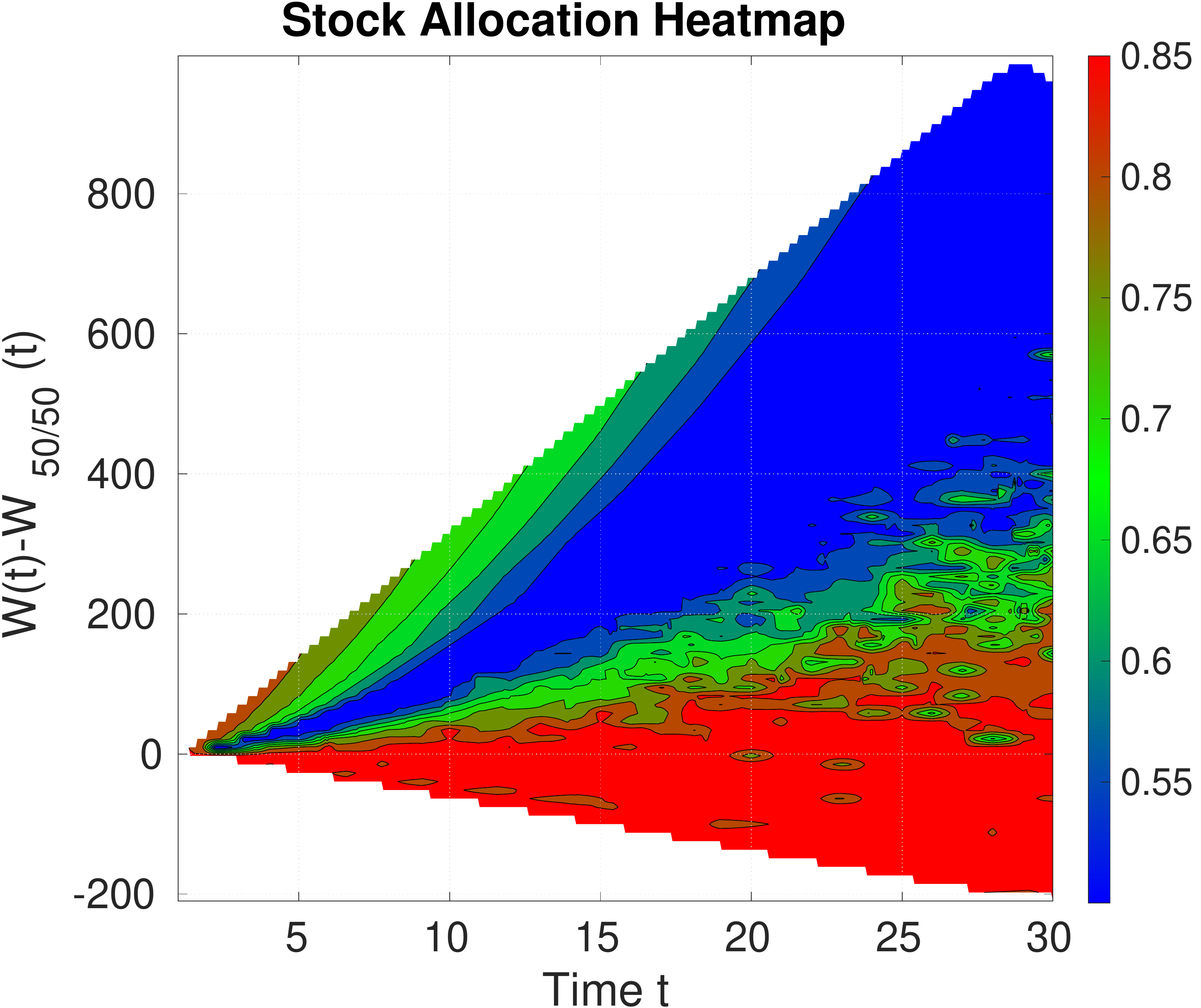}
\caption{Heatmap, fraction invested in stocks for the adaptive strategy}
\label{fig:heatmap}
\end{subfigure}
\caption{Fraction invested in stocks over time for the optimal adaptive strategy: percentiles and the heatmap}
\label{fig:strategy}
\end{figure}

With a red-blue color scheme, Figure \ref{fig:heatmap} shows the heatmap of the stock allocation with respect to time $t$ and the wealth difference $W(t)-W_{50/50}(t)$.
Darker shades of the red color indicate more allocation in stocks and darker shades of the blue color indicate more allocation in bonds.

From Figure \ref{fig:heatmap}, we observe that when $W(t)-W_{50/50}(t)$ is positive and large
(optimal adaptive strategy outperforming), the allocation of wealth to the stock becomes small.
The intuitive explanation is that the optimal adaptive strategy tends to decrease the wealth allocation to stocks
once it has established an advantage over the benchmark constant proportion strategy.
This also explains why the stock allocation almost always stays above 50\%.
In most cases where the optimal adaptive strategy has established an advantage over the
constant proportion strategy (as we have seen in Figure \ref{fig:w_ot}),
decreasing the stock allocation to 50\% to maintain the same allocation strategy as the 50/50 constant proportion strategy
locks in the outperformance.

On the other hand, when $W(t)-W_{50/50}(t)<0$ (i.e. the adaptive strategy underperforms),
the optimal policy allocates more wealth to stocks. This is because
the stock index has a higher expected return than the bond index.
To eventually outperform the constant proportion strategy, the adaptive strategy
invests more wealth in stocks, in an attempt to make up for the lost ground.

In fact, the optimal adaptive strategy appears to be a
contrarian strategy,
following which an investor buys and sells in opposition to the prevailing sentiment at the time.

\subsubsection{Historical Backtest Performance}
As a special out-of-sample test, we consider the actual historical path from 1985 to 2015 to backtest the performance of the optimal adaptive strategy. We note that the historical path is not a path in the training data set.

From Figure \ref{fig:backtest}, we see that the optimal adaptive portfolio always maintains a higher wealth than the constant proportion strategy over the entire investment period. While optimizing the performance of the adaptive strategy on a specific path is not the goal of our study, it is still quite interesting to see that historically the optimal adaptive strategy does better than the constant proportion strategy.

Note that the adaptive strategy does show a large drawdown in 2002 and 2008. However, our objective function is posed in terms of outperformance of the terminal wealth. We see that the adaptive strategy outperforms, in the sense that its wealth is always above the benchmark wealth, even in 2002 and 2008. It is, of course, possible to add penalties on drawdowns in the objective function. However, this would result in less favorable terminal statistics.

The solid line without markers in Figure \ref{fig:backtest} illustrates the time evolution of the stock allocation on the historical path. When the adaptive strategy performs poorly, such as in 2002 and 2008, the strategy allocates more wealth to stocks. When the adaptive strategy performs well, the strategy decreases allocation to stocks and invests more in bonds.

\begin{figure}[htp]
\centering
\includegraphics[width=0.48\textwidth]{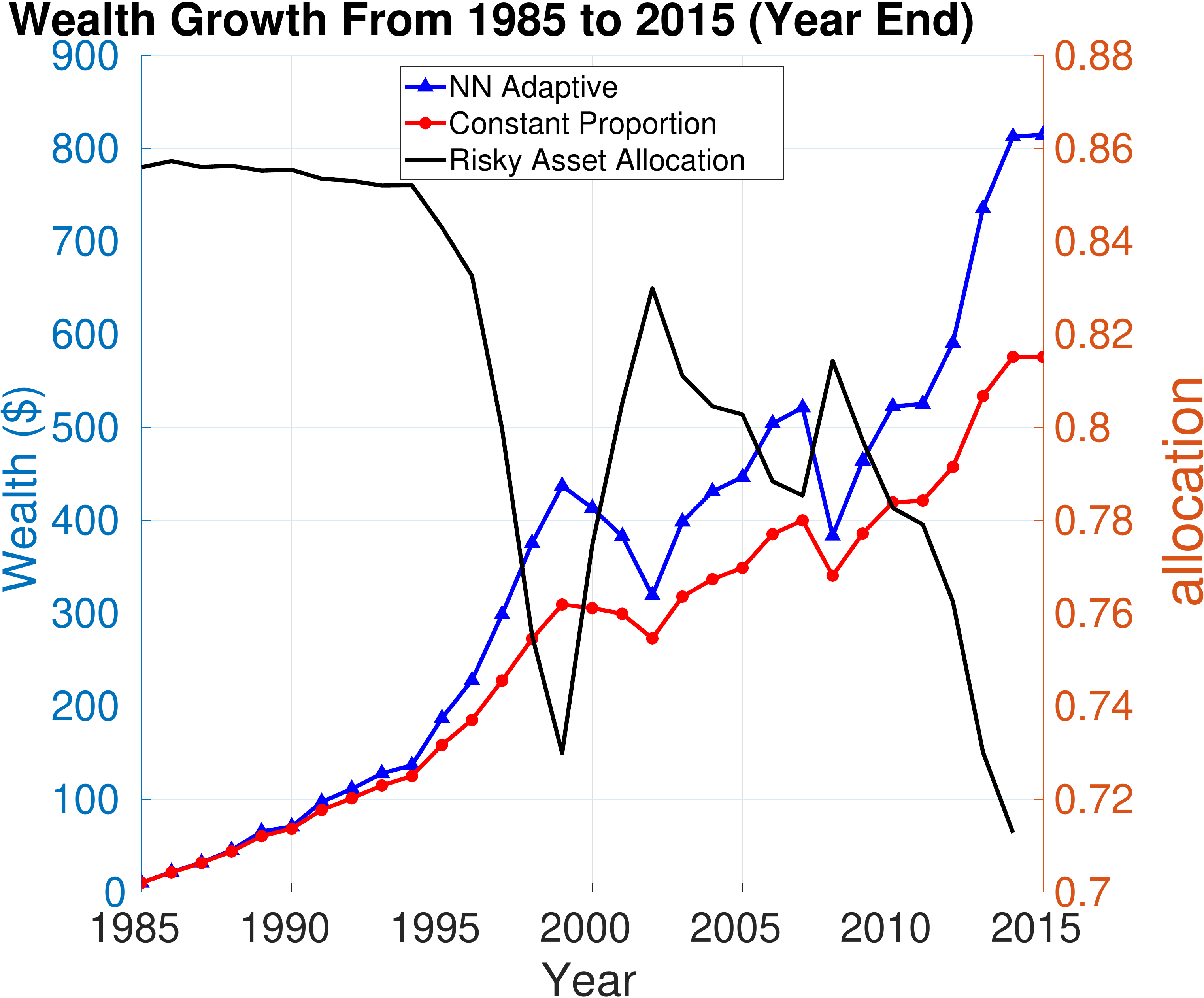}
\caption{Backtest of strategy performance over the historical period from 1985-2015 (single path)}
\label{fig:backtest}
\end{figure}

\subsection{Comparison with the 80/20 Constant Proportion Strategy}
While the average stock allocation from the optimal adaptive strategy varies over time, its average over time is about 80\%.
A natural question is how the optimal adaptive strategy compares with the 80/20 constant proportion strategy which invests 80\% of the wealth in the stocks and 20\% in the bonds.

Here we compare the optimal adaptive strategy with the 80/20 constant proportion strategy.
Recall that in Section \ref{sec:mkt}, the optimal adaptive strategy is trained on bootstrap resampled data with the expected blocksize of $0.5$ years and the test dataset is bootstrap resampled data with the expected blocksize of $2$ years. We compare the optimal adaptive strategy and 80/20 strategy on the same test dataset.

In Figure \ref{fig:8020_cdf_wdiff}, we plot CDFs of $W_{NN}(T)-W_{50/50}(T)$ and $W_{80/20}(T)-W_{50/50}(T)$, i.e., the wealth difference of the optimal adaptive strategy and the 80/20 strategy from the 50/50 strategy respectively.

\begin{figure}[htp]
\centering
\begin{subfigure}[b]{0.47\textwidth}
\centering
\includegraphics[width=\textwidth]{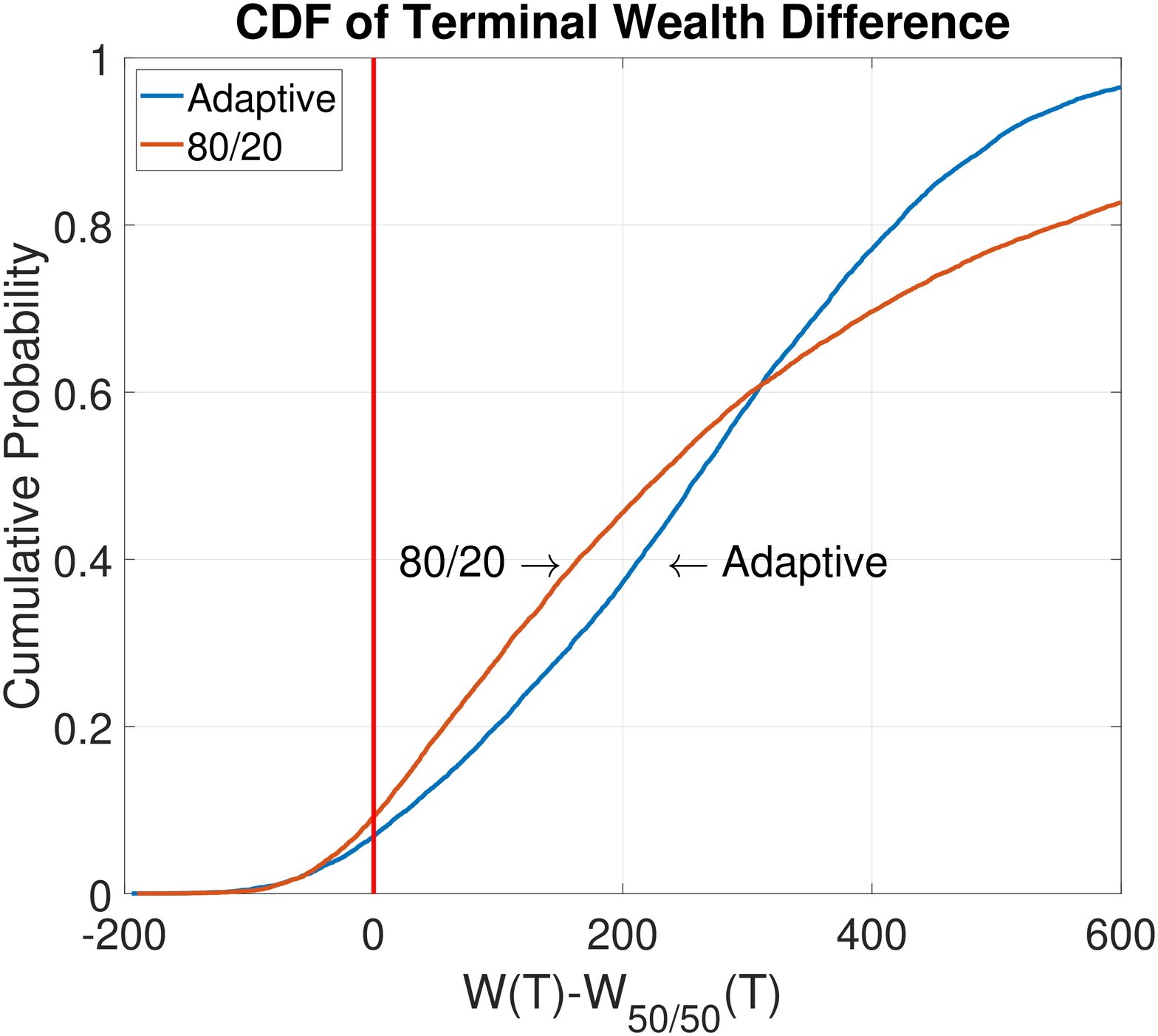}
\caption{CDF of terminal wealth difference $W(T)-W_{50/50}(T)$, $W(T)$ is either $W_{NN}(T)$ or $W_{80/20}(T)$}
\label{fig:wdiff}
\end{subfigure}
\hfill
\begin{subfigure}[b]{0.47\textwidth}
\centering
\includegraphics[width=\textwidth]{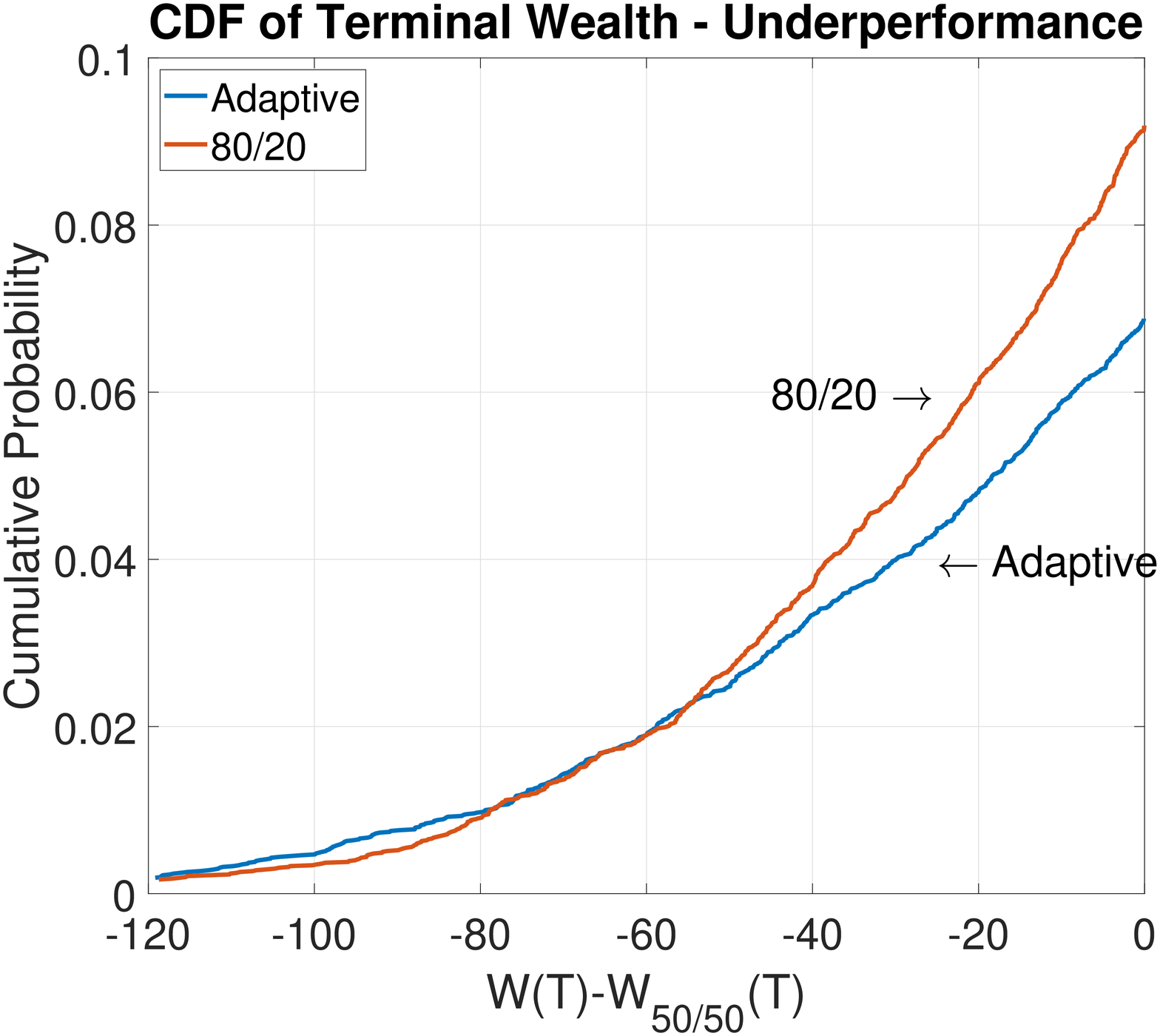}
\caption{CDF of terminal wealth difference - enlarged for underperformance}
\label{fig:wdiff_enlarged}
\end{subfigure}
\caption{CDF of wealth difference of both strategies (optimal adaptive and 80/20 constant proportion) over the 50/50 strategy}
\label{fig:8020_cdf_wdiff}
\end{figure}
We observe that the optimal adaptive strategy controls tail risk better than the 80/20 strategy. Specifically, the probability of the optimal adaptive strategy underperforming the 50/50 strategy is lower than the 80/20 strategy. When underperformance against the 50/50 strategy occurs, the magnitude of underperformance for the optimal adaptive strategy is less than the magnitude of underperformance for the 80/20 strategy, as in Figure \ref{fig:8020_cdf_wdiff}.

It is worth noting that the 80/20 strategy has more upside than the optimal adaptive strategy. However, we should remind the readers that less upside is a natural result of our choice of the double-sided penalty objective function. As reflected in the asymmetric objective function, our goal is not to achieve extremely large outperformance over the 50/50 strategy, but to reach the elevated target with high probability and to control the downside risk. The optimal adaptive strategy achieves those goals better than the 80/20 strategy. To better demonstrate this, we plot the following CDF of outperformance of both strategies over the elevated target $e^{sT}\cdot W_{50/50}(T)$, in Figure \ref{fig:wdiff_cdf_elev_enlarged}.

\begin{figure}[htp]
\centering
\begin{subfigure}[b]{0.48\textwidth}
\centering
\includegraphics[width=\textwidth]{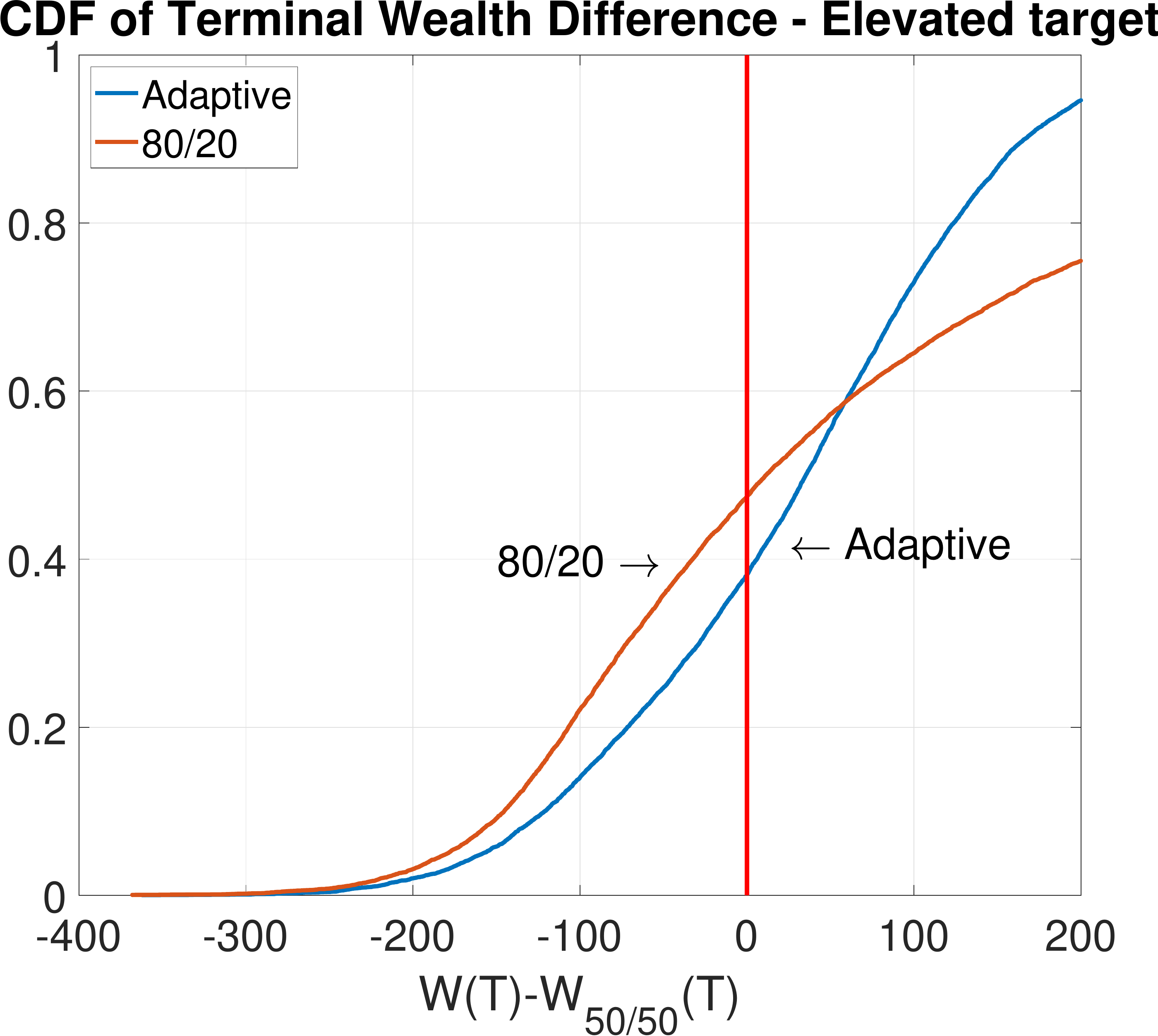}
\caption{CDF of terminal wealth difference $W(T)-e^{sT}\cdot W_{50/50}(T)$, $W(T)$ is either $W_{NN}(T)$ or $W_{80/20}(T)$}
\label{fig:wdiff_cdf_elev}
\end{subfigure}
\hfill
\begin{subfigure}[b]{0.45\textwidth}
\centering
\includegraphics[width=\textwidth]{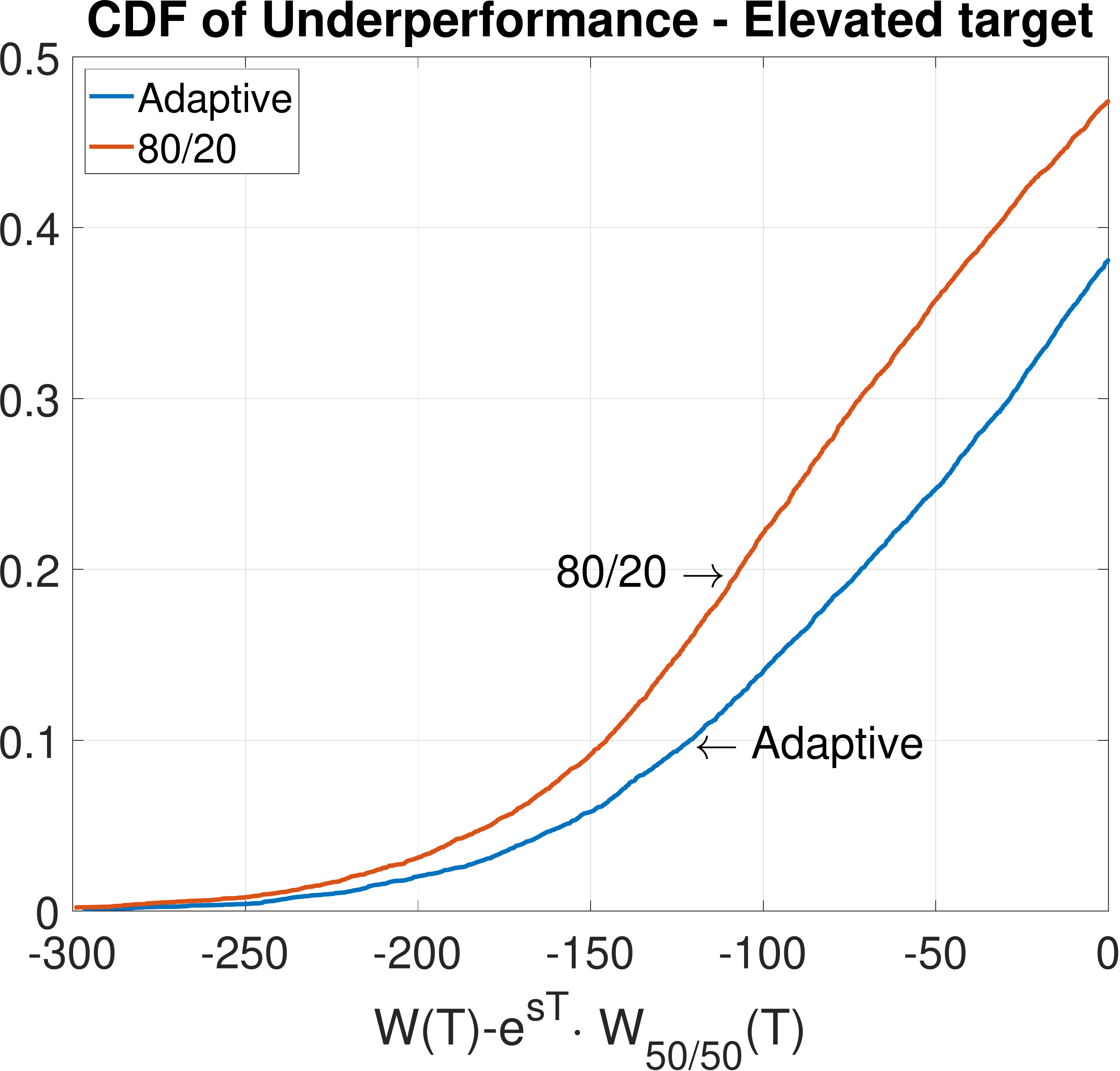}
\caption{CDF of terminal wealth difference over the elevated target - enlarged for underperformance}
\label{fig:wdiff_cdf_elev_enlarged}
\end{subfigure}
\caption{CDF of wealth difference of both strategies (optimal adaptive and 80/20 constant proportion) over the elevated target $e^{sT}\cdot W_{50/50}(T)$}
\label{fig:8020_cdf_wdiff_elevated}
\end{figure}

We also observe that the optimal adaptive strategy has a smaller probability of underperforming the elevated target (37.3\%) than the 80/20 strategy (46.8\%). This means the optimal adaptive strategy is more likely to reach the elevated target and thus achieve the pre-determined annual outperformance spread.

Moreover, we observe from the enlarged CDF plot in Figure \ref{fig:wdiff_cdf_elev_enlarged} that the optimal adaptive strategy consistently controls underperformance better than the 80/20 strategy, in the sense that the optimal adaptive strategy underperforms less than the 80/20 strategy when the elevated target is not met.

\section{Robustness Assessment}

To further evaluate the robustness of the optimal adaptive strategy, we assess optimal control models from the following three perspectives:
\begin{itemize}
\item
We test the strategy learned from the bootstrap data with a given expected blocksize on bootstrap data with multiple different expected blocksizes.

\item
We train the model on a dataset simulated from a synthetic parametric model and test it on the bootstrap resampled dataset.
\item
We train the strategy learned on bootstrap data from one segment of the historical data and test the strategy on bootstrap data from another segment of the historical data.
\end{itemize}

We generate the bootstrap resampled data by sampling directly from the specified historical data sequence for training the optimal control model.

\subsection{Testing Using Different Blocksizes}\label{sec:robustness_blocksize}
We test the adaptive strategy learned on bootstrap resampled data with a given blocksize on bootstrap resampled data with different blocksizes.

For illustration, here we only show the testing results of the strategy learned on bootstrap
resampled data with expected blocksize of $0.5$ years, where test data sets are
bootstrap resampled data with blocksizes ranging from 1-10 years. We note that
training on data sets using a different blocksize, and testing on other blocksizes produces
qualitatively similar results.

\begin{table}[htp]
{\scriptsize
\begin{center}
\resizebox{\columnwidth}{!}{
\begin{tabular}{lccccc} \hline
\multicolumn{6}{c}{ Training Results on Bootstrap Data with Expected Blocksize = 0.5 : Market Cap Weighted} \\ \hline
Strategy & $E(W_T)$ & $std(W_T)$& $median(W_T)$ & $Pr(W_T<median(W_T^{CP}))$ & $Pr(W_T<median(W_T^{NN}))$ \\ \hline
constant proportion($p=0.5$) & 678 & 276 & 624 & 0.50 & 0.86 \\
adaptive & 963 & 474 & 913 & 0.26 & 0.50 \\\hline
\multicolumn{6}{c}{ Testing Results on Bootstrap Data: Market Cap Weighted} \\ \hline
Strategy & $E(W_T)$ & $std(W_T)$& $median(W_T)$ & $Pr(W_T<median(W_T^{CP}))$ & $Pr(W_T<median(W_T^{NN}))$ \\ \hline
\multicolumn{6}{c}{Expected Blocksize $\hat{b}=1$ years} \\ \hline
constant proportion($p=.5$) & 674 & 273 & 624 & 0.50 & 0.84 \\
NN adaptive & 955 & 466 & 909 & 0.27 & 0.50 \\\hline
\multicolumn{6}{c}{Expected Blocksize $\hat{b}=2$ years} \\ \hline
constant proportion($p=.5$) & 676 & 263 & 631 & 0.50 & 0.84 \\
NN adaptive & 958 & 445 & 917 & 0.26 & 0.50 \\\hline
\multicolumn{6}{c}{Expected Blocksize $\hat{b}=5$ years} \\ \hline
constant proportion($p=.5$) & 669 & 244 & 626 & 0.50 & 0.85 \\
NN adaptive & 953 & 409 & 915 & 0.24 & 0.50 \\\hline
\multicolumn{6}{c}{Expected Blocksize $\hat{b}=8$ years} \\ \hline
constant proportion($p=.5$) & 669 & 233 & 632 & 0.50 & 0.87 \\
NN adaptive & 960 & 393 & 928 & 0.23 & 0.50 \\\hline
\multicolumn{6}{c}{Expected Blocksize $\hat{b}=10$ years} \\ \hline
constant proportion($p=.5$) & 667 & 223 & 635 & 0.50 & 0.88 \\
NN adaptive & 961 & 383 & 928 & 0.22 & 0.50 \\\hline
\end{tabular}
}
\end{center}
}
\caption{Terminal wealth statistics of the adaptive strategy trained on bootstrap resampled data with expected blocksize $\hat{b}$ = $0.5$ years. Tested on bootstrap resampled data with blocksizes from $1$ to $10$ years.}
\label{tb:mkt_terminal_robustness}
\end{table}

We can observe from Table \ref{tb:mkt_terminal_robustness} that
\begin{itemize}
\item The mean and the median terminal wealth of the adaptive strategy remain similar across different blocksizes.
\item The adaptive strategy has a more favorable terminal wealth distribution as it is more likely to achieve the terminal wealth higher than the median terminal wealth of the constant proportion strategy.
\end{itemize}

Table \ref{tb:mkt_terminal_robustness} demonstrate that the outperformance of the adaptive strategy over the benchmark strategy is robust across different expected blocksizes. We include more testing results from strategies trained with other expected blocksizes in the Appendix.

\subsection{Strategy Trained on Synthetic Data} \label{sec:syn}
In this section, we generate synthetic data from a parametric model calibrated
to historical data.
We then test the strategy on bootstrap resampled data.
Clearly, the synthetic data from the parametric model will have a different distribution
compared to the resampled data.

\subsubsection{Synthetic Data Generation}
The synthetic data is generated based on a jump-diffusion stochastic process.
Let $S(t)$ and $B(t)$ respectively denote the wealth invested in the stocks and bonds at time $t$, $t\in[0,T]$.
Specifically, we will assume that $S(t)$ represents the unit amount invested in a broad stock market index (CRSP cap-weighted index), while $B(t)$ is the unit amount invested in short term default-free government bonds (in our case, the 3-month T-bill).

Recall that $t^- = t -\epsilon, \epsilon\rightarrow0^+$, i.e. $t^-$ is the instant of time before $t$, and let $\psi$ be a random number representing a jump multiplier. When a jump occurs, $S(t) = \xi S(t^-)$.
Allowing discontinuous jumps lets us explore the effects of severe market crashes on the stock holding,
and nonnormal returns. We assume that $\xi$ follows a double exponential
distribution (\citep{kou2002jump}; \citep{kou2004option}). If a jump occurs, $p_{up}$ is the probability of an upward jump, while $1 - p_{up}$ is the chance of a downward jump. The
density function for $y = \log \xi$ is
\begin{equation} \label{eq:xi}
f(y) = p_{up}\eta_1e^{-{\eta_1}y}\textbf{1}_{y\geq0} + (1 - p_{up})\eta_2e^{{\eta_2}y}\textbf{1}_{y\le0}.
\end{equation}
For future reference, note that
\begin{equation} \label{eq:exp_xi}
E[y=\log\xi]=\frac{p_{up}}{\eta_1}-\frac{(1-p_{up})}{\eta_2},\; E[y=\xi]=\frac{p_{up}\eta_1}{\eta_1-1}+\frac{(1-p_{up})\eta_2}{\eta_2-1}
\end{equation}
We assume that $S(t)$ evolves according to
\begin{equation} \label{eq:S}
\frac{dS(t)}{S(t^-)}=(\mu-\lambda E[\xi-1])dt+\sigma dZ+d\big(\sum_{i=1}^{\pi_t}(\xi_i-1)\big),
\end{equation}
where $\mu$ is the (uncompensated) drift rate, $\sigma$ is the volatility, $dZ$ is the increment of a Wiener process, $\pi_t$
is a Poisson process with positive intensity parameter $\lambda$, and $\xi_i$ are i.i.d. positive random variables having
distribution (\ref{eq:xi}). Moreover, $\xi_i$, $\pi_t$, and $dZ$ are assumed to all be mutually independent.

We assume that the dynamics of the amount $B(t)$ invested in the risk-free asset are
\begin{equation} \label{eq:B}
dB(t) = rB(t)dt,
\end{equation}
where $r$ is the (constant) risk-free rate. This is obviously a simplification of the real bond market. We remind the reader that, ultimately, our NN method is entirely data-driven, and will be based on bootstrapped stock and bond indexes.

Based on (\ref{eq:S}) and (\ref{eq:B}), we use the methods in \citep{dang2016better} to calibrate the process parameters. We use a threshold technique \citep{cont2011nonparametric} to identify jump frequency and distribution, and the methods in \citep{dang2016better} to determine the remaining parameters. Annualized estimated parameters for the cap-weighted stock index is provided in Table \ref{tb:parameters}.

\begin{table}[htp]
\centering
\begin{tabular}{ccccccc}
\hline
$\mu$ & $\sigma$ & $\lambda$ & $p_{up}$ & $\eta_1$ & $\eta_2$ &$r$\\ \hline
\multicolumn{7}{c}{Real CRSP Cap-Weighted Stock Index and 3-month T-bill Index} \\ \hline
.08889 & .14771 & .32222 & 0.27586 & 4.4273 & 5.2613 & 0.00827 \\ \hline
\end{tabular}
\caption{Estimated annualized parameters for double exponential jump diffusion model. Cap-weighted index, deflated by the CPI. Sample period 1926:1 to 2015:12.}
\label{tb:parameters}
\end{table}

We then generate the synthetic data based on the parametric model with the calibrated parameters through Monte Carlo simulations.
\subsubsection{Strategy Performance}
We test the performance of the strategy trained on synthetic data on bootstrap data with expected blocksize $\hat{b}$ = 2 years. Note that the testing performance with other expected blocksizes is very similar to each other so we only show results for $\hat{b}$ = 2 years.

\begin{table}[H]
{\scriptsize
\begin{center}
\resizebox{\columnwidth}{!}{
\begin{tabular}{lccccc} \hline
\multicolumn{6}{c}{ Training Results on Synthetic Data : Market Cap Weighted} \\ \hline
Strategy & $E(W_T)$ & $std(W_T)$& $median(W_T)$ & $Pr(W_T<median(W_T^{CP}))$ & $Pr(W_T<median(W_T^{NN}))$ \\ \hline
constant proportion($p=0.5$) & 714 & 383 & 630 & 0.50 & 0.82 \\
adaptive & 1019 & 651 & 930 & 0.29 & 0.50 \\\hline
\multicolumn{6}{c}{ Testing Results on Bootstrap Data with Expected Blocksize = 2 years} \\ \hline
Strategy & $E(W_T)$ & $std(W_T)$& $median(W_T)$ & $Pr(W_T<median(W_T^{CP}))$ & $Pr(W_T<median(W_T^{NN}))$ \\ \hline
constant proportion($p=0.5$) & 679 & 267 & 630 & 0.50 & 0.84 \\
adaptive & 944 & 431 & 912 & 0.26 & 0.50 \\\hline
\end{tabular}
}
\end{center}
}
\caption{Terminal wealth statistics the adaptive strategy trained on synthetic data and tested on bootstrap resampled data with expected blocksize $\hat{b}$ = 2 years}
\label{tb:syn_terminal}
\end{table}

Table \ref{tb:syn_terminal} shows that the adaptive strategy learned from synthetic data performs well on the test set of bootstrap resampled data. The adaptive strategy have significantly higher median and mean terminal wealth than the constant proportion strategy in both training and testing.

We do notice that in the testing results, the adaptive strategy has
slightly lower mean and median terminal wealth, as well as a lower
standard deviation than in training results. This is hardly surprising since the
training and test data have different distributions. However, overall, the strategy
appears to be quite robust.
Further distribution comparisons can be found in Appendix \ref{append:distr}.

\subsection{Robustness Test With Training/Testing Split}\label{sec:split}
In \S\ref{sec:robustness_blocksize} and \S\ref{sec:syn}, both training and testing datasets are generated from either a parametric model or bootstrap resampled data from a single historical return path from 1926-2015.
A possible criticism of such an approach is that both the training data and testing data share the same information source.
In particular, is it possible for the training data to have a forward-looking bias?

We argue that there is no forward-looking bias in the described training and testing data generation process. Recall that in the experiments, training data and testing data have different expected blocksizes, and thus different distributions. Specifically, when bootstrap resampling randomly with different expected blocksizes, the ordering of blocks of data points is randomly shuffled and
any sequential ordering information is destroyed.
Further, Theorem \ref{thm:fix} and \ref{thm:stb} show that the probability of an entire path in the training dataset
reappearing in the testing dataset is vanishingly small.
This is due to the random block resampling nature of the bootstrap algorithm.

Nonetheless, to provide additional evidence of robustness, we compare the following two different cases:

\begin{description}
\item{{\bf{Case \#1:}}}
We train the adaptive strategy on bootstrap resampled data from the entire historical path from 1926 to 2015. We test the strategy on bootstrap resampled data from the last 30 years of the historical path from 1986-2015. There is an overlap between the underlying historical path for training and testing (1986-2015). We show that such overlap does not introduce an advantage in terms of the strategy performance by comparing it with case \#2 - the {\em non-overlap} case.

\item{{\bf{Case \#2:}}}
We train the adaptive strategy on bootstrap resampled data from the first 60 years of the historical path from 1926 to 1985.
We test the strategy on the same bootstrap resampled data generated from the last 30 years of the historical path from 1986-2015 as in case \#1.
Consequently, there is no overlap between the underlying historical paths we use for generating training data and testing data at all.
\end{description}

Figure \ref{fig:split_case1} and Figure \ref{fig:split_case3} show these two cases schematically. Case \#2 is the more stringent test case as there are zero overlaps between the underlying historical data for the generation of the training set and the testing set.

\begin{figure}[htp]
\centering
\begin{tikzpicture}[y=1cm, x=1cm, thick, font=\footnotesize]
\draw[line width=1.2pt, ->, ](2,0) -- (9.5,0) node[anchor=west] {Year};
\draw (2.5,0.5em) -- (2.5,-0.5em) node[below] {$1926$};
\draw (6.5,0.5em) -- (6.5,-0.5em) node[below] {$1985$};
\draw (8.5,0.5em) -- (8.5,-0.5em) node[below] {$2015$};
\draw [decorate,decoration={brace,amplitude=10pt},xshift=-4pt,yshift=0pt]
(2.63,0.7em) -- (8.62,0.7em) node [black,midway,yshift=0.6cm]
{\footnotesize Training};

\draw [decorate,decoration={brace,amplitude=10pt, mirror},xshift=-4pt,yshift=0pt]
(6.63,-1.5em) -- (8.62,-1.5em) node [black,midway,yshift=-0.6cm]
{\footnotesize Testing};
\end{tikzpicture}
\caption{Case \#1: use historical data from 1926-2015 for generating training data, and 1986-2015 for testing. There is an overlap between the underlying historical paths for training and testing.}
\label{fig:split_case1}
\end{figure}
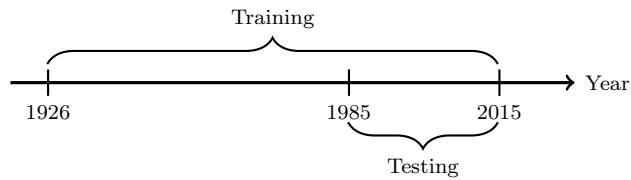

\begin{figure}[htp]
\centering
\begin{tikzpicture}[y=1cm, x=1cm, thick, font=\footnotesize]
\draw[line width=1.2pt, ->, ](2,0) -- (9.5,0) node[anchor=west] {Year};
\draw (2.5,0.5em) -- (2.5,-0.5em) node[below] {$1926$};
\draw (6.5,0.5em) -- (6.5,-0.5em) node[below] {$1985$};
\draw (8.5,0.5em) -- (8.5,-0.5em) node[below] {$2015$};
\draw [decorate,decoration={brace,amplitude=10pt},xshift=-4pt,yshift=0pt]
(2.63,0.7em) -- (6.62,0.7em) node [black,midway,yshift=0.6cm]
{\footnotesize Training};

\draw [decorate,decoration={brace,amplitude=10pt, mirror},xshift=-4pt,yshift=0pt]
(6.63,-1.5em) -- (8.62,-1.5em) node [black,midway,yshift=-0.6cm]
{\footnotesize Testing};
\end{tikzpicture}
\caption{Case \#2: ``non-overlap'' case where underlying market data for training and testing data has no overlaps. Case \#2 uses the same testing dataset as case \#1.}
\label{fig:split_case3}
\end{figure}
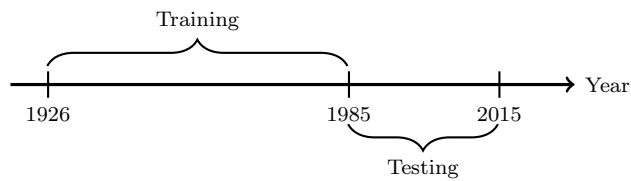

Note that, for Case \#1 and \#2, the underlying historical data for testing data has only a 30-year window. Recall that our investment horizon
in our previous experiments was $T=30$ years. In order to obtain more meaningful block bootstrap resampling results for
the non-overlap window, we will reduce the investment horizon to $T=15$ years, for both cases in this section.

We first compare the CDF of the terminal wealth of the two cases.
\begin{figure}[htp]
\centering
\includegraphics[width=0.6\linewidth]{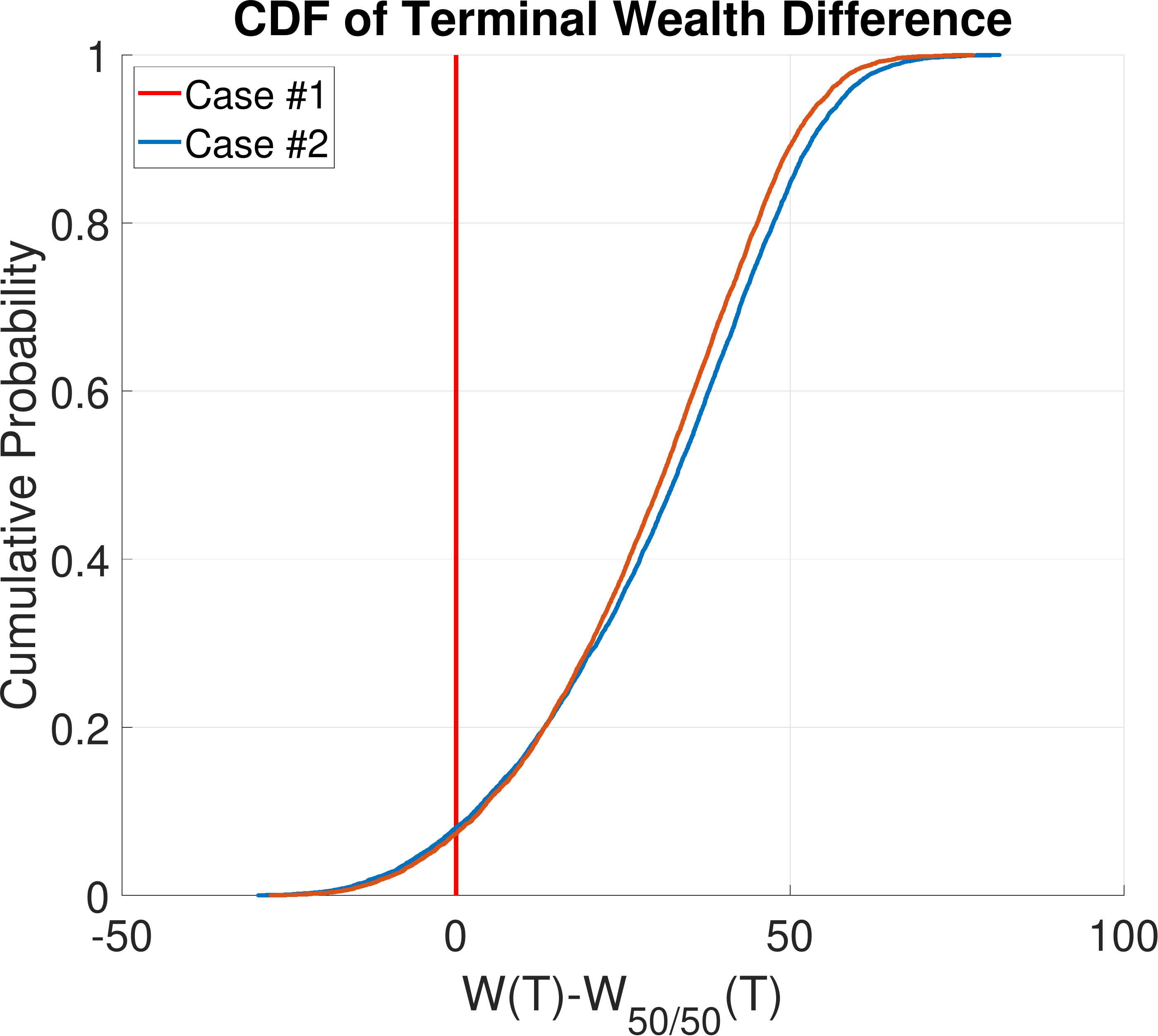}
\captionof{figure}{CDF of wealth difference $W(T)-W_{50/50}(T)$ for the two cases: case \#1: train: 1926-2015, test: 1986-2015; case \#2: train: 1926-1985, test: 1986-2015.}
\label{fig:split_wdiff_cdf}
\end{figure}%
From Figure \ref{fig:split_wdiff_cdf} we can observe that Case \#1 and Case \#2 have almost identical CDF curves. The almost identical CDF curves for Case \#1 and Case \#2 (the {\em non-overlap} case) - supports our argument that forward-looking bias is not a concern in our approach. Despite using the entire historical period as the underlying data for training, case \#1 does not have a superior CDF than Case \#2, in which the underlying market data for training data and testing data have no overlaps.

\begin{figure}[htp]
\centering
\includegraphics[width=1\linewidth]{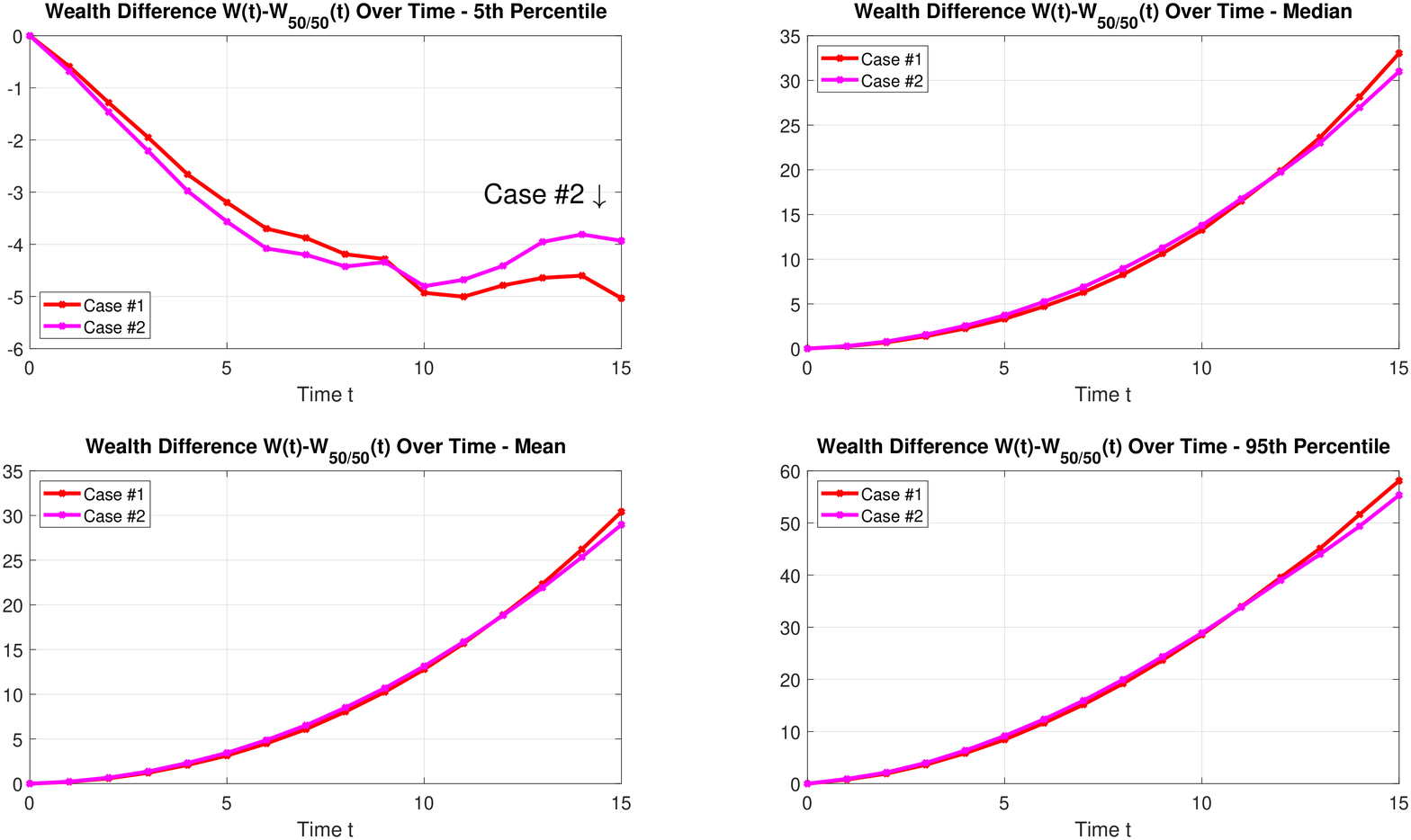}
\captionof{figure}{Percentiles of wealth difference $W(T)-W_{50/50}(T)$ for the two cases}
\label{fig:split_wdiff}
\end{figure}%

In Figure \ref{fig:split_wdiff}, we show the different percentiles of wealth difference between the adaptive strategy and the constant proportion strategy for both cases.

Again, Case \#1 and Case \#2 have almost identical performances, except that Case \#2 has slightly better tail risk control than Case \#1 (5th percentile). This further proves that the overlap does not introduce performance advantage as the {\em non-overlap} case actually has less tail risk.

\begin{figure}[htp]
\centering
\includegraphics[width=1\linewidth]{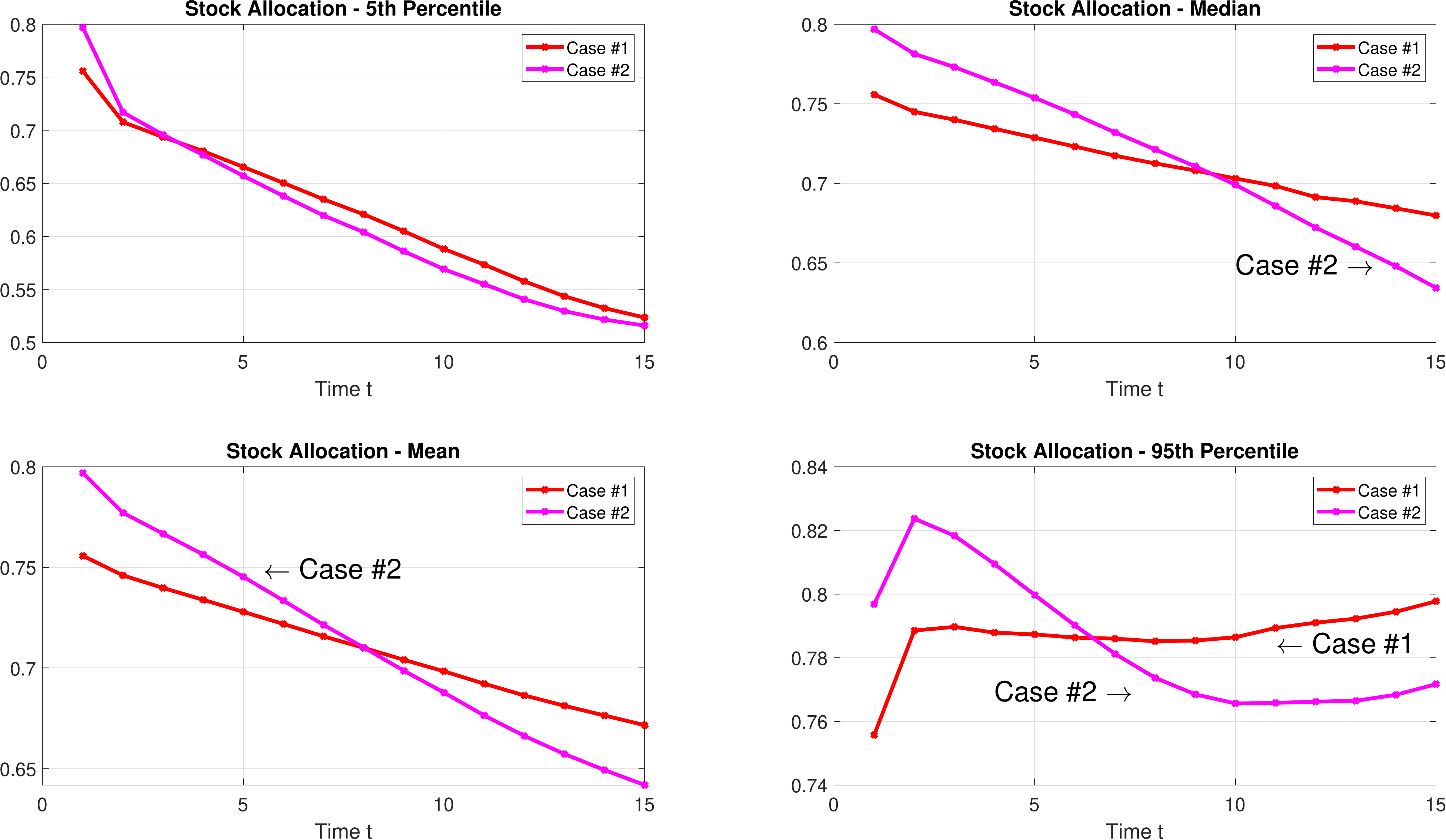}
\captionof{figure}{Stock allocation for the two cases}
\label{fig:split_alloc}
\end{figure}

In Figure \ref{fig:split_alloc}, we compare the actual strategies, i.e., stock allocations of both cases. This time we can observe some differences between Case \#1 and Case \#2. From the median and mean plot, we can observe that Case \#2 tends to derisk (decrease allocation in the stocks) more aggressively over time than Case \#1. We believe the difference comes from the difference in the distributions between the different segments of the underlying historical market returns. However, the difference between allocation strategies is not significant. In fact, the average stock holding over time are quite similar for both cases. In addition, we have already observed similar strategy performances in terms of terminal wealth distributions from figure \ref{fig:split_wdiff_cdf} and figure \ref{fig:split_wdiff}.

In conclusion, the results further illustrate the robustness of our approach and show that forward-looking bias is not a concern in our work.

\section{Conclusions}
In this article, we propose a data-driven framework for computing the optimal asset allocation for outperforming a stochastic benchmark target based on market asset return observations.
The scenario-based dynamic asset allocation problem is solved directly
assuming a neural network representation for the optimal control, without
using dynamic programming. This leads to a method that avoids the curse of
dimensionality which is a critical issue in dynamic allocation for outperforming a stochastic benchmark.

In addition, we design an asymmetric distribution shaping objective function which
is capable of producing an optimal strategy which can yield significantly
larger median terminal wealth than the target, with only a small probability (and magnitude) of underperformance.
We emphasize that our methodology can encompass a wide class of objective functions, which can
be tailored to the risk preferences of individual investors.

We use block bootstrap resampling to augment historical financial market data.
The training data is generated by block bootstrap resampling from market asset returns. This leads
to a data-driven approach for determining the optimal dynamic asset allocation,
avoiding the need to make a parametric asset price model as well as model parameter estimations.
We further provide mathematical justifications for using block bootstrap resampling to generate both training and testing datasets.

The proposed method is illustrated in the DC pension allocation problem,
which is a practically relevant and important problem on its own.
We evaluate and analyze the performance of the optimal NN adaptive
strategy based on CRSP
3-month Treasury bill (T-bill) index for the risk-free asset
and the CRSP cap-weighted total return index for the risky asset from {\em 1926:1-2015:12}.

We illustrate the robustness of our approach from three different perspectives.
\begin{itemize}
\item We show that the adaptive strategy trained on bootstrap resampled data with a given expected
blocksize performs consistently well on bootstrap resampled data with different expected blocksizes (thus different distributions).

\item We show that the adaptive strategy learned on synthetic data performs well on bootstrap resampled data,
despite the fact that the methodology for generating the datasets are quite different.

\item We compare the performance of our strategy with the strategy
learned in an {\em non-overlap} setting where the underlying market data for the
training dataset and testing dataset has no overlap. We show that the {\em non-overlap}
case has a comparable performance which supports our argument that
forward-looking bias should not be a concern in our approach.
\end{itemize}

Basing our optimal control on a shallow Neural Network representation using only a small number of
financially relevant feature variables results in a strategy that is financially intuitive and implementable.

\section{Acknowledgements}
This work was supported by a Collaborative Research and
Development (CRD) grant from the Natural Sciences and Research Council of Canada
NSERC Neuberger Berman CRD: \#50492-10196 -2950-105
and by a grant from Neuberger Berman.

\section{Conflicts of interest}
The authors have no conflicts of interest to report.

\appendix
\section{Appendix}
\subsection{Proofs for Theorem \ref{thm:fix} and \ref{thm:stb} } \label{appendix:proof}

We mathematically establish Theorem \ref{thm:fix} and \ref{thm:stb}.

For a path $\ppath$, we use the following notations:
\begin{eqnarray}
\hat{b} & = & {\mbox{expected blocksize in stationary block bootstrap}} \nonumber \\
N & = & {\mbox{number of total datapoints in the path}} \nonumber \\
N_{tot} & = & {\mbox{number of total datapoints to bootstrap from}} \nonumber \\
\ppath[i] & = & {\mbox{the $i$th data point in path $\ppath$}} \nonumber \\
\end{eqnarray}

We also make the following definitions.
\begin{definition}
Assume that a path $\ppath$ of length $N$, which contains blocks $[B_1,\ldots B_k] $,
is resampled from the original data path of length $N_{tot}$.
The \textbf{decision index list} $[I_1,\ldots, I_k] $ of the path $\ppath$ is defined
as the list of starting indices of every block in the resampled path with
$ I_1=1$, $I_i = 1 + \sum_{j=1}^{i-1} |B_j|$, $i=2,\ldots k$, where $|B_j|$ denotes the number of points in the block $B_j$.
If $I_k$ is the starting index of the last block in the path, then, for index completeness, we define $I_{k+1} \equiv N+1$.
\end{definition}

\begin{remark}[Decision Index List Example]
Given a decision index list $[I_1,\ldots, I_k] $, associated with a path $\ppath$, then the data point of the path, which starts
at decision index $I_i$, is $\ppath[ I_i ]$.
\end{remark}

\begin{definition}
For any two paths $\ppath_1$ and $\ppath_2$, the \textbf{combined decision index list}
of $\ppath_1$ and $\ppath_2$ is the merged index list (with only a single copy of each index) of
the decision index lists of $\ppath_1$ and $\ppath_2$. The merged list $[I_1, \ldots, I_p]$ retains
the order properties of the original lists, i.e. $I_{i+1} > I_i$ and $I_{p+1} =N+1$.
\end{definition}

\begin{definition}
For any two paths $\ppath_1$ and $\ppath_2$, we define $N_{cdi}(\ppath_1,\ppath_2)$ as
the length of the combined decision index list of $\ppath_1$ and $\ppath_2$.
\end{definition}

\begin{lemma}\label{Lemma:LemmaA1}
Consider either the fixed block resampling or stationary resampling from a sequence of $N_{tot}$ distinct observations.
Two paths $\ppath_1$ and $\ppath_2$ with $[I_1,I_2,\ldots,I_{cdi}]$ as the combined decision index list are identical if and only $\ppath_1[I_j]=\ppath_2[I_j]$ at any $I_j$, $j=1,\ldots, N_{cdi}$.
\end{lemma}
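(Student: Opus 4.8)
The plan is to prove the two implications separately, with the forward direction being immediate and the reverse direction carrying all the content. For the forward implication, if $\ppath_1$ and $\ppath_2$ are identical as paths then $\ppath_1[i]=\ppath_2[i]$ for every $i\in\{1,\ldots,N\}$, so in particular equality holds at each combined decision index $I_j$; nothing further is required.

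For the reverse implication I would exploit the defining property of the combined decision index list: it contains \emph{every} block boundary of \emph{both} paths. The first step is therefore to establish a ``no interior boundary'' property. For any two consecutive combined decision indices $I_j$ and $I_{j+1}$ (using the convention $I_{N_{cdi}+1}=N+1$), no block of $\ppath_1$ and no block of $\ppath_2$ can begin at a position strictly inside the open range $(I_j,I_{j+1})$, since any such beginning would be a combined decision index lying between $I_j$ and $I_{j+1}$, contradicting consecutiveness. Consequently the positions $I_j,I_j+1,\ldots,I_{j+1}-1$ lie within a single block of $\ppath_1$ and, simultaneously, within a single block of $\ppath_2$. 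I want to stress that this argument uses only the absence of a boundary strictly between $I_j$ and $I_{j+1}$; it does \emph{not} assume that $I_j$ starts a block in both paths (it may start a block in only one of them).

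The second ingredient is the distinctness of the $N_{tot}$ observations. Within any single block the resampled values form a contiguous run of the original sequence, read cyclically to account for the circular wraparound of the bootstrap. Because the observations are distinct, the single value $\ppath_1[I_j]$ uniquely identifies the position in the original sequence from which $\ppath_1$ is reading at index $I_j$, and likewise for $\ppath_2$. Since both paths then continue reading contiguously through the entire interval $[I_j,I_{j+1})$ (by the no-interior-boundary property), the hypothesis $\ppath_1[I_j]=\ppath_2[I_j]$ forces these two current reading positions to coincide, whence $\ppath_1[i]=\ppath_2[i]$ for every $i$ in $[I_j,I_{j+1})$.

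Finally I would take the union over $j=1,\ldots,N_{cdi}$: the intervals $[I_j,I_{j+1})$ partition $\{1,\ldots,N\}$, so agreement on each interval yields $\ppath_1=\ppath_2$, completing the reverse implication. I expect the main obstacle to be the careful bookkeeping rather than any deep idea: one must handle the cyclic wraparound cleanly and, most importantly, phrase the per-interval argument in terms of the \emph{reading position} in the original sequence at index $I_j$ (well defined whether or not a block begins there), rather than in terms of a block's starting point, since $I_j$ need not be a block start in both paths.
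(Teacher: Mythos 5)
Your proof is correct and follows essentially the same route as the paper's: trivial forward direction, then for the converse the observation that each interval $[I_j,I_{j+1})$ between consecutive combined decision indices lies inside a single block of each path, so that agreement at $I_j$ (using distinctness of the $N_{tot}$ observations) propagates across the whole interval, and the intervals cover $\{1,\ldots,N\}$. The only difference is that you spell out the points the paper leaves implicit — the no-interior-boundary justification, the role of distinctness in pinning down the reading position, and the cyclic wraparound — which makes your write-up somewhat more careful than the original but not a different argument.
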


\begin{proof}
First, $\ppath_1$ equals to $\ppath_2$ clearly implies that $\ppath_1[I_j]=\ppath_2[I_j]$ at any $I_j$, $j=1,\ldots, N_{cdi}$.
Conversely, assume that $\ppath_1[I_j]=\ppath_2[I_j], j=1,\ldots,N_{cdi}$.
For any $j, j=1,\ldots,N_{cdi}$, the entire segment $ \ppath_1[I_j],\ldots,\ppath_1[I_{j+1}-1]$ is from the
same resampled subblock of the original data.
Similarly, the the entire segment $ \ppath_2[I_j],\ldots,\ppath_2[I_{j+1}-1]$ is from the
same resampled subblock of the original data.
Since $ \ppath_1[I_j]= \ppath_2[I_j]$, then $ \ppath_1[I_j],\ldots,\ppath_1[I_{j+1}-1]$ and
$ \ppath_2[I_j],\ldots,\ppath_2[I_{j+1}-1]$ are identical.
Thus, the entire paths $\ppath_1$ and $\ppath_2$ are identical.

\end{proof}

~\\
\textbf{THEOREM \ref{thm:fix}.}
Consider fixed block resampling sequences of $N$ points from a sequence of $N_{tot}$ distinct observations .
Let path $\ppath_1$ be a bootstrap resampled path with a fixed blocksize of $b_1$ and path $\ppath_2$ be a
bootstrap resampled path with a fixed blocksize of $b_2$. Then the probability of $\ppath_1$ and $\ppath_2$
being identical is $(\frac{1}{N_{tot}})^{lcm(\frac{N}{b_1},\frac{N}{b_2})}$,
where $lcm(a,b)$ is the least common multiple of integer $a,b$.

~\\
\noindent
\begin{proof}

Let $I$ denote the combined decision index list of $\ppath_1$ and $\ppath_2$, with
$N_{cdi}$ the total number of combined decision points and $I_j$ denoting the $j$th index within $I$.

From Lemma \ref{Lemma:LemmaA1}, two paths are identical if and only if
$\ppath_1[I_j]=\ppath_2[I_j]$ at any $I_j$, $j=1,\ldots, N_{cdi}$.

For any $j=1,\ldots, N_{cdi}$, since each starting point of either $\ppath_1$ or $\ppath_2$ is chosen independently with equal probability
$\mathbb{P}(\ppath_1[I_j]=\ppath_2[I_j])= \frac{1}{N_{tot}}$.
In addition
\begin{align}
\mathbb{P}(\ppath_1[I_j]=\ppath_2[I_j],j=1,\ldots,N_{cdi}(\ppath_1,\ppath_2))&=\displaystyle\prod_{j=1}^{N_{cdi}(\ppath_1,\ppath_2)}\mathbb{P}(\ppath_1[I_j]=\ppath_2[I_j]) \nonumber \\
&=(\frac{1}{N_{tot}})^{N_{cdi}(\ppath_1,\ppath_2)} \nonumber
\end{align}.

Since $N_{cdi}(\ppath_1,\ppath_2)=lcm(\frac{N}{b_1},\frac{N}{b_2})$, the probability of $\ppath_1$ and $\ppath_2$ being identical is $(\frac{1}{N_{tot}})^{lcm(\frac{N}{b_1},\frac{N}{b_2})}$.
\end{proof}

Next, we consider the stationary block bootstrap resampling, in which the blocksizes are randomly generated from a
shifted geometric distribution.

\begin{property}[Properties of a Geometric Distribution]\label{Lemma:individual_block_prob}
Suppose the integer $m>0$ is drawn from a shifted geometric distribution, with $\mathbb{E}[m] = 1/p$, then
\begin{eqnarray}
\mathbb{P}[ m = k] & = & (1-p)^{k-1} p \nonumber \\
\mathbb{P} [ m \geq k] & = & (1-p)^{k-1} ~. \label{prop_geom}
\end{eqnarray}
We rewrite equation (\ref{prop_geom}) in a form amenable to manipulation. Let
\begin{eqnarray}
(1-p) & = & e^{- \lambda}~,
\end{eqnarray}
so that equation (\ref{prop_geom}) becomes
\begin{eqnarray}
\mathbb{P} [ m=k] &= &e^{-\lambda k}(e^{\lambda} -1 ) \nonumber \\
\mathbb{P} [ m \geq k] & = & e^{ -\lambda (k-1)} \nonumber \\
\lambda &= &- \log[ 1 - p] ~. \label{prop_geom_2}
\end{eqnarray}
Denote the expected blocksize by $\hat{b}$, then in our case, $p = 1/\hat{b}$,
and consequently
\begin{eqnarray}
\lambda = - \log\biggl[ 1 - \frac{1}{\hat{b}} \biggr] ~. \label{lambda_to_hatb}
\end{eqnarray}
\end{property}

\begin{lemma}\label{Lemma:di_prob}
Suppose $[I_1,\ldots,I_k]$ be the decision index list of a block resampled path of length $N$ with the expected blocksize of $\hat{b}$.
Then the probability of the decision index list $[I_1,\ldots,I_k]$ occurring is
$e^{-\lambda (N-1)}(e^{\lambda}-1)^{k-1}$, with $\lambda = - \log[ 1 - \frac{1}{\hat{b}} ]$.
\end{lemma}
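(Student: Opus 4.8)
The plan is to translate the event ``the decision index list equals $[I_1,\ldots,I_k]$'' into a statement purely about the sequence of block sizes drawn from the shifted geometric distribution, and then exploit the mutual independence of those draws together with the closed forms already recorded in Property~\ref{Lemma:individual_block_prob}.

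First I would observe that, with the convention $I_{k+1}=N+1$ and the fixed value $I_1=1$, a decision index list $[I_1,\ldots,I_k]$ is equivalent data to the list of block sizes $b_i = I_{i+1}-I_i$ for $i=1,\ldots,k$. Crucially, the starting location of each block within the original length-$N_{tot}$ series plays no role: the decision index list only records where block boundaries fall \emph{inside} the resampled path, and the starting indices are drawn independently of the block lengths. Hence they marginalize out, and the probability in question is governed entirely by the geometrically distributed block lengths.

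The key step, and the one requiring the most care, is the asymmetric treatment of the final block. For each of the first $k-1$ blocks the drawn length must equal $b_i$ exactly, since the next block's boundary at $I_{i+1}$ is observed; this contributes $\mathbb{P}[m=b_i]=e^{-\lambda b_i}(e^{\lambda}-1)$. The terminal block is different: because sampling halts the instant the path reaches length $N$, its drawn length need only be \emph{at least} $b_k$, as any overshoot is simply truncated and yields the same decision index list. It therefore contributes the tail probability $\mathbb{P}[m\ge b_k]=e^{-\lambda(b_k-1)}$ rather than a point mass. Distinguishing this terminal tail event from the interior point events is the main subtlety to get right.

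By independence the joint probability is $(e^{\lambda}-1)^{k-1}\,e^{-\lambda\sum_{i=1}^{k-1}b_i}\,e^{-\lambda(b_k-1)}$, and I would finish by telescoping the exponent. Since $\sum_{i=1}^{k-1}b_i = I_k-I_1 = I_k-1$ and $b_k-1 = (N+1-I_k)-1 = N-I_k$, the total exponent of $e^{-\lambda}$ is $(I_k-1)+(N-I_k)=N-1$, collapsing the expression to $e^{-\lambda(N-1)}(e^{\lambda}-1)^{k-1}$, exactly as claimed. Note the answer depends only on $N$ and the number of blocks $k$, not on the individual $b_i$, which is a useful sanity check on the telescoping.
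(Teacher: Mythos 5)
Your proof is correct and takes essentially the same approach as the paper's: reduce the decision index list to the sequence of block lengths, assign the point-mass probability $e^{-\lambda(I_{j+1}-I_j)}(e^{\lambda}-1)$ to each of the first $k-1$ blocks and the tail probability $e^{-\lambda(I_{k+1}-I_k-1)}$ to the truncated final block, then multiply by independence and telescope the exponent to $N-1$. The subtlety you flag (the last block only needs length \emph{at least} $I_{k+1}-I_k$ because sampling stops at length $N$) is exactly the case distinction in the paper's displayed formula, so the two arguments coincide.
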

\begin{proof}
By definition, $I_{j+1}>I_j$ for any $j=1,\ldots, k-1$,
and $I_{k+1}=N+1$.
The probability of path $\ppath$ having $[I_1,\ldots,I_k]$ as the decision index list
is equal to the probability of path $\ppath$ having the first block with blocksize of $I_2-I_1$, $\ldots$, the $k$th block with blocksize of $I_{k+1}-I_{k}$. Denote the blocks of path $\ppath$ as $B_1,\ldots,B_k$. According to Properties \ref{Lemma:individual_block_prob},
\[
\mathbb{P}(blocksize(B_j)=I_{j+1}-I_j) =
\begin{cases}
e^{-\lambda (I_{j+1}-I_j)}(e^{\lambda}-1), & \text{if } j <k \\
e^{-\lambda (I_{k+1}-I_k-1)}, & \text{if } j=k
\end{cases}
\]
The probability of path $\ppath$ having $[I_1,\ldots,I_k]$ as the decision index
list is
$$\displaystyle\prod_{j=1}^k \mathbb{P}( blocksize(B_j) = I_{j+1}-I_j )
=
e^{-\lambda (I_{k+1}-I_1-1)}
(e^{\lambda }-1)^{k-1}
=
e^{- \lambda (N-1)}(e^{\lambda }-1)^{k-1}.$$
\end{proof}

Lemma \ref{Lemma:di_prob} shows that the probability of a stationary block resampled path $\ppath$ with an expected blocksize of $\hat{b}$ having a decision index list is uniquely determined by the expected blocksize $\hat{b}$, the path length $N$, and the length $k$ of the decision index list.

\begin{lemma}\label{Lemma:cdi_prob}
Suppose two paths $\ppath_1$ and $\ppath_2$ of the length $N$ are generated by stationary block
bootstrap resampling with the expected blocksizes of $\hat{b}_1$ and $\hat{b}_2$ respectively.
Then
\begin{eqnarray}
\mathbb{P}(N_{cdi}(\ppath_1,\ppath_2)=k)
& = &\binom{N-1}{k-1}
e^{-( \lambda_1 + \lambda_2 )(N-1)}
( e^{ \lambda_1 + \lambda_2 } - 1 )^{k-1}
\nonumber \\
& & \lambda_1 = - \log\biggl[ 1 - \frac{1}{\hat{b}_1} \biggr] ~;~
\lambda_2 = - \log \biggl[ 1 - \frac{1}{\hat{b}_2} \biggr] ~.
\end{eqnarray}
\end{lemma}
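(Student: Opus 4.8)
The plan is to reduce the random quantity $N_{cdi}(\ppath_1,\ppath_2)$ to a sum of independent per-position indicators and then recognize its law as a binomial distribution. The starting observation is that index $1$ is the first decision index of every path (by definition $I_1=1$), so it always belongs to the combined list; the only randomness concerns the positions $i\in\{2,\ldots,N\}$, and such a position lies in the combined decision index list precisely when a new block begins at $i$ in $\ppath_1$ or in $\ppath_2$.

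First I would extract from Lemma \ref{Lemma:di_prob} the structural fact that, within a single stationary-block-resampled path of expected blocksize $\hat b$, the events $\{\text{a new block starts at position } i\}$ for $i=2,\ldots,N$ are i.i.d. Bernoulli with success probability $p=1/\hat b = 1-e^{-\lambda}$. This is exactly what Lemma \ref{Lemma:di_prob} encodes: rewriting its probability through $e^{-\lambda}=1-p$ gives $e^{-\lambda(N-1)}(e^{\lambda}-1)^{k-1}=p^{k-1}(1-p)^{N-k}$, which depends on a decision index list only through its length $k$. Since there are $\binom{N-1}{k-1}$ decision lists of length $k$ (choosing the $k-1$ block starts among positions $2,\ldots,N$), this is precisely the law of $N-1$ independent Bernoulli$(p)$ trials, with the truncation of the final block already absorbed into the $\mathbb{P}[m\ge\cdot]$ factor used in Lemma \ref{Lemma:di_prob}.

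Next I would combine the two paths. Let $A_i^{(1)}$ and $A_i^{(2)}$ denote the block-start indicators at position $i$ for $\ppath_1$ and $\ppath_2$ respectively, for $i=2,\ldots,N$. By the previous step $\{A_i^{(1)}\}$ are i.i.d. Bernoulli$(p_1)$ and $\{A_i^{(2)}\}$ are i.i.d. Bernoulli$(p_2)$ with $p_j=1/\hat b_j$, and the two families are independent because the paths are generated independently. Position $i$ contributes to the combined decision index list iff $\max(A_i^{(1)},A_i^{(2)})=1$, an event of probability
\begin{equation}
q = 1-(1-p_1)(1-p_2) = 1-e^{-(\lambda_1+\lambda_2)},
\end{equation}
and these events are independent across $i$. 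Hence $N_{cdi}(\ppath_1,\ppath_2)=1+\sum_{i=2}^{N}\mathbf{1}\{\max(A_i^{(1)},A_i^{(2)})=1\}$, so that $N_{cdi}-1$ is Binomial$(N-1,q)$.

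Finally I would write out the binomial PMF and simplify. We get
\begin{equation}
\mathbb{P}(N_{cdi}=k) = \binom{N-1}{k-1}\,q^{k-1}(1-q)^{N-k},
\end{equation}
and substituting $1-q=e^{-(\lambda_1+\lambda_2)}$ together with the identity $q^{k-1}=e^{-(\lambda_1+\lambda_2)(k-1)}(e^{\lambda_1+\lambda_2}-1)^{k-1}$ collapses the exponents to $e^{-(\lambda_1+\lambda_2)(N-1)}(e^{\lambda_1+\lambda_2}-1)^{k-1}$, which is the claimed expression. I expect the only genuine subtlety to be the first step, namely justifying rigorously that, despite the fixed path length and the truncated terminal block, the per-position block-start indicators are truly i.i.d. Bernoulli; once that fact is read off from Lemma \ref{Lemma:di_prob}, the independence across the two paths and the algebraic simplification are routine.
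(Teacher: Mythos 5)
Your proof is correct, but it takes a genuinely different route from the paper's. You reinterpret Lemma \ref{Lemma:di_prob} as saying that the block-start indicators at positions $2,\ldots,N$ form an i.i.d.\ Bernoulli$(1/\hat b)$ process --- which is rigorous, since the probability $p^{k-1}(1-p)^{N-k}$ you extract matches the product measure on every configuration of starts, with the truncated last block correctly accounted for --- and then observe that the combined decision index list is the union of two independent Bernoulli processes, so $N_{cdi}-1$ is Binomial$\bigl(N-1,\,1-e^{-(\lambda_1+\lambda_2)}\bigr)$ and the PMF simplifies to the stated formula. The paper instead fixes a particular combined index list $[I_1,\ldots,I_k]$, decomposes it by the number $v$ of indices shared by both paths and the number $i$ of indices belonging only to $\ppath_1$, writes the probability as the double sum $\sum_{v=1}^k\binom{k-1}{v-1}\sum_{i=0}^{k-v}\binom{k-v}{i}f(\hat b_1,v+i)\,f(\hat b_2,k-i)$, collapses it via two applications of the binomial theorem to $e^{-(\lambda_1+\lambda_2)(N-1)}(e^{\lambda_1+\lambda_2}-1)^{k-1}$, and finally multiplies by the $\binom{N-1}{k-1}$ possible lists. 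Your argument is shorter and more conceptual: it explains \emph{why} the answer is a binomial law (union of independent per-position events) rather than discovering it through algebra, and the overlap bookkeeping that the paper enumerates explicitly is handled implicitly by the inclusion--exclusion inside $1-(1-p_1)(1-p_2)$. What the paper's computation buys is self-containedness at the level of whole decision index lists --- it never needs to re-derive the per-position Bernoulli representation of the stationary bootstrap --- but both proofs rest on the same two pillars: Lemma \ref{Lemma:di_prob} and the independence of the two resampled paths.
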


\begin{proof}

Let $f(\hat{b},n)$ denote the occurrence probability of a stationary block resampled path of
length $N$ with the expected blocksize of $\hat{b}$ and a decision index list of length $n$ (this is given
by Lemma \ref{Lemma:di_prob}).

Suppose $[I_1,\ldots,I_k]$ is a combined index list of any two paths $\ppath_1$ and $\ppath_2$.
Let $v$ be the number of overlapped indices and $i$ be the number of non-overlapped indices for $\ppath_1$ respectively, corresponding to $[I_1,\ldots,I_k]$.

Enumerating the possible values for $v$, the number of overlapped indices and values for $i$, the number non-overlapped indices in $\ppath_1$,
the probability of a combined decision index list $[I_1,\ldots,I_k]$ occurring equals
\begin{equation}
\sum_{v=1}^k\Big(\binom{k-1}{v-1}\sum_{i=0}^{k-v}\binom{k-v}{i}f(\hat{b}_1,v+i)f(\hat{b}_2,k-i)\Big).\label{Eq:Eq_cdi}
\end{equation}

Note that
\begin{align*}
&\sum_{v=1}^k\Big(\binom{k-1}{v-1}\sum_{i=0}^{k-v}\binom{k-v}{i}f(\hat{b}_1,v+i)f(\hat{b}_2,k-i)\Big)\\
=&\sum_{v=1}^k\Big(\binom{k-1}{v-1}\sum_{i=0}^{k-v}\binom{k-v}{i}e^{-\lambda_1(N-1)}(e^{\lambda_1}-1)^{v+i-1}e^{-\lambda_2(N-1)}(e^{\lambda_2}-1)^{k-i-1}\Big)\\
=&e^{-(\lambda_1+\lambda_2)(N-1)}\sum_{v=1}^k\Big(\binom{k-1}{v-1}\big(e^{\lambda_1+\lambda_2}-e^{\lambda_1}-e^{\lambda_2}+1\big)^{v-1}\Big(\sum_{i=0}^{k-v}\binom{k-v}{i}(e^{\lambda_1}-1)^{i}(e^{\lambda_2}-1)^{k-v-i}\Big)\Big)\\
=&e^{-(\lambda_1+\lambda_2)(N-1)}\sum_{v=1}^k\Big(\binom{k-1}{v-1}\big(e^{\lambda_1+\lambda_2}-e^{\lambda_1}-e^{\lambda_2}+1\big)^{v-1}\Big(e^{\lambda_1}+e^{\lambda_2}-2\Big)^{k-v}\Big)\\
=&e^{-(\lambda_1+\lambda_2)(N-1)}(e^{\lambda_1+\lambda_2}-1)^{k-1}\\
\end{align*}

Since there are $\binom{N-1}{k-1}$ combinations of the decision index list of length $k$, we conclude
$$\mathbb{P}(N_{cdi}(\ppath_1,\ppath_2)=k)=\binom{N-1}{k-1}e^{-(\lambda_1+\lambda_2)(N-1)}(e^{\lambda_1+\lambda_2}-1)^{k-1}.$$
\end{proof}

Using Lemma \ref{Lemma:LemmaA1} and Lemma \ref{Lemma:cdi_prob}, we establish the probability of two paths generated with stationary block bootstrap resampling being identical.

~\\
\noindent
\textbf{THEOREM \ref{thm:stb}.} \label{thm:stba}
Let $\ppath_1$ and $\ppath_2$ be two paths of the length $N$ generated from the stationary block bootstrap resampling from a sequence of $N_{tot}$ distinct observations with the expected blocksizes of $\hat{b}_1$ and $\hat{b}_2$ respectively. The probability of $\ppath_1$ and $\ppath_2$ being identical is

$$\frac{1}{N_{tot}}\Big(\big(1-\frac{1}{\hat{b}_1}\big)\big(1-\frac{1}{\hat{b}_2}\big)+\frac{\frac{1}{\hat{b}_1}+\frac{1}{\hat{b}_1}-\frac{1}{\hat{b}_1\hat{b}_2}}{N_{tot}}\Big)^{N-1}.$$

~\\

\begin{proof}
Using Lemma \ref{Lemma:LemmaA1}, $\ppath_1=\ppath_2$ if and only if the observations from $\ppath_1$ and $\ppath_2$ are equal at each of the index in the combined decision index list.
Thus
$$
\mathbb{P}\big(\ppath_1=\ppath_2|N_{cdi}(\ppath_1,\ppath_2)=k\big)=\left(\frac{1}
{N_{tot}}\right)^{k}.
$$

Additionally, following Lemma \ref{Lemma:cdi_prob}, we have
\begin{align*}
\mathbb{P}(\ppath_1=\ppath_2)
=&\sum_{k=1}^{N}\mathbb{P}\big(N_{cdi}(\ppath_1,\ppath_2)=k\big)\cdot\mathbb{P}\big(\ppath_1=\ppath_2|N_{cdi}(\ppath_1,\ppath_2)=k\big)\\
=&\sum_{k=1}^{N}\binom{N-1}{k-1}e^{-(\lambda_1+\lambda_2)(N-1)}(e^{\lambda_1+\lambda_2}-1)^{k-1}(\frac{1}{N_{tot}})^k\\
=&\frac{e^{-(\lambda_1+\lambda_2)(N-1)}}{N_{tot}}\sum_{k=1}^{N}\binom{N-1}{k-1}\Big(\frac{e^{\lambda_1+\lambda_2}-1}{N_{tot}}\Big)^{k-1}\\
=&\frac{e^{-(\lambda_1+\lambda_2)(N-1)}}{N_{tot}}\Big(1+\frac{e^{\lambda_1+\lambda_2}-1}{N_{tot}}\Big)^{N-1}\\
=&\frac{1}{N_{tot}}\Big(e^{-(\lambda_1+\lambda_2)}+\frac{1-e^{-(\lambda_1+\lambda_2)}}{N_{tot}}\Big)^{N-1}\\
=&\frac{1}{N_{tot}}\Big(\big(1-\frac{1}{\hat{b}_1}\big)\big(1-\frac{1}{\hat{b}_2}\big)+\frac{\frac{1}{\hat{b}_1}+\frac{1}{\hat{b}_1}-\frac{1}{\hat{b}_1\hat{b}_2}}{N_{tot}}\Big)^{N-1}.
\end{align*}
\end{proof}

\subsection{Additional Robustness Testing Results}
As mentioned in section \ref{sec:mkt}, we only showed terminal wealth statistics for the strategy trained with bootstrap resampled with expected blocksize $\hat{b}=0.5$ years. Here we show the testing performance of strategies trained on bootstrap data with different blocksizes on different testing sets (bootstrap resampled from different blocksizes). The results show that the adaptive strategy consistently outperforms the constant proportion strategy.

\begin{table}[H]
{\scriptsize
\begin{center}
\begin{tabular}{lccccc} \hline
\multicolumn{6}{c}{ Test Results: Market Cap Weighted} \\ \hline
Strategy & $E(W_T)$ & $std(W_T)$& $median(W_T)$ & $Pr(W_T<median(W_T^{CP}))$ & $Pr(W_T<median(W_T^{NN}))$ \\ \hline
\multicolumn{6}{c}{Expected Blocksize $\hat{b}=0.5$ years} \\ \hline
constant proportion($p=.5$) & 678 & 286 & 623.07 & 0.50 & 0.81 \\
NN adaptive & 949 & 478 & 874.84 & 0.27 & 0.50 \\\hline
\multicolumn{6}{c}{Expected Blocksize $\hat{b}=1$ years} \\ \hline
constant proportion($p=.5$) & 674 & 273 & 623.99 & 0.50 & 0.81 \\
NN adaptive & 942 & 459 & 878.60 & 0.27 & 0.50 \\\hline
\multicolumn{6}{c}{Expected Blocksize $\hat{b}=2$ years} \\ \hline
constant proportion($p=.5$) & 676 & 263 & 631.06 & 0.50 & 0.81 \\
NN adaptive & 945 & 438 & 882.74 & 0.26 & 0.50 \\\hline
\multicolumn{6}{c}{Expected Blocksize $\hat{b}=5$ years} \\ \hline
constant proportion($p=.5$) & 669 & 244 & 626.11 & 0.50 & 0.83 \\
NN adaptive & 940 & 404 & 881.87 & 0.23 & 0.50 \\\hline
\multicolumn{6}{c}{Expected Blocksize $\hat{b}=8$ years} \\ \hline
constant proportion($p=.5$) & 669 & 233 & 632.24 & 0.50 & 0.84 \\
NN adaptive & 945 & 388 & 892.84 & 0.22 & 0.50 \\\hline
\multicolumn{6}{c}{Expected Blocksize $\hat{b}=10$ years} \\ \hline
constant proportion($p=.5$) & 667 & 223 & 635.29 & 0.50 & 0.85 \\
NN adaptive & 942 & 373 & 895.88 & 0.22 & 0.50 \\\hline
\end{tabular}
\end{center}
\caption{Trained on bootstrap resampled data with $\hat{b}=1$ years} \label{tb_h:trn1robust}
}
\end{table}

\begin{table}[H]
{\scriptsize
\begin{center}
\begin{tabular}{lccccc} \hline
\multicolumn{6}{c}{ Test Results: Market Cap Weighted} \\ \hline
Strategy & $E(W_T)$ & $std(W_T)$& $median(W_T)$ & $Pr(W_T<median(W_T^{CP}))$ & $Pr(W_T<median(W_T^{NN}))$ \\ \hline
\multicolumn{6}{c}{Expected Blocksize $\hat{b}=0.5$ years} \\ \hline
constant proportion($p=.5$) & 678 & 286 & 623.07 & 0.50 & 0.83 \\
NN adaptive & 962 & 491 & 903.07 & 0.27 & 0.50 \\\hline
\multicolumn{6}{c}{Expected Blocksize $\hat{b}=1$ years} \\ \hline
constant proportion($p=.5$) & 674 & 273 & 623.99 & 0.50 & 0.83 \\
NN adaptive & 954 & 470 & 905.02 & 0.27 & 0.50 \\\hline
\multicolumn{6}{c}{Expected Blocksize $\hat{b}=2$ years} \\ \hline
constant proportion($p=.5$) & 676 & 263 & 631.06 & 0.50 & 0.84 \\
NN adaptive & 958 & 446 & 912.31 & 0.26 & 0.50 \\\hline
\multicolumn{6}{c}{Expected Blocksize $\hat{b}=5$ years} \\ \hline
constant proportion($p=.5$) & 669 & 244 & 626.11 & 0.50 & 0.85 \\
NN adaptive & 954 & 409 & 914.34 & 0.23 & 0.50 \\\hline
\multicolumn{6}{c}{Expected Blocksize $\hat{b}=8$ years} \\ \hline
constant proportion($p=.5$) & 669 & 233 & 632.24 & 0.50 & 0.87 \\
NN adaptive & 961 & 392 & 928.89 & 0.22 & 0.50 \\\hline
\multicolumn{6}{c}{Expected Blocksize $\hat{b}=10$ years} \\ \hline
constant proportion($p=.5$) & 667 & 223 & 635.29 & 0.50 & 0.88 \\
NN adaptive & 961 & 380 & 930.15 & 0.21 & 0.50 \\\hline
\end{tabular}
\end{center}
\caption{Trained on bootstrap resampled data with $\hat{b}=2$ years} \label{tb_h:trn2robust}
}
\end{table}

\begin{table}[H]
{\scriptsize
\begin{center}
\begin{tabular}{lccccc} \hline
\multicolumn{6}{c}{ Test Results: Market Cap Weighted} \\ \hline
Strategy & $E(W_T)$ & $std(W_T)$& $median(W_T)$ & $Pr(W_T<median(W_T^{CP}))$ & $Pr(W_T<median(W_T^{NN}))$ \\ \hline
\multicolumn{6}{c}{Expected Blocksize $\hat{b}=0.5$ years} \\ \hline
constant proportion($p=.5$) & 678 & 286 & 623.07 & 0.50 & 0.86 \\
NN adaptive & 995 & 495 & 963.03 & 0.26 & 0.50 \\\hline
\multicolumn{6}{c}{Expected Blocksize $\hat{b}=1$ years} \\ \hline
constant proportion($p=.5$) & 674 & 273 & 623.99 & 0.50 & 0.87 \\
NN adaptive & 988 & 478 & 963.28 & 0.25 & 0.50 \\\hline
\multicolumn{6}{c}{Expected Blocksize $\hat{b}=2$ years} \\ \hline
constant proportion($p=.5$) & 676 & 263 & 631.06 & 0.50 & 0.88 \\
NN adaptive & 994 & 458 & 973.65 & 0.25 & 0.50 \\\hline
\multicolumn{6}{c}{Expected Blocksize $\hat{b}=5$ years} \\ \hline
constant proportion($p=.5$) & 669 & 244 & 626.11 & 0.50 & 0.89 \\
NN adaptive & 997 & 427 & 976.51 & 0.22 & 0.50 \\\hline
\multicolumn{6}{c}{Expected Blocksize $\hat{b}=8$ years} \\ \hline
constant proportion($p=.5$) & 669 & 233 & 632.24 & 0.50 & 0.90 \\
NN adaptive & 1011 & 415 & 993.88 & 0.21 & 0.50 \\\hline
\multicolumn{6}{c}{Expected Blocksize $\hat{b}=10$ years} \\ \hline
constant proportion($p=.5$) & 667 & 223 & 635.29 & 0.50 & 0.92 \\
NN adaptive & 1015 & 409 & 996.57 & 0.20 & 0.50 \\\hline
\end{tabular}
\end{center}
\caption{Trained on bootstrap resampled data with $\hat{b}=5$ years} \label{tb_h:trn5robust}
}
\end{table}

\begin{table}[H]
{\scriptsize
\begin{center}
\begin{tabular}{lccccc} \hline
\multicolumn{6}{c}{ Test Results: Market Cap Weighted} \\ \hline
Strategy & $E(W_T)$ & $std(W_T)$& $median(W_T)$ & $Pr(W_T<median(W_T^{CP}))$ & $Pr(W_T<median(W_T^{NN}))$ \\ \hline
\multicolumn{6}{c}{Expected Blocksize $\hat{b}=0.5$ years} \\ \hline
constant proportion($p=.5$) & 678 & 286 & 623.07 & 0.50 & 0.86 \\
NN adaptive & 980 & 480 & 945.12 & 0.25 & 0.50 \\\hline
\multicolumn{6}{c}{Expected Blocksize $\hat{b}=1$ years} \\ \hline
constant proportion($p=.5$) & 674 & 273 & 623.99 & 0.50 & 0.86 \\
NN adaptive & 973 & 464 & 947.99 & 0.25 & 0.50 \\\hline
\multicolumn{6}{c}{Expected Blocksize $\hat{b}=2$ years} \\ \hline
constant proportion($p=.5$) & 676 & 263 & 631.06 & 0.50 & 0.87 \\
NN adaptive & 979 & 443 & 957.32 & 0.25 & 0.50 \\\hline
\multicolumn{6}{c}{Expected Blocksize $\hat{b}=5$ years} \\ \hline
constant proportion($p=.5$) & 669 & 244 & 626.11 & 0.50 & 0.88 \\
NN adaptive & 981 & 412 & 959.86 & 0.21 & 0.50 \\\hline
\multicolumn{6}{c}{Expected Blocksize $\hat{b}=8$ years} \\ \hline
constant proportion($p=.5$) & 669 & 233 & 632.24 & 0.50 & 0.90 \\
NN adaptive & 994 & 399 & 976.44 & 0.21 & 0.50 \\\hline
\multicolumn{6}{c}{Expected Blocksize $\hat{b}=10$ years} \\ \hline
constant proportion($p=.5$) & 667 & 223 & 635.29 & 0.50 & 0.91 \\
NN adaptive & 996 & 390 & 980.07 & 0.20 & 0.50 \\\hline
\end{tabular}
\end{center}
\caption{Trained on bootstrap resampled data with $\hat{b}=8$ years} \label{tb_h:trn8robust}
}
\end{table}

\begin{table}[H]
{\scriptsize
\begin{center}
\begin{tabular}{lccccc} \hline
\multicolumn{6}{c}{ Test Results: Market Cap Weighted} \\ \hline
Strategy & $E(W_T)$ & $std(W_T)$& $median(W_T)$ & $Pr(W_T<median(W_T^{CP}))$ & $Pr(W_T<median(W_T^{NN}))$ \\ \hline
\multicolumn{6}{c}{Expected Blocksize $\hat{b}=0.5$ years} \\ \hline
constant proportion($p=.5$) & 678 & 286 & 623.07 & 0.50 & 0.84 \\
NN adaptive & 963 & 468 & 920.86 & 0.25 & 0.50 \\\hline
\multicolumn{6}{c}{Expected Blocksize $\hat{b}=1$ years} \\ \hline
constant proportion($p=.5$) & 674 & 273 & 623.99 & 0.50 & 0.84 \\
NN adaptive & 957 & 451 & 923.63 & 0.25 & 0.50 \\\hline
\multicolumn{6}{c}{Expected Blocksize $\hat{b}=2$ years} \\ \hline
constant proportion($p=.5$) & 676 & 263 & 631.06 & 0.50 & 0.85 \\
NN adaptive & 962 & 431 & 932.13 & 0.25 & 0.50 \\\hline
\multicolumn{6}{c}{Expected Blocksize $\hat{b}=5$ years} \\ \hline
constant proportion($p=.5$) & 669 & 244 & 626.11 & 0.50 & 0.87 \\
NN adaptive & 962 & 399 & 937.08 & 0.22 & 0.50 \\\hline
\multicolumn{6}{c}{Expected Blocksize $\hat{b}=8$ years} \\ \hline
constant proportion($p=.5$) & 669 & 233 & 632.24 & 0.50 & 0.88 \\
NN adaptive & 973 & 384 & 951.40 & 0.21 & 0.50 \\\hline
\multicolumn{6}{c}{Expected Blocksize $\hat{b}=10$ years} \\ \hline
constant proportion($p=.5$) & 667 & 223 & 635.29 & 0.50 & 0.90 \\
NN adaptive & 973 & 373 & 954.63 & 0.20 & 0.50 \\\hline
\end{tabular}
\end{center}
\caption{Trained on bootstrap resampled data with $\hat{b}=10$ years} \label{tb_h:trn10robust}
}
\end{table}

\subsection{Robustness: Distribution Comparison Based on Test Results From the Synthetic Model} \label{append:distr}
We observe from Figure \ref{fig:syn_hist} that the terminal wealth distributions of the adaptive strategy are consistently right-skewed and have similar shapes in training and testing, which indicates that the NN strategy similarly outperforms the constant proportion in both training and testing.

\begin{figure}[htp]
\centering
\begin{subfigure}[b]{0.45\textwidth}
\centering
\includegraphics[width=\textwidth]{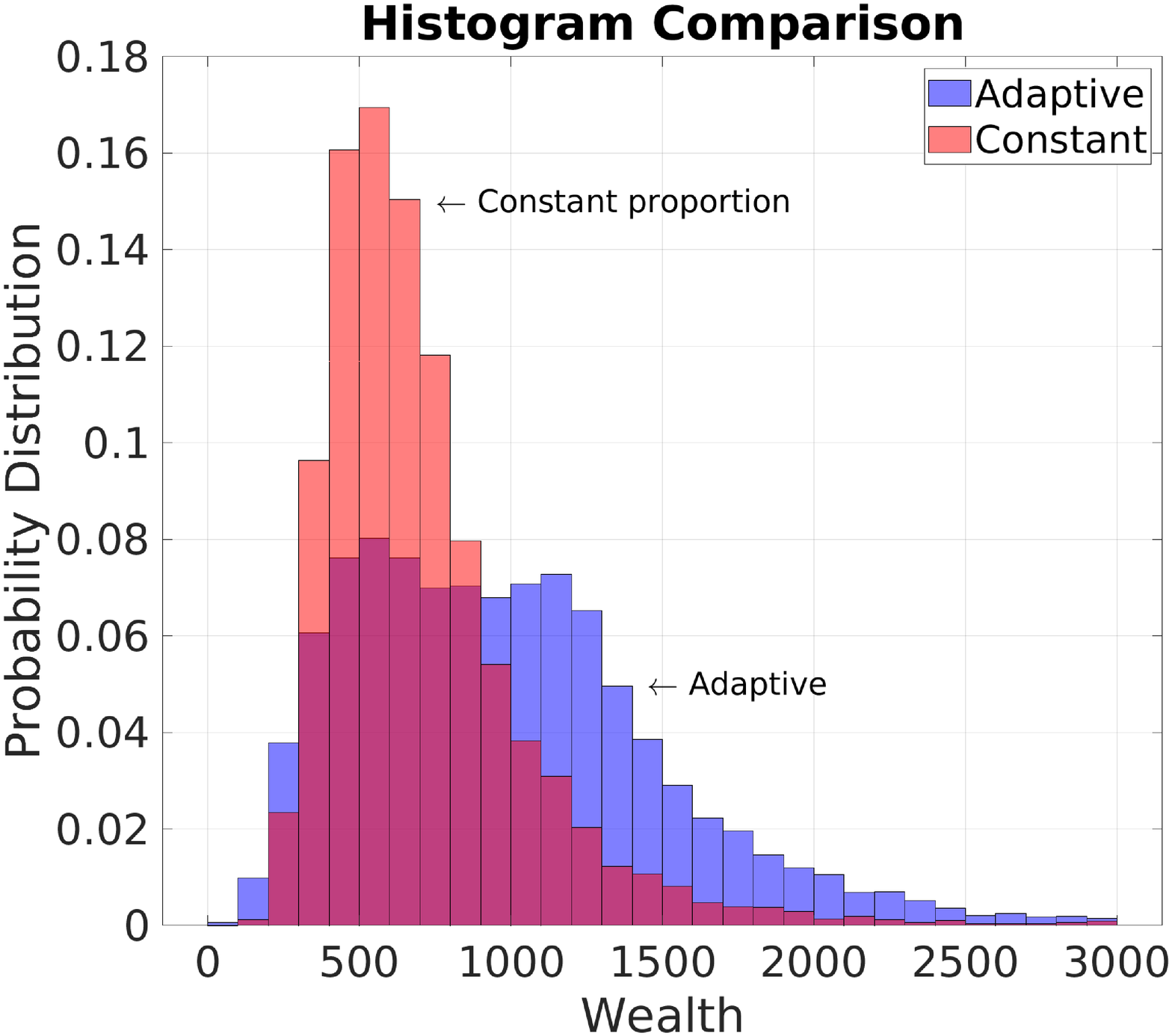}
\caption{Training on synthetics data}
\label{fig:syn_train_hist}
\end{subfigure}
\hfill
\begin{subfigure}[b]{0.45\textwidth}
\centering
\includegraphics[width=\textwidth]{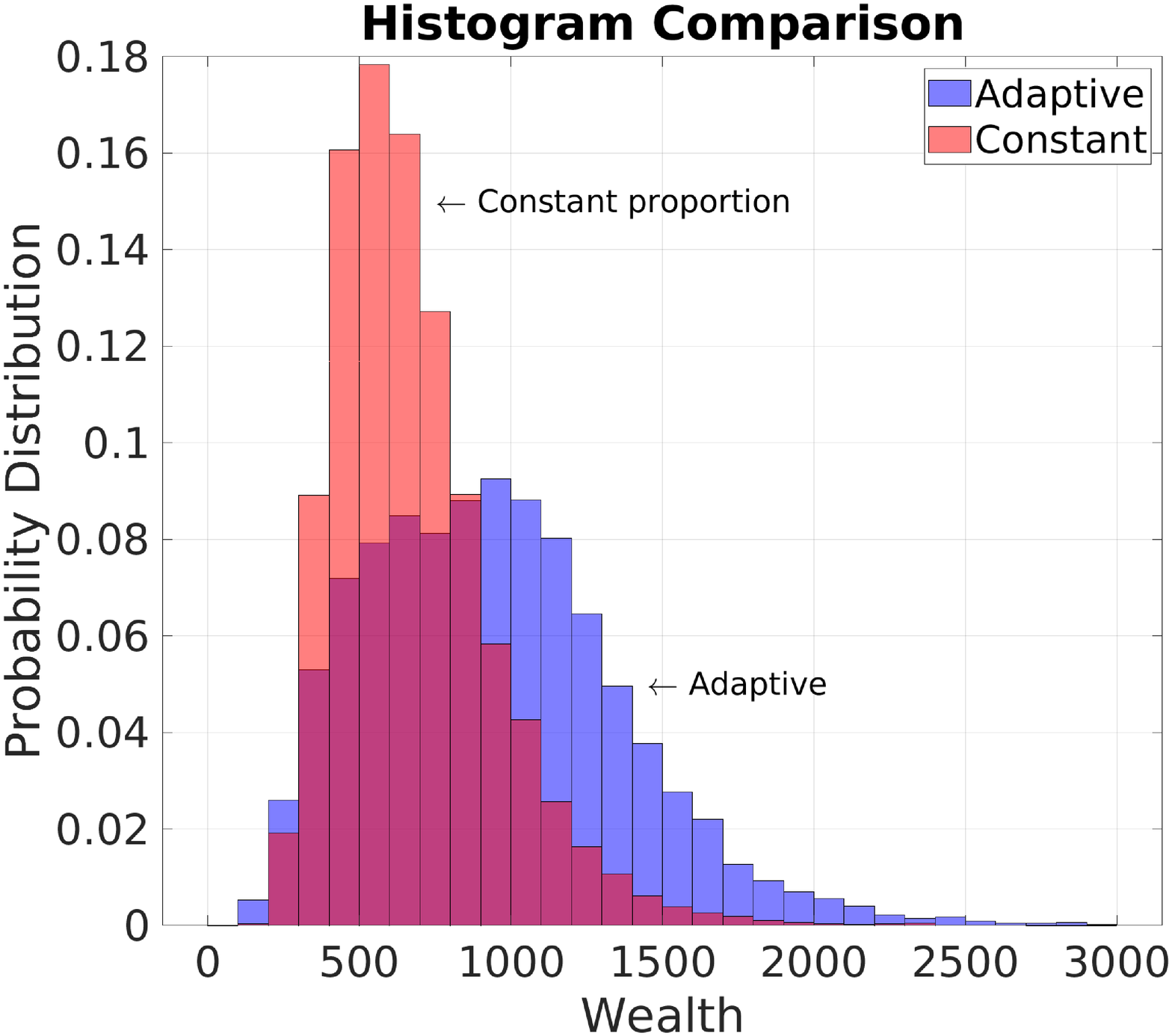}
\caption{Testing on bootstrap data with $\hat{b}$=0.5 years}
\label{fig:syn_test_hist}
\end{subfigure}
\caption{Histogram of terminal wealth. Model trained on synthetic data and tested on bootstrap resampled data with expected blocksize of $2$ years}
\label{fig:syn_hist}
\end{figure}

We also show the plot of the CDF of the wealth difference $W(T) - W_{50/50}(T)$ to give a more direct comparison between the adaptive strategy and constant proportion strategy on the same paths.

\begin{figure}[htp]
\centering
\begin{subfigure}[b]{0.45\textwidth}
\centering
\includegraphics[width=\textwidth]{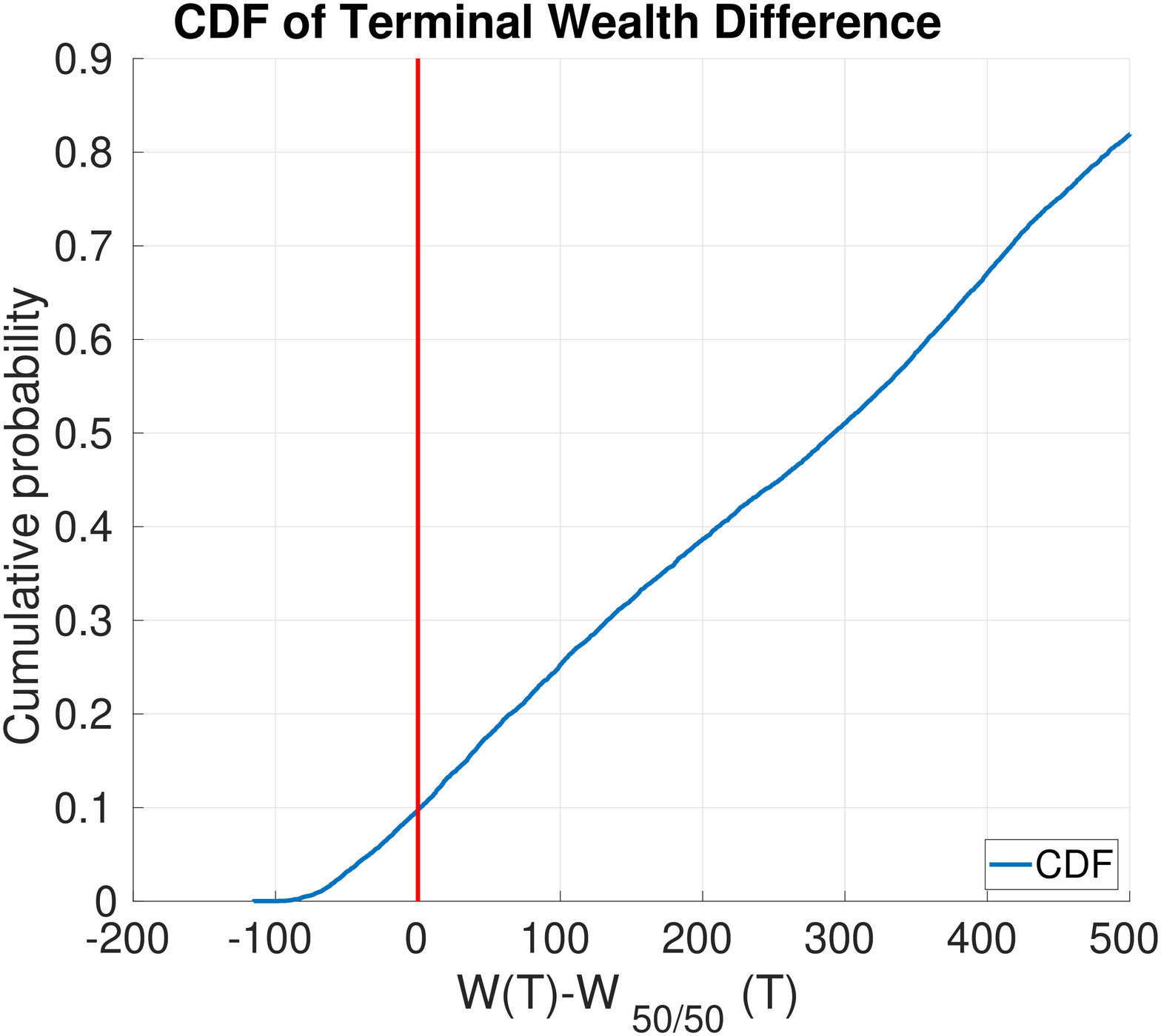}
\caption{Training on synthetics data}
\label{fig:syn_train_wdiff_cdf}
\end{subfigure}
\hfill
\begin{subfigure}[b]{0.45\textwidth}
\centering
\includegraphics[width=\textwidth]{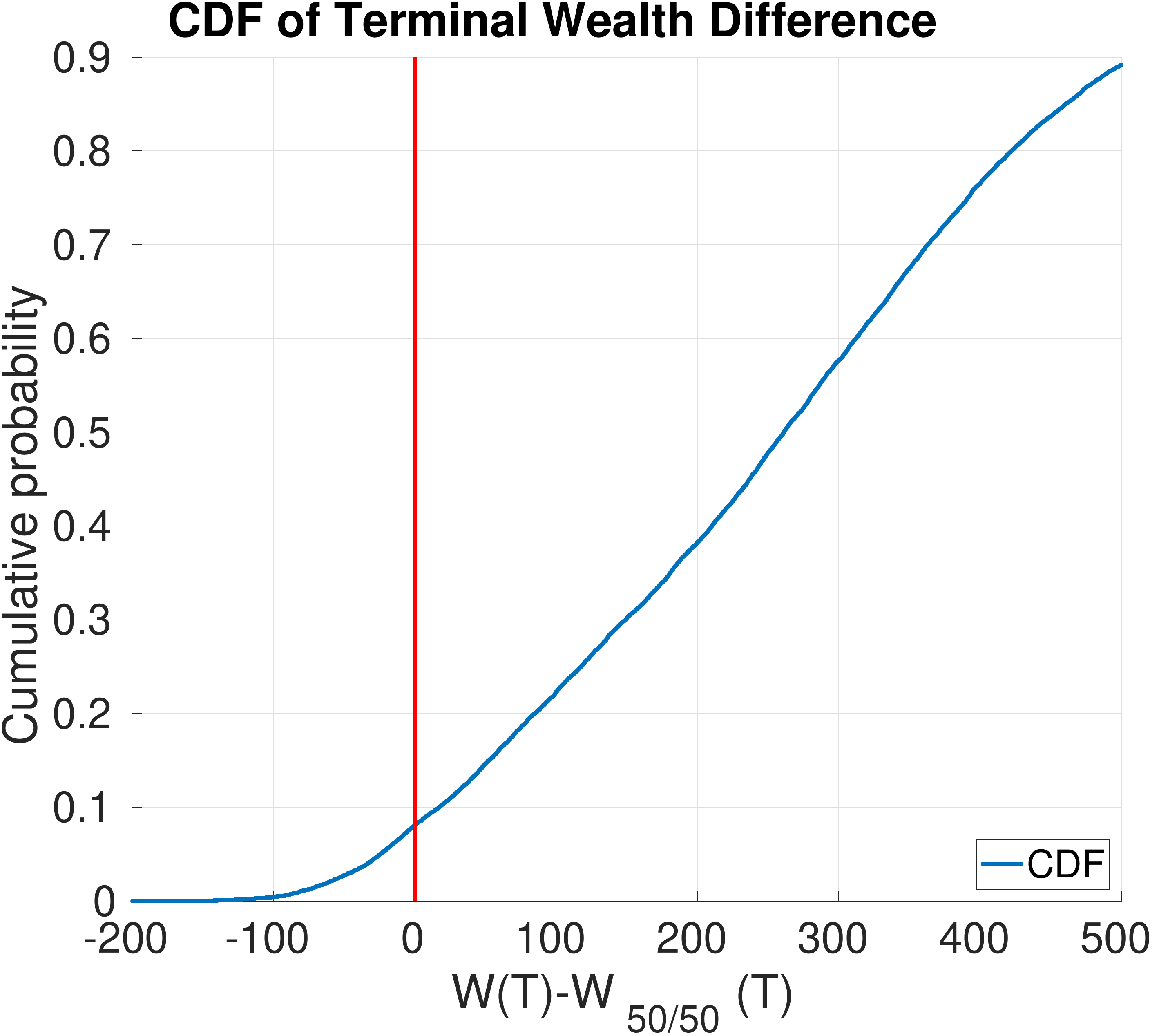}
\caption{Testing on bootstrap data with $\hat{b}$=2 years}
\label{fig:syn_test_wdiff_cdf}
\end{subfigure}
\caption{CDF of terminal wealth difference $W(T)-W_{50/50}(T)$}
\label{fig:syn_cdf}
\end{figure}

From Figure \ref{fig:syn_cdf} we can see that the probability of the adaptive strategy underperforming the constant proportion strategy is less than 10\% for both training and testing. When underperformance occurs, the scale of underperformance is small compared to the scale of potential outperformance. Therefore, we conclude that the adaptive strategy controls tail risks consistently in both training and testing, despite the fact that the training dataset is synthetically generated and the testing dataset is bootstrap resampled data.


\begin{thebibliography}{}

\bibitem[\protect\citeauthoryear{Al-Aradi and Jaimungal}{Al-Aradi and
  Jaimungal}{2018}]{alaradi_2018}
Al-Aradi, A. and S.~Jaimungal (2018).
\newblock Outperformance and tracking: dynamic asset allocation for active and
  passive portfolio management.
\newblock {\em Applied Mathematical Finance\/}~{\em 25(3)}, 268--294.

\bibitem[\protect\citeauthoryear{Alekseev and Sokolov}{Alekseev and
  Sokolov}{2016}]{Alekseev_2016}
Alekseev, A.~G. and M.~V. Sokolov (2016).
\newblock Benchmark-based evaluation of portfolio performance: a
  characterization.
\newblock {\em Annals of Finance\/}~{\em 12}, 409--440.

\bibitem[\protect\citeauthoryear{Alexander, Coleman, and Li}{Alexander
  et~al.}{2006}]{alexander2006minimizing}
Alexander, S., T.~F. Coleman, and Y.~Li (2006).
\newblock Minimizing {CVaR} and {VaR} for a portfolio of derivatives.
\newblock {\em Journal of Banking \& Finance\/}~{\em 30\/}(2), 583--605.

\bibitem[\protect\citeauthoryear{Basak, Shapiro, and Tepla}{Basak
  et~al.}{2006}]{basak_2006}
Basak, S., A.~Shapiro, and L.~Tepla (2006).
\newblock Risk management with benchmarking.
\newblock {\em Management Science\/}~{\em 52(4)}, 542--557.

\bibitem[\protect\citeauthoryear{Browne}{Browne}{1999}]{browne_1999_a}
Browne, S. (1999).
\newblock Beating a moving target: optimal portfolio strategies for
  outperforming a stochastic benchmark.
\newblock {\em Finance and Stochastics\/}~{\em 3}, 275--294.

\bibitem[\protect\citeauthoryear{Browne}{Browne}{2000}]{browne_2000}
Browne, S. (2000).
\newblock Risk-constrained dynamic active portfolio management.
\newblock {\em Management Science\/}~{\em 46(9)}, 1188--1199.

\bibitem[\protect\citeauthoryear{Coleman and Li}{Coleman and
  Li}{1996}]{coleman1996}
Coleman, T.~F. and Y.~Li (1996).
\newblock An interior, trust region approach for nonlinear minimization subject
  to bounds.
\newblock {\em SIAM Journal on Optimization\/}~{\em 6}, 418--445.

\bibitem[\protect\citeauthoryear{Cont, Mancini, et~al.}{Cont
  et~al.}{2011}]{cont2011nonparametric}
Cont, R., C.~Mancini, et~al. (2011).
\newblock Nonparametric tests for pathwise properties of semimartingales.
\newblock {\em Bernoulli\/}~{\em 17\/}(2), 781--813.

\bibitem[\protect\citeauthoryear{Dang and Forsyth}{Dang and
  Forsyth}{2014}]{dang-forsyth:2014a}
Dang, D.-M. and P.~A. Forsyth (2014).
\newblock Continuous time mean-variance optimal portfolio allocation under jump
  diffusion: a numerical impulse control approach.
\newblock {\em Numerical Methods for Partial Differential Equations\/}~{\em
  30}, 664--698.

\bibitem[\protect\citeauthoryear{Dang and Forsyth}{Dang and
  Forsyth}{2016}]{dang2016better}
Dang, D.-M. and P.~A. Forsyth (2016).
\newblock Better than pre-commitment mean-variance portfolio allocation
  strategies: A semi-self-financing {Hamilton--Jacobi--Bellman} equation
  approach.
\newblock {\em European Journal of Operational Research\/}~{\em 250\/}(3),
  827--841.

\bibitem[\protect\citeauthoryear{Davis and Lleo}{Davis and
  Lleo}{2008}]{davis_2008}
Davis, M. and S.~Lleo (2008).
\newblock Risk-sensitive benchmarked asset management.
\newblock {\em Quantitative Finance\/}~{\em 8(4)}, 415--426.

\bibitem[\protect\citeauthoryear{Forsyth and Vetzal}{Forsyth and
  Vetzal}{2019}]{ForsythVetzal2018a}
Forsyth, P. and K.~Vetzal (2019).
\newblock Optimal asset allocation for retirement savings: Deterministic vs.
  time consistent adaptive strategies.
\newblock {\em Applied Mathematical Finance\/}~{\em 26:1}, 1--37.

\bibitem[\protect\citeauthoryear{Graf}{Graf}{2017}]{graf:2017}
Graf, S. (2017).
\newblock Life-cycle funds: Much ado about nothing?
\newblock {\em European Journal of Finance\/}~{\em 23}, 974--998.

\bibitem[\protect\citeauthoryear{Graham}{Graham}{2003}]{graham:2003}
Graham, B. (2003).
\newblock {\em The Intelligent Investor}.
\newblock New York: HarperCollins.
\newblock Revised edition, forward by J. Zweig.

\bibitem[\protect\citeauthoryear{Gu, Kelly, and Xu}{Gu
  et~al.}{2018}]{Guetal:2018}
Gu, S., R.~Kelly, and D.~Xu (2018).
\newblock Empirical asset pricing via machine learning.
\newblock SSRN:3159577.

\bibitem[\protect\citeauthoryear{Hejazi and Jackson}{Hejazi and
  Jackson}{2016}]{Hejazi2016}
Hejazi, S. and K.~R. Jackson (2016).
\newblock A neural network approach to efficient valuation of large portfolios
  of variable annuities.
\newblock {\em Insurance: Mathematics and Economics\/}~{\em 70}, 169--181.

\bibitem[\protect\citeauthoryear{Kou}{Kou}{2002}]{kou2002jump}
Kou, S.~G. (2002).
\newblock A jump-diffusion model for option pricing.
\newblock {\em Management science\/}~{\em 48\/}(8), 1086--1101.

\bibitem[\protect\citeauthoryear{Kou and Wang}{Kou and
  Wang}{2004}]{kou2004option}
Kou, S.~G. and H.~Wang (2004).
\newblock Option pricing under a double exponential jump diffusion model.
\newblock {\em Management science\/}~{\em 50\/}(9), 1178--1192.

\bibitem[\protect\citeauthoryear{Lhabitant}{Lhabitant}{2000}]{lhabitant_2000}
Lhabitant, F.-S. (2000).
\newblock Derivatives in portfolio management: Why beating the market is easy.
\newblock EDHEC Working paper.

\bibitem[\protect\citeauthoryear{Li and Forsyth}{Li and
  Forsyth}{2019}]{li2019data}
Li, Y. and P.~A. Forsyth (2019).
\newblock A data-driven neural network approach to optimal asset allocation for
  target based defined contribution pension plans.
\newblock {\em Insurance: Mathematics and Economics\/}~{\em 86}, 189--204.

\bibitem[\protect\citeauthoryear{Lim and Wong}{Lim and Wong}{2010}]{Lim_2010}
Lim, A. and B.~Wong (2010).
\newblock A benchmark approach to optimal asset allocation for insurers and
  pension funds.
\newblock {\em Insurance: Mathematics and Economics\/}~{\em 46(2)}, 317--327.

\bibitem[\protect\citeauthoryear{Oderda}{Oderda}{2015}]{Oderda_2015}
Oderda, G. (2015).
\newblock Stochastic portfolio theory optimization and the origin of rule based
  investing.
\newblock {\em Quantitative Finance\/}~{\em 15(8)}, 1259--1266.

\bibitem[\protect\citeauthoryear{Patton, Politis, and White}{Patton
  et~al.}{2009}]{patton2009correction}
Patton, A., D.~N. Politis, and H.~White (2009).
\newblock Correction to “automatic block-length selection for the dependent
  bootstrap” by d. politis and h. white.
\newblock {\em Econometric Reviews\/}~{\em 28\/}(4), 372--375.

\bibitem[\protect\citeauthoryear{Politis and Romano}{Politis and
  Romano}{1994}]{politis1994stationary}
Politis, D.~N. and J.~P. Romano (1994).
\newblock The stationary bootstrap.
\newblock {\em Journal of the American Statistical Association\/}~{\em
  89\/}(428), 1303--1313.

\bibitem[\protect\citeauthoryear{Politis and White}{Politis and
  White}{2004}]{politis2004automatic}
Politis, D.~N. and H.~White (2004).
\newblock Automatic block-length selection for the dependent bootstrap.
\newblock {\em Econometric Reviews\/}~{\em 23\/}(1), 53--70.

\bibitem[\protect\citeauthoryear{Samo and Vervuurt}{Samo and
  Vervuurt}{2016}]{samo_2016}
Samo, Y.-L.~K. and A.~Vervuurt (2016).
\newblock Stochastic portfolio theory: a machine learning perspective.
\newblock ArXiv 1605.02654.

\bibitem[\protect\citeauthoryear{Tepla}{Tepla}{2001}]{tepla_2001}
Tepla, L. (2001).
\newblock Optimal investment with minimum performance constraints.
\newblock {\em Journal of Economic Dynamics and Control\/}~{\em 25},
  1629--1645.

\bibitem[\protect\citeauthoryear{van Staden, Dang, and Forsyth}{van Staden
  et~al.}{2019}]{van2019distribution}
van Staden, P.~M., D.-M. Dang, and P.~A. Forsyth (2019).
\newblock On the distribution of terminal wealth under dynamic mean-variance
  (working paper, {University of Waterloo)}.

\end{thebibliography}

\end{document}